\newcommand*{\rom}[1]{\expandafter\@slowromancap\romannumeral #1@}
\newcommand{\cH}{{\mathcal H}}
\newcommand{\bphi}{{\bm \phi}}
\newcommand{\bpsi}{{\bm \psi}}
\newcommand{\lel}{\pl = \pl}
\newcommand{\R}{{\mathbb R}}
\newcommand{\C}{{\mathbb C}}
\newcommand{\ten}{\otimes}
\newcommand{\pl}{\hspace{.1cm}}
\newcommand{\ran}{\rangle}
\newcommand{\lan}{\langle}
\newcommand{\al}{\alpha}
\newcommand{\si}{\sigma}
\newcommand{\la}{\lambda}
\newcommand{\eps}{\varepsilon}
\newcommand{\id}{\iota_{\infty,2}^n}
\newcommand{\E}{{\mathcal E}}
\newcommand{\A}{{\mathcal A}}
\newcommand{\M}{{\mathcal M}}
\newcommand{\Ra}{{\mathcal R}}
\renewcommand{\S}{{\mathcal S}}
\newcommand{\Ha}{{\mathcal H}}
\newcommand{\tr}{\operatorname{tr}}
\newcommand{\N}{{\mathcal N}}
\newcommand{\G}{\mathcal{G}}
\newcommand{\norm}[2]{\parallel \! #1 \! \parallel_{#2}}
\newtheorem{lemma}{Lemma}[section]
\newtheorem{prop}[lemma]{Proposition}
\newtheorem{theorem}[lemma]{Theorem}
\newtheorem{rem}[lemma]{Remark}
\newcommand{\re}{\begin{rem}\rm}
\newcommand{\mar}{\end{rem}}
\newcommand{\bra}[1]{\langle{#1}|}
\newcommand{\ket}[1]{|{#1}\rangle}
\newcommand{\ketbra}[1]{|{#1}\rangle\langle{#1}|}
\newcommand{\qd}{\end{proof}\vspace{0.5ex}}
\newcommand{\prf}{\begin{proof}[\bf Proof:]}
\newcommand{\xspace}{\hbox{\kern-2.5pt}}
\renewcommand{\id}{\operatorname{id}}
\renewcommand{\eps}{\varepsilon}
\renewcommand{\al}{\alpha}
\renewcommand{\E}{{\mathcal{E}}}
\newcommand{\TD}{{D^{\textup{ref}}_\alpha}}
\newcommand{\cM}{{\mathcal M}}
\newcommand{\cN}{{\mathcal N}}
\newcommand{\cR}{{\mathcal R}}
\newcommand{\cX}{\mathcal{X}}
\newcommand{\bD}{{\mathbb D}}
\newcommand{\bN}{{\mathbb N}}
\newcommand{\bR}{{\mathbb R}}
\newcommand{\bZ}{\mathbb{Z}}
\newcommand{\spec}{{\text{spec}}}
\renewcommand{\id}{{\text{id}}}
\renewcommand{\id} {\operatorname{id}}
\newtheorem{defi}[lemma]{Definition}
\begin{document}
\title{Asymptotic Equipartition Theorems in von Neumann algebras}

\author[O.~Fawzi]{Omar Fawzi}
\address{Univ Lyon, Inria, ENS Lyon, UCBL, LIP, F-69342, Lyon Cedex 07, France}
\email{omar.fawzi@ens-lyon.fr}

\author[L.~Gao]{Li  Gao}
\address{University of Houston, Department of Mathematics,
                  Houston, TX 77204-3008}
\email{lgao12@uh.edu}

\author[M.~Rahaman]{Mizanur Rahaman*}

\address{Univ Lyon, Inria, ENS Lyon, UCBL, LIP, F-69342, Lyon Cedex 07, France}
\email{mizanur.rahaman@ens-lyon.fr}
\thanks{* corresponding author.}
\maketitle
\begin{abstract}
 The Asymptotic Equipartition Property (AEP) in information theory shows that independent and identically distributed (i.i.d.) states behave as uniform states on its typical subspace. In particular, such a phenomenon can be expressed as that asymptotically, the min and the max relative entropy under appropriate smoothing  coincide with the relative entropy. In this paper, we generalize several such equipartition properties to states on general von Neumann algebras.

First, we show that the smooth max relative entropy of i.i.d. states on a von Neumann algebra has an asymptotic rate given by the quantum relative entropy.
In fact, our AEP not only applies to states, but also to quantum channels with appropriate restrictions. In addition, going beyond the i.i.d. assumption, we show that for states that are produced by a sequential process of quantum channels, the smooth max relative entropy
can be upper bounded by the sum of appropriate channel relative entropies. Our main technical contributions are to extend to the context of general von Neumann algebras a chain rule for quantum channels, as well as an additivity result for the channel relative entropy with a replacer channel.
\end{abstract}

\section{Introduction}

The Asymptotic Equipartition Property (AEP) is a fundamental result in information theory. In the finite-dimensional classical setting, the simplest version states that the distribution $P^{\otimes n}$ of independent and identically distributed (i.i.d.) samples of a probability distribution $P$ is close to having properties of a uniform distribution on its typical subset of size approximately $2^{n H(P)}$, where $H(P)$ denotes the Shannon entropy of $P$ (c.f. \cite[Chapter 3]{thoma-cover}). It is convenient to express the AEP in terms of entropic quantities. Recall that the $\eps$-smooth min and max divergences between a probability distribution $P$ and a nonnegative function $Q$ on the finite set $\cX$ are defined as
\begin{align*}
D^{\eps}_{\min}(P \| Q) &= \sup_{P' \sim^{\eps} P} D_{\min}(P' \| Q) \qquad \text{ with } \qquad D_{\min}(P \| Q) = -\log \sum_{x \in \cX : P(x) > 0} Q(x) \\
D^{\eps}_{\max}(P \| Q) &= \inf_{P' \sim^{\eps} P} D_{\max}(P' \| Q) \qquad \text{ with } \qquad D_{\max}(P \| Q) = \log \max_{x \in \cX} \frac{P(x)}{Q(x)} \ ,
\end{align*}
where we refer to Section~\ref{vN} below for the precise definition of the distance used to define $P' \sim^{\eps} P$. With this notation, $2^{-D_{\min}^{\eps}(P^{\otimes n} \| Q^{\otimes n})}$ is the approximate size of the support of $P^{\otimes n}$ under the measure $Q^{\otimes n}$, and $2^{D_{\max}^{\eps}(P^{\otimes n} \| Q^{\otimes n})}$ is the approximate maximum value taken by the function $\frac{dP^n}{dQ^n}$. Recall that the relative entropy is defined as $D(P \| Q) = \sum_{x \in \cX} P(x) \log \frac{P(x)}{Q(x)}$, and the Shannon entropy is $H(P)=-D(P||1)$ with $1$ being the constant function on $\cX$. Then the AEP discussed above can be stated as follows: for any $\eps \in (0,1)$,
\begin{align}
D_{\min}^{\eps}(P^{\otimes n} \| Q^{\otimes n}) &= n D(P \| Q) + O(\sqrt{n}) \label{eq:dmin_aep} \\
D_{\max}^{\eps}(P^{\otimes n} \| Q^{\otimes n}) &= n D(P \| Q) + O(\sqrt{n}) \label{eq:dmax_aep} \ .
\end{align}
where $O(\cdot)$ notation hides a dependence on $P,Q$ and $\eps$.

In quantum information theory, Schumacher's compression theorem~\cite{schumacher1995quantum} can be interpreted as an AEP for
finite-dimensional quantum system, where the Shannon entropy is replaced by the von Neumann entropy. The AEP has also been established in the classical and quantum case, with the second argument $Q$ in~\eqref{eq:dmax_aep} and~\eqref{eq:dmin_aep} not necessarily being the constant function but an arbitrary nonnegative function~\cite{renner2008security,tomamichel,furrer}. In fact, for an appropriately chosen $Q$,  the quantities $D^{\eps}_{\max}(P \| Q)$ and $D^{\eps}_{\min}(P \| Q)$ are directly related to Renner's smooth min and max conditional entropies~\cite{renner2008security}. We also note that the constant factors in the $O(\sqrt{n})$ term in~\eqref{eq:dmax_aep} and~\eqref{eq:dmin_aep} are well understood: this was established by~\cite{toma-haya, ke-Li-1} and in the infinite-dimensional case in~\cite{datta2016second,wilde-2019,ke-Li}.

The AEP has applications in many areas of quantum information. A notable example, known as the Entropy Accumulation Theorem, has been used to obtain optimal security bounds for general attacks in the context of device-independent cryptography
~\cite{DIQKD, DFR}. Recently there has been significant interests in developing various information-theoretic concepts in finite-dimensional quantum theory to general infinite dimensional quantum systems, including the context of von Neumann algebras modelling quantum field theory and quantum gravity. Indeed, entropic quantities and related entanglement measures, approximate recovery maps and error correction have recently been studied in the abstract von Neumann algebra setting~\cite{HS, FHSW, FLO,jencova-1,hiai-book,  hiai-mosonyi, berta-smooth}.


\subsection{Our results}


The starting point of this paper is to establish a general AEP that encompasses all the setups discussed above, as well as quantum systems in the algebraic quantum field theory \cite{petz-ohaya}. One particular motivation is to include a general classical-quantum setting, for both discrete and continuous (classical and quantum) variables. The framework for doing so is to consider quantum states on a von Neumann algebra. A von Neumann algebra $\cM$ is a weak$^*$-closed $*$-subalgebra of bounded operators on a Hilbert space $\Ha$ and a state $\rho$ is a positive and unital ($\rho(1)=1$) element in $\M_*$, where the predual $\M_*$ is the set of all normal linear functionals $\M \to \C$. For example, for finite-dimensional quantum systems we let
$\M = B(\Ha)$ for a finite-dimensional Hilbert space $\Ha$ and a quantum state can be described by $\rho(x) = \tr(\hat{\rho} x)$ for all $x \in B(\Ha)$, where $\hat{\rho}$ is a density operator. Another example is a classical system modelled by a measure space $(\cX, \mu)$, then $\M$ is the commutative von Neumann algebra $L_{\infty}(\cX, \mu)$ of functions $f : \cX \to \R$ that are bounded almost everywhere. In this setting, a state can be described by $\rho(f) = \int \hat{\rho}(x) f(x) d\mu(x)$ where $\hat{\rho}$ is a density function.

In the setting of  general  von Neumann algebras, the AEP for the quantum hypothesis relative entropy, which can be interpreted as a generalization of~\eqref{eq:dmin_aep}, was established in~\cite{ke-Li}. Our focus in this paper is on the max relative entropy, i.e., statements generalizing~\eqref{eq:dmax_aep}. Our first result is to establish the AEP for the max relative entropy for general von Neumann algebras. We refer to Section~\ref{vN} for precise definitions of the quantities appearing in the statement below.


\begin{theorem}[$D_{\max}$ AEP for states]
\label{eq:aep-states}
Let $\cM$ be a von Neumann algebra, $\rho$ be a normal state on $\cM$ and $\sigma$ a normal positive linear functional. Assume that $D(\rho \| \sigma) < \infty, V(\rho \| \sigma) < \infty$, and $T(\rho \| \sigma) < \infty$ are all finite. Then for any $\eps \in (0,1)$ and $n \geq 1$,
\begin{align}
\label{eq:aep-vn}
\frac{1}{n}D_{\max}^{\sqrt{\eps}}(\rho^{\otimes n} \| \sigma^{\otimes n}) =  D( \rho \| \sigma) -\sqrt{\frac{V(\rho \| \sigma)}{n}}\Phi^{-1}(\eps) + O(\frac{\log n}{n}) \ ,
\end{align}
where the $O(\cdot)$ hides constants that only depend on $\rho$ and $\sigma$ and $\eps$.
\end{theorem}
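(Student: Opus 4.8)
The plan is to establish the second-order asymptotic expansion in \eqref{eq:aep-vn} by reducing the von Neumann algebra case to a finite-dimensional (or semifinite) setting via Haagerup-type approximation, and then applying the known finite-dimensional second-order AEP for $D_{\max}$. The first step is to observe that the smooth max relative entropy $D_{\max}^{\eps}$ in the general von Neumann algebra setting can be expressed through spatial derivatives / relative modular operators, and that $D_{\max}(\rho \| \sigma) = \log \|\, d\rho/d\sigma \,\|$ in an appropriate sense; here the hypotheses $D(\rho\|\sigma)<\infty$, $V(\rho\|\sigma)<\infty$, $T(\rho\|\sigma)<\infty$ guarantee that the relevant spectral measure of $\log(d\rho/d\sigma)$ has finite first, second, and third centered moments. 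The tensor power $\rho^{\otimes n}$ then corresponds to the $n$-fold convolution of this spectral distribution, which is exactly the structure needed for a Berry--Esseen-type argument.

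The core technical step is a \emph{reduction to matrices}. I would use the fact that for states with $D(\rho\|\sigma)<\infty$ there is an increasing net of finite-dimensional (or at least tractable, e.g. injective/semifinite) subalgebras, together with conditional expectations, such that the restricted states converge in relative entropy and the relevant variance and third-moment quantities also converge; monotonicity of $D_{\max}$ (and of the smooth version) under conditional expectations gives one inequality directly, and a matching approximation from above via the structure of the relative modular operator gives the other. Concretely, I expect to approximate $\sigma$ by $\sigma$-dominated functionals with bounded $\log(d\rho/d\sigma)$ (spectral truncation), control the error incurred in $D$, $V$, $T$, and in the smoothing radius, and on the truncated algebra invoke the commutative Berry--Esseen theorem applied to the i.i.d.\ sum $\sum_{i=1}^n \log(d\rho/d\sigma)(\omega_i)$ — this is essentially the classical computation behind \eqref{eq:dmax_aep} with an arbitrary second argument, as in \cite{toma-haya,ke-Li-1,cambyse,wilde-2019}.

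Assembling the two directions: for the \textbf{converse} (lower bound on $D_{\max}^{\sqrt{\eps}}$), I would take any $\rho'$ within smoothing distance $\sqrt{\eps}$ of $\rho^{\otimes n}$, push everything down to a large finite-dimensional subalgebra where the classical lower bound applies, and use that the data-processing only decreases $D_{\max}$ while the smoothing ball only shrinks, so the finite-dimensional bound $D_{\max}^{\sqrt\eps} \ge nD - \sqrt{nV}\,\Phi^{-1}(\eps) - O(\log n)$ transfers. For the \textbf{achievability} (upper bound), I would explicitly construct a good smoothed state $\rho'$: diagonalize (relative to $\sigma^{\otimes n}$) using the spectral projections of the $n$-fold modular operator, keep the spectral window $[nD - \sqrt{nV}\Phi^{-1}(\eps) - c\log n,\ nD - \sqrt{nV}\Phi^{-1}(\eps) + c\log n]$ where the Berry--Esseen estimate guarantees the discarded mass is below $\sqrt\eps$ (hence $\rho'$ is $\sqrt\eps$-close to $\rho^{\otimes n}$), and bound $D_{\max}(\rho'\|\sigma^{\otimes n})$ by the top of the window. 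The passage from the purified/Bures distance used to define $\sim^{\eps}$ to the trace-distance bound on the discarded mass is where the square root in $\sqrt\eps$ enters, via the Fuchs--van de Graaf inequality in the von Neumann algebra formulation.

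The main obstacle I anticipate is \textbf{making the truncation and finite-dimensional approximation uniform in $n$}: the spectral truncation level and the dimension of the approximating subalgebra must be chosen independently of $n$ (or growing slowly enough) so that the incurred errors in $D$, $V$, $T$ and in the smoothing radius are absorbed into the $O(\log n / n)$ term after dividing by $n$. This requires a careful quantitative version of the statement that $D(\rho_{\text{trunc}}\|\sigma) \to D(\rho\|\sigma)$, $V(\rho_{\text{trunc}}\|\sigma)\to V(\rho\|\sigma)$, etc., with rates controlled by the tail of the spectral measure of $\log(d\rho/d\sigma)$ — which is precisely what finiteness of $T(\rho\|\sigma)$ (the third moment) is there to provide. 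A secondary technical point is ensuring the relative modular operator machinery for $\rho^{\otimes n}$ versus $\sigma^{\otimes n}$ tensorizes correctly on the von Neumann algebraic tensor product $\cM^{\bar\otimes n}$, which follows from standard properties of spatial derivatives under tensor products but needs to be stated carefully since $\sigma$ is only a positive functional, not necessarily a state.
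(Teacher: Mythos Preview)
Your route diverges from the paper's and contains a structural gap. The paper does not attempt a direct Berry--Esseen argument on the spectral measure of $\log\Delta(\rho,\sigma)$, nor does it reduce the AEP itself to matrices. Instead it proves a \emph{one-shot} two-sided comparison between the smooth max divergence and the hypothesis-testing divergence,
\[
D_H^{1-\eps-\delta}(\rho\|\sigma) - \log\tfrac{4(1-\eps)}{\delta^2} \;\le\; D_{\max}^{\sqrt{\eps}}(\rho\|\sigma) \;\le\; D_H^{1-\eps}(\rho\|\sigma) + \log\tfrac{1}{1-\eps},
\]
valid for arbitrary states on a von Neumann algebra (proved first for finite tracial $\M$ by passing to the commutative subalgebra generated by a fixed test operator $X$ and using the trace-preserving conditional expectation onto it, then extended to general $\M$ via Haagerup reduction and lower semicontinuity). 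Applying this inequality to the pair $(\rho^{\otimes n},\sigma^{\otimes n})$ and invoking the second-order AEP for $D_H^\eps$ already established by Pautrat--Wang \cite{ke-Li} in full von Neumann generality, together with $\Phi^{-1}(1-\eps)=-\Phi^{-1}(\eps)$ and the choice $\delta=n^{-1/2}$, yields \eqref{eq:aep-vn} immediately. No uniformity-in-$n$ issue ever arises, because the bridge between $D_{\max}^{\sqrt\eps}$ and $D_H^{1-\eps}$ is non-asymptotic with universal additive constants.

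The gap in your plan is the step ``apply the known finite-dimensional second-order AEP for $D_{\max}$'': Haagerup reduction produces \emph{finite} (tracial) von Neumann algebras, not matrix algebras, and the references you cite (\cite{toma-haya,ke-Li-1} for matrices, \cite{cambyse,wilde-2019} for $B(\cH)$) do not cover that generality---so there is nothing ready to invoke. Approximating further by matrix subalgebras forces exactly the uniform-in-$n$ control of $D,V,T$ along the net that you flag as the main obstacle, and finiteness of the third moment alone does not obviously deliver the needed rate. If instead you try to run Berry--Esseen directly in the tracial setting, you are essentially re-deriving Pautrat--Wang for $D_H$ and then still need the $D_H\leftrightarrow D_{\max}$ link, which is precisely the paper's contribution. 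Your achievability construction is also delicate as stated: the spectral projections of $\Delta(\rho,\sigma)$ live in $B(L_2(\M))$, not in $\M$, so outside the tracial case they do not directly cut out a sub-state in $\M_*$; the paper circumvents this by working inside a commutative subalgebra of $\M$ where the relevant projections are honest elements of the algebra.
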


To show the statement above, we extend a lemma of~\cite{Anshu,wilde-2019} relating $D_{\max}^{\eps}$ with the hypothesis testing relative entropy $D_H^\eps$ and combine it with the AEP of $D_H^\eps$ established by  Pautrat and Wang in \cite{ke-Li}. This, in particular, recovers the AEP on $B(\cH)$ for separable Hilbert spaces $\cH$ in~\cite{wilde-2019}. Theorem \ref{eq:aep-states} applies to more general cases: for example, continuous classical-quantum system, whose corresponding von Neumann algebra is $L_{\infty}(\cX, \mu)\ten B(\cH)$. It also applies to Type III von Neumann algebras that do not admit a nontrivial trace functional: for instance, all the local algebras of open bounded regions in relativistic quantum field theory are of Type III \cite{yngvason2005role}.

In fact, we obtain a significantly more general AEP for quantum channels.
Recall that given two von Neumann algebras $\cM$ and $\cN$, a quantum channel $\Phi:\cM_*\to \cN_*$ is the pre-adjoint of a normal unital completely positive map $\Phi ^*:\cN\to \cM$, and the channel divergence for two channels $\Phi$ and $\Psi$ is defined as
\[ D(\Phi \| \Psi)=\sup_{\rho} D(\Phi(\rho)\|\Psi(\rho))\pl,  D^{\text{ref}}(\Phi \| \Psi)=\sup_{\cR}D(\id_{\cR}\ten \Phi \| \id_{\cR}\ten\Psi )\]
where the first supremum is over all quantum state $\rho\in \cM_*$ and the second supremum is over identity channels $\id_{\cR}:\cR_{*}\to \cR_{*}$ of all von Neumann algebras $\cR$. The same definitions apply to other divergences.

\begin{theorem} [$D_{\max}$ AEP for channels]
\label{thm:aep-channel}
Let $\cM, \cN$ be two von Neumann algebras, $\Phi:\cM_*\to \cN_*$ be a quantum channel and $\Psi_{\sigma}$ be a replacer channel with output $\sigma$, a state on $\N$ (i.e. $\Psi_{\sigma}(\rho)=\rho(1)\sigma$). Assume that for some $\alpha > 1$, we have $\TD(\Phi\|\Psi_\sigma)< \infty$, where $D_{\alpha}$ is the sandwiched R\'enyi relative entropy. Then for any $\eps \in (0,1)$ and $n \geq 1$,
\begin{align}
\label{eq:aep-channel}
D_{\max}^{\eps, \textup{ref}}(\Phi^{\otimes n} \| \Psi_{\sigma}^{\otimes n}) = n D^{\textup{ref}}( \Phi \| \Psi_{
\sigma}) + O(\sqrt{n}) \ ,
\end{align}
where the $O(\cdot)$ hides constants that only depend on $\Phi$, $\sigma$ and $\eps$.
\end{theorem}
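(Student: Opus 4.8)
The plan is to prove the matching upper and lower bounds in~\eqref{eq:aep-channel} separately, in each case reducing the channel quantities to states via the replacer structure and then invoking, respectively, additivity of the sandwiched R\'enyi channel divergence and the state AEP (Theorem~\ref{eq:aep-states}). The key observation is that $(\id_{\cR}\ten\Psi_{\sigma})(\rho)=\rho_{\cR}\ten\sigma$ for every input $\rho$ with marginal $\rho_{\cR}$, so that for any divergence $\mathbb{D}$,
\[
\mathbb{D}^{\textup{ref}}(\Phi\|\Psi_{\sigma})=\sup_{\cR,\,\rho}\,\mathbb{D}\big((\id_{\cR}\ten\Phi)(\rho)\,\big\|\,\rho_{\cR}\ten\sigma\big),
\]
and, since $\Psi_{\sigma}^{\ten n}$ is the replacer channel with output $\sigma^{\ten n}$, the same holds with $\Phi^{\ten n}$ in place of $\Phi$. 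In particular $D^{\textup{ref}}(\Phi\|\Psi_{\sigma})\le\TD(\Phi\|\Psi_{\sigma})<\infty$ since $D\le D_\alpha$, so the right-hand side of~\eqref{eq:aep-channel} is finite.

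\textbf{Lower bound.} Fix $n$ and pick $\cR$, $\rho$ with $D(\omega\|\tau)\ge D^{\textup{ref}}(\Phi\|\Psi_{\sigma})-1/\sqrt n$, where $\omega:=(\id_{\cR}\ten\Phi)(\rho)$ and $\tau:=\rho_{\cR}\ten\sigma$. Regrouping tensor legs, $\rho^{\ten n}$ is an input to $\id_{\cR^{\ten n}}\ten\Phi^{\ten n}$ with output $\omega^{\ten n}$ and second argument $\tau^{\ten n}$, so $D_{\max}^{\eps,\textup{ref}}(\Phi^{\ten n}\|\Psi_{\sigma}^{\ten n})\ge D_{\max}^{\eps}(\omega^{\ten n}\|\tau^{\ten n})$. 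Since $D_\alpha(\omega\|\tau)\le\TD(\Phi\|\Psi_{\sigma})<\infty$, the pair $(\omega,\tau)$ satisfies the finiteness hypotheses $D(\omega\|\tau),V(\omega\|\tau),T(\omega\|\tau)<\infty$: the finite $\alpha$-moment controls the right tail of $\log\frac{d\omega}{d\tau}$, and the left tail is automatically integrable because $\tau$ is a state; moreover $V(\omega\|\tau)$ is bounded by a constant depending only on $\TD(\Phi\|\Psi_{\sigma})$, uniformly over $\rho$. The lower-bound half of Theorem~\ref{eq:aep-states}, with error term controlled by this uniform variance bound, then gives $D_{\max}^{\eps}(\omega^{\ten n}\|\tau^{\ten n})\ge nD(\omega\|\tau)-O(\sqrt n)\ge nD^{\textup{ref}}(\Phi\|\Psi_{\sigma})-O(\sqrt n)$.

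\textbf{Upper bound.} Fix $n$, a reference algebra $\cR$ and an input $\rho$ to $\id_{\cR}\ten\Phi^{\ten n}$; set $\omega_n:=(\id_{\cR}\ten\Phi^{\ten n})(\rho)$, $\tau_n:=\rho_{\cR}\ten\sigma^{\ten n}$. I would chain three inequalities. First, a smoothed comparison valid on any von Neumann algebra: for $\beta>1$,
\[
D_{\max}^{\eps}(\omega_n\|\tau_n)\ \le\ D_{\beta}(\omega_n\|\tau_n)\ +\ \frac{c(\eps)}{\beta-1},
\]
obtained by smoothing $\omega_n$ onto the spectral subspace of the relative modular operator where $\frac{d\omega_n}{d\tau_n}\le 2^{\lambda}$ and bounding the discarded mass by Markov applied to the $\beta$-moment (equivalently: combine the $D_{\max}^{\eps}$–$D_H^{\eps}$ lemma extended from~\cite{Anshu,wilde-2019} with $D_H^{\delta}\le D_\beta+\frac{\beta}{\beta-1}\log\frac1\delta$). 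Second, $D_\beta(\omega_n\|\tau_n)\le D_\beta^{\textup{ref}}(\Phi^{\ten n}\|\Psi_{\sigma}^{\ten n})=n\,D_\beta^{\textup{ref}}(\Phi\|\Psi_{\sigma})$, the equality being additivity of the sandwiched R\'enyi channel divergence against a replacer channel. Third, right-continuity of $\beta\mapsto D_\beta^{\textup{ref}}(\Phi\|\Psi_{\sigma})$ at $1$ with a linear rate: $D_\beta^{\textup{ref}}(\Phi\|\Psi_{\sigma})\le D^{\textup{ref}}(\Phi\|\Psi_{\sigma})+C(\beta-1)$ for $1<\beta\le\alpha$, with $C$ depending only on $\TD(\Phi\|\Psi_{\sigma})$; this holds because for each $(\omega,\tau)$ the function $t\mapsto\log\,\omega\big((\frac{d\omega}{d\tau})^{t}\big)$ is finite — hence real-analytic — on an interval around $0$ (Hölder gives finiteness for small negative $t$ since $\tau$ is a state) on which it is bounded in terms of $\TD(\Phi\|\Psi_{\sigma})$, so its second derivative, and therefore $D_\beta(\omega\|\tau)-D(\omega\|\tau)$, is $O(\beta-1)$ uniformly in $\rho$; one then takes the supremum over $\cR,\rho$. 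Combining the three bounds and optimizing with $\beta=1+\sqrt{c(\eps)/(Cn)}$ (valid once $n\ge c(\eps)/(C(\alpha-1)^2)$, the finitely many smaller $n$ being absorbed into $O(\sqrt n)$) gives $D_{\max}^{\eps}(\omega_n\|\tau_n)\le nD^{\textup{ref}}(\Phi\|\Psi_{\sigma})+O(\sqrt n)$ uniformly in $\cR,\rho$; taking the supremum yields the upper bound, and with the lower bound this proves~\eqref{eq:aep-channel}.

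\textbf{Main obstacle.} The step I expect to be the real work — and would isolate as a separate proposition — is the additivity $D_\beta^{\textup{ref}}(\Phi^{\ten n}\|\Psi_{\sigma}^{\ten n})=nD_\beta^{\textup{ref}}(\Phi\|\Psi_{\sigma})$. Superadditivity is immediate by tensoring reference systems, but subadditivity requires a chain rule for the sandwiched R\'enyi divergence of channels in the absence of a trace, so one must work with Haagerup (or Kosaki) $L_p$-spaces and complex interpolation rather than with density operators; this is precisely one of the paper's announced technical contributions. The remaining ingredients — the $D_{\max}^{\eps}$–$D_\beta$ comparison, the analytic continuity estimate, and the state AEP — are either classical or already established above.
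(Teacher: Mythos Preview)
Your approach is essentially the same as the paper's: the upper bound combines the inequality $D_{\max}^\eps\le D_\beta+\frac{1}{\beta-1}\log(1-\sqrt{1-\eps^2})^{-1}$ (Lemma~\ref{lemma-1}\,iii)), the additivity $D_\beta^{\textup{ref}}(\Phi^{\ten n}\|\Psi_\sigma^{\ten n})=nD_\beta^{\textup{ref}}(\Phi\|\Psi_\sigma)$ (Theorem~\ref{rmv-reg}), and continuity of $\beta\mapsto D_\beta^{\textup{ref}}$ at $1$, then optimizes $\beta=1+O(1/\sqrt n)$; the lower bound reduces to the state AEP (Theorem~\ref{eq:aep-states}) via a near-optimal input, and you correctly single out the additivity as the main technical ingredient (proved in the paper via amortized divergences and vector-valued $L_p$-spaces, with Haagerup reduction for the Type~III case). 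You are more explicit than the paper about the uniformity needed in both places --- that $D_\beta^{\textup{ref}}-D^{\textup{ref}}\le C(\beta-1)$ with $C$ depending only on $D_\alpha^{\textup{ref}}(\Phi\|\Psi_\sigma)$, and that $V(\omega\|\tau)$ (and $T$) are bounded uniformly over near-optimal inputs --- which the paper treats somewhat tersely; these can indeed be obtained by the moment-generating-function arguments you sketch (cf.\ the estimates in Section~6 of the paper).
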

Note that in the special case where $\Phi$ is a replacer channel with output $\rho$, the channel divergences become state divergences and we recover the statement of Theorem~\ref{eq:aep-states}, though without the precise second order term.
Theorem~\ref{thm:aep-channel} for finite-dimensional quantum system was established in \cite{CMW}. Our proof here makes use of Pisier's non-commutative vector valued $L_p$-space theory \cite{pisier}. As a key ingredient, we obtain that the channel divergence (with reference) of two channels is additive under tensor products when one of the channels is a replacer channel---a fact discovered in \cite{CMW, amortized} in finite dimensions. Finally, for the proof of Theorem \ref{thm:aep-channel}, we show that the max-relative entropy can be upper bounded by the R\'enyi relative entropy in general von Neumann algebras. This fact was proved for finite-dimensional systems in~\cite{datta-renner,tomamichel,datta-hyp}.

Our third result is motivated by recent generalizations of the AEP, called entropy accumulation theorems, that relax the i.i.d. assumption~\cite{DFR,gen-EAT}. A crucial ingredient for entropy accumulation is a chain rule for quantum channels that bounds the increase in relative entropy that can occur when applying different channels to two states. We establish the following chain rule for general von Neumann algebras.
 \begin{theorem}[Chain rule for quantum channels] \label{thm:chain rule1}
Let $\M,\N$ be two von Neumann algebras and $\Phi,\Psi:\M_*\to \N_*$ be two quantum channels. Then for any $\alpha\in (1,\infty]$ and $\rho,\sigma\in \S(\M)$,
\begin{align*}{D}_{\al}(\Phi(\rho) \| \Psi(\si))\leq {D}_{\al}(\rho \| \sigma)+{D}_{\al}^{\textup{reg}}(\Phi\|\Psi),\end{align*}
where ${D}_{\al}$ is the sandwiched R\'enyi relative entropy and

\[ {D}^{\textup{reg}}_{\alpha}(\Phi \| \Psi):=\lim_{n\to\infty}\frac{1}{n}{D}_{\alpha}(\Phi^{\ten n}\| \Psi^{\ten n})\pl.\]
\end{theorem}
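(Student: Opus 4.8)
My plan is to reduce the chain rule to the additivity of the regularized channel divergence together with a single-shot "data processing with a tensored copy" trick, following the finite-dimensional argument of Fang–Fawzi–Wilde (the amortized/chain-rule literature) but carried out with the sandwiched Rényi divergence on von Neumann algebras as developed in the earlier sections of the paper. The key single-shot inequality I want is: for channels $\Phi,\Psi:\M_*\to\N_*$ and states $\rho,\sigma\in\S(\M)$,
\begin{align*}
D_\al(\Phi(\rho)\|\Psi(\sigma)) \;\le\; D_\al(\rho\|\sigma) \;+\; D_\al(\Phi\otimes\id_{\M}\,\|\,\Psi\otimes\id_{\M}),
\end{align*}
or more precisely a version where the identity is on a purifying/reference algebra. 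The idea is standard: embed $\rho$ (and $\sigma$) into a joint state on $\M\ten\cR$ that is "maximally correlated" in a suitable sense — in the trace case one uses $\omega_\rho=(\rho^{1/2}\ten 1)\,\Omega$ with $\Omega$ a maximally entangled vector, but in a general (possibly Type III) von Neumann algebra one must instead use the GNS/standard form: take the standard representation of $\M$, let $\xi_\rho,\xi_\sigma$ be the vector representatives of $\rho,\sigma$ in the natural cone, and use the commutant $\M'\cong\M^{\mathrm{op}}$ as the reference system. Applying $\Phi$ on the $\M$-leg and then the data-processing inequality for $D_\al$ (monotonicity under the channel that "reads off" the first marginal, i.e. restriction to $\N$) recovers $D_\al(\Phi(\rho)\|\Psi(\sigma))$ on the left, while on the right the tensored state contributes exactly $D_\al(\rho\|\sigma)$ plus the channel term $D_\al(\Phi\ten\id_{\M'}\|\Psi\ten\id_{\M'})\le D_\al^{\mathrm{ref}}(\Phi\|\Psi)$.

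Once the single-shot inequality is in place, I would tensorize: apply it to $\Phi^{\ten n},\Psi^{\ten n}$ in place of $\Phi,\Psi$ and to $\rho^{\ten n},\sigma^{\ten n}$, using additivity of $D_\al$ on i.i.d. states ($D_\al(\rho^{\ten n}\|\sigma^{\ten n})=nD_\al(\rho\|\sigma)$, which holds for the sandwiched Rényi divergence in von Neumann algebras — this is part of the machinery the paper already invokes via Pisier's vector-valued $L_p$ theory and Haagerup $L_p$-spaces). This yields
\begin{align*}
D_\al(\Phi^{\ten n}(\rho^{\ten n})\|\Psi^{\ten n}(\sigma^{\ten n})) \le n\,D_\al(\rho\|\sigma) + D_\al^{\mathrm{ref}}\big(\Phi^{\ten n}\,\big\|\,\Psi^{\ten n}\big).
\end{align*}
But I actually want $D_\al^{\mathrm{reg}}$, the regularization of the plain (reference-free) channel divergence, on the right. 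Here I would use the additivity result highlighted in the introduction — that for a replacer channel $\Psi_\sigma$ the channel divergence with reference is additive under tensor products — more precisely that $D_\al^{\mathrm{ref}}(\Phi^{\ten n}\|\Psi_\sigma^{\ten n}) = n\,D_\al^{\mathrm{ref}}(\Phi\|\Psi_\sigma)$ when $\Psi$ is a replacer channel, and the general fact $D_\al(\Phi\|\Psi)\le D_\al^{\mathrm{ref}}(\Phi\|\Psi)\le D_\al^{\mathrm{reg}}(\Phi\|\Psi)$ plus a direct sub/superadditivity argument showing that the reference-free divergence already regularizes to $D_\al^{\mathrm{reg}}$. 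Dividing by $n$, optimizing over $\rho$, and passing to the limit $n\to\infty$ then gives $D_\al(\Phi(\rho)\|\Psi(\sigma))\le D_\al(\rho\|\sigma)+D_\al^{\mathrm{reg}}(\Phi\|\Psi)$ after noting the left side does not depend on $n$ and the state-dependent term on the right is $O(1/n)\cdot 0$ — i.e. the $n\,D_\al(\rho\|\sigma)$ got absorbed because we should instead single-shot bound $D_\al(\Phi(\rho)\|\Psi(\sigma))$ using $\rho^{\ten n}$ cleverly; the cleanest route is the amortization identity $D_\al^{\mathrm{reg}}(\Phi\|\Psi)=\sup_n\frac1n\sup_{\rho,\sigma}\big(D_\al(\Phi^{\ten n}(\rho)\|\Psi^{\ten n}(\sigma))-D_\al(\rho\|\sigma)\big)$, which I would establish in the von Neumann setting and which immediately yields the chain rule at $n=1$.

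The main obstacle, and the place where the von Neumann generality really bites, is the single-shot data-processing step on a possibly Type III algebra: there is no trace, hence no literal maximally entangled state and no matrix transpose trick, so the "reference system" must be built from the standard form and the modular theory of $\M$, and one has to verify that $D_\al$ of the resulting standard-cone vector states reproduces $D_\al(\rho\|\sigma)$ and is compatible with the Kosaki/Haagerup $L_p$ formulation of the sandwiched Rényi divergence used elsewhere in the paper. A secondary technical point is justifying the additivity/regularization identities for channel divergences with the supremum over \emph{all} von Neumann algebra references $\cR$ (rather than finite-dimensional ones), which requires an approximation/ultraweak-density argument to reduce to tractable reference systems. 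The finiteness hypotheses ($D_\al$-type quantities finite, via $\TD(\Phi\|\Psi)<\infty$ in the theorems that use this) ensure all the limits and suprema are well-defined and the manipulations with $L_p$-norms stay within the allowed range $\al\in(1,\infty]$.
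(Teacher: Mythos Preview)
Your proposal has a genuine gap at its core. The ``single-shot inequality''
\[
D_\al(\Phi(\rho)\|\Psi(\sigma)) \le D_\al(\rho\|\sigma) + D_\al^{\textup{ref}}(\Phi\|\Psi)
\]
does not follow from the reference-system/purification trick you sketch. If you take vector representatives $\xi_\rho,\xi_\sigma$ in the standard cone and view $\M'$ as the reference, the resulting bipartite states are pure (or essentially so), and the divergence between the two bipartite states does \emph{not} decompose as $D_\al(\rho\|\sigma)$ plus a channel term; there is no mechanism that isolates $D_\al(\rho\|\sigma)$ from such a construction. You seem to sense this, since midway you abandon the tensorization and retreat to ``the cleanest route is the amortization identity $D_\al^{\textup{reg}}(\Phi\|\Psi)=\sup_n\frac1n\sup_{\rho,\sigma}[\cdots]$''. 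But that identity is equivalent to the chain rule itself: the direction $\ge$ is trivial, and the direction $\le$ \emph{is} the chain rule. So invoking it is circular. Relatedly, the additivity result you cite from the introduction holds only when $\Psi$ is a replacer channel, and cannot be used for general $\Psi$.

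The paper's proof is entirely different and does not go through amortization or a reference-system trick. It follows the Berta--Tomamichel strategy: (i) for $\sigma$ of finite spectrum, use Matsumoto's optimal preparation channel to bound $D_\al(\Phi(\rho)\|\Psi(\sigma))$ by $\hat D_\al(\rho\|\sigma)+D_\al(\Phi\|\Psi)$, then pinch with $\E_\sigma$ so that $\hat D_\al$ collapses to the measured divergence $D_\al^M$, paying a $\log|\mathrm{spec}(\sigma)|$ penalty; (ii) regularize, using an extension of Mosonyi--Ogawa to semifinite algebras (proved in the paper) that $\frac1n D_\al^M(\rho^{\ten n}\|\sigma^{\ten n})\to D_\al(\rho\|\sigma)$; (iii) remove the finite-spectrum assumption on $\sigma$ via the Hayashi--Tomamichel spectrum truncation; (iv) pass from semifinite to general von Neumann algebras by Haagerup reduction. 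None of these ingredients appears in your outline, and the main technical content lies precisely in steps (i)--(iii), which have no analogue in the purification approach.
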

This chain rule was first established by \cite{fawzi2021defining} and recently a short proof was found by \cite{berta-toma}. Our approach here is to use the spectrum truncation technique from Hayashi and Tomamichel \cite{toma-haya} to first approximate the semi-finite case (when the algebra admits a trace). For Type \text{III} algebras, we apply Haagerup reduction \cite{haagerup2010reduction}, which is a method to reduce results of Type \text{III} algebras to the case of finite von Neumann algebras. Our argument shows that Haagerup reduction is well compatible with the entropic inequality. With the chain rule above, we can easily obtain a relative entropy accumulation result in general von Neumann algebras analogus to the one obtained in~\cite{gen-EAT} in finite dimensions.
\begin{theorem} [Relative EAT] \label{thm:REAT}
Let $\eps \in (0,1), n \geq 1$. Let $\M_1,\cdots,\M_n$ and $\M_{n+1}$ be a sequence of von Neumann algebras and $\Phi_i,\Psi_i:(\M_{i})_* \to (\M_{{i+1}})_*, 1\le i\le n$ be two sequences of quantum channels. Then
\[D_{\max}^\eps (\Phi_n\circ\cdots\circ\Phi_1 \| \Psi_n\circ\cdots\circ\Psi_1)\leq \sum_{i=1}^n D_\al^{\textup{reg}}(\Phi_i \| \Psi_i) +  \frac{1}{\al-1} \log \frac{2}{\eps^2} -\log ( \sqrt{1-\eps^2}).\]
\end{theorem}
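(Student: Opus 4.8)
The plan is to chain together two ingredients that are already at our disposal: the chain rule of Theorem~\ref{thm:chain rule1}, and the bound of the smooth max relative entropy by the sandwiched R\'enyi relative entropy that is established in this paper as an ingredient for Theorem~\ref{thm:aep-channel}, namely that for any states $\omega,\tau\in\S(\N)$ and $\alpha>1$,
\[
D_{\max}^\eps(\omega\|\tau)\ \le\ D_\alpha(\omega\|\tau)+\frac{1}{\alpha-1}\log\frac{1}{1-\sqrt{1-\eps^2}}\,.
\]
In particular the additive constant appearing in the statement of Theorem~\ref{thm:REAT} is exactly the one produced by this inequality, so all that remains is to bound $D_\alpha$ of the composed output states by the sum of the regularized channel divergences.

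To do this I would first show, by a telescoping argument, that for every input state $\rho\in\S(\M_1)$,
\[
D_\alpha\!\left(\Phi_n\circ\cdots\circ\Phi_1(\rho)\ \big\|\ \Psi_n\circ\cdots\circ\Psi_1(\rho)\right)\ \le\ \sum_{i=1}^n D_\alpha^{\textup{reg}}(\Phi_i\|\Psi_i)\,.
\]
Write $\rho_i=\Phi_{i-1}\circ\cdots\circ\Phi_1(\rho)$ and $\sigma_i=\Psi_{i-1}\circ\cdots\circ\Psi_1(\rho)$, so that $\rho_1=\sigma_1=\rho$, $\rho_{i+1}=\Phi_i(\rho_i)$ and $\sigma_{i+1}=\Psi_i(\sigma_i)$; note that $\rho_i$ and $\sigma_i$ are in general different states, which is precisely why Theorem~\ref{thm:chain rule1} must be phrased for two arbitrary input states. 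Applying Theorem~\ref{thm:chain rule1} to the pair $\Phi_i,\Psi_i$ with the states $\rho_i,\sigma_i$ gives $D_\alpha(\rho_{i+1}\|\sigma_{i+1})\le D_\alpha(\rho_i\|\sigma_i)+D_\alpha^{\textup{reg}}(\Phi_i\|\Psi_i)$ for each $i$; summing over $i$, using that the left-hand side telescopes, and using $D_\alpha(\rho_1\|\sigma_1)=D_\alpha(\rho\|\rho)=0$, yields the claim.

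Combining the two displays for a fixed $\rho$ and then taking the supremum over all $\rho\in\S(\M_1)$ produces exactly the asserted bound (and if one wants a version of $D_{\max}^\eps$ for channels stated with a reference system $\cR$, one simply runs the same argument with $\id_\cR\ten\Phi_i$, $\id_\cR\ten\Psi_i$ and invokes stability of the regularized R\'enyi channel divergence under tensoring with an identity channel). The individual steps here are short, so the real content — and hence the main obstacle — is entirely contained in the two imported ingredients: one must verify that both the chain rule and the max-vs-R\'enyi bound genuinely hold for arbitrary von Neumann algebras, in particular for Type~III algebras, which is exactly where the Haagerup reduction behind Theorem~\ref{thm:chain rule1} and the operator-monotonicity/spectrum-truncation arguments behind the max-vs-R\'enyi estimate are needed. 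A secondary point to check is that the $\eps$-ball (purified distance) defining $D_{\max}^\eps$ is used consistently throughout, so that the constant $\tfrac{1}{\alpha-1}\log(1-\sqrt{1-\eps^2})^{-1}$ transfers verbatim, and that the range of $\alpha$ permitted by the max-vs-R\'enyi bound is compatible with $\alpha\in(1,\infty]$ in the chain rule.
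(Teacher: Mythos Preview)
Your proposal is correct and follows essentially the same route as the paper: apply the max-vs-R\'enyi bound (Lemma~\ref{lemma-1}~iii)) to pass from $D_{\max}^\eps$ to $D_\alpha$, then telescope using the chain rule of Theorem~\ref{thm:chain rule1-sec} with $D_\alpha(\rho\|\rho)=0$ as the base, and finally take the supremum over input states. Your remark about the range of $\alpha$ is on point: the paper's Lemma~\ref{lemma-1}~iii) is stated for $1<\alpha\le 2$, and indeed the restated version of this theorem in Section~\ref{REAT} carries that restriction explicitly.
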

The sequential setup of Theorem~\ref{thm:REAT} generalizes the i.i.d. setting. In fact, considering $\M_1 = \M^{\otimes n}, \M_2 = \N \otimes \M^{\otimes (n-1)}, \dots, \M_{n+1} = \N^{\otimes n}$ and $\Phi_1 = \Phi \otimes \id_{\M}^{\otimes (n-1)}, \dots, \Phi_n = \id_{\N}^{\otimes (n-1)} \otimes \Phi$ and similarly for $\Psi$, we get
$D_{\max}^\eps (\Phi_n\circ\cdots\circ\Phi_1 \| \Psi_n\circ\cdots\circ\Psi_1) = D_{\max}^\eps (\Phi^{\otimes n} \| \Psi^{\otimes n})$. As a result, Theorem~\ref{thm:REAT} together with the additivity result for replacer channels implies the upper bound in Theorem~\ref{thm:aep-channel}.

In~\cite{gen-EAT}, such a statement is established in the finite-dimensional setting and applied to classes of channels corresponding to a conditional entropy. In this case, it is shown that the regularization in $D_\al^{\textup{reg}}$ is not needed and the resulting statement is applied to obtain security proofs for quantum cryptography.

\section{Preliminary on von Neumann algebras}\label{vN}
We briefly recall the basic definitions for von Neumann algebras and refer the readers to the book \cite{takesaki-1} for more information on this topic. Let $\Ha$ be a (possibly infinite-dimensional) separable Hilbert space and denote $B(\Ha)$ the algebra of bounded operators on $\Ha$. We denote by $1$ the identity operator and $\id$ as the identity map from $B(\Ha)$ to itself.

The predual of $B(\Ha)$ is given by the trace-class operators $T(\Ha)=\{ y\in B(\Ha)\pl |\pl \tr(|y|)
<\infty\}$, where $\tr(x)=\sum_{i}\bra{e_i} x\ket{e_i}$ is the standard matrix trace with $\{\ket{e_i}\}_{i\in I}$ being any orthonormal basis.
 We say a net of operators $\{x_\al\} \in B(\Ha)$ converges in weak$^*$ topology to $x\in B(\Ha)$ if $\tr(yx_\al)\rightarrow \tr(yx)$ for any trace-class operator $y\in T(\Ha)$. A von Neumann algebra $\M$ is a $*$-subalgebra of $B(\Ha)$ of some $\Ha$ that is closed under weak$^*$ topology. An example of a von Neumann algebra is the set $L_\infty(\Omega,\mu)$ of bounded measurable functions for some measure space $(\Omega, \mu)$, which is a subalgebra of $B(L_2(\Omega,\mu))$ as multiplier. 
Indeed, whenever (the multiplication in) $\M$ is commutative, $\M \cong L_\infty(\Omega,\mu)$ for some measure space $(\Omega,\mu)$ (see \cite[Theorem 1.18]{takesaki-1}) where $\cong$ denotes $*-$isomorphic von Neumann algebras.
Recall that a linear functional $\phi:\M\rightarrow \C$ is called \begin{enumerate}
\item[i)] normal if it is weak$^*$ continuous;
\item[ii)] positive if $\phi(x^*x)\ge 0$ for any $x\in \M$; and faithful if $\phi(x^*x)= 0$ implies $x=0$;
\item[iii)] unital if $\phi(1)=1$.
\end{enumerate}
We denote $\M_*$ the set of normal linear functionals on $\M$, and $\M_*^+$ be the subset of positive ones.
The functional $\phi$ is a called state if $\phi$ is positive and unital. Throughout the paper, we will only consider normal states. We denote $\S(\M)$ for the set of
 all functionals in $\M_*^+$ that are unital, that is, $\S(\M)=\{\phi\in \M_*^+: \phi(1)=1\}$. The set $\S_{\le}(\M)$ is defined as  $\S_{\le}(\M)=\{\phi\in \M_*^+: \phi(1)\leq 1\}$.

Von Neumann algebras are classified by the existence of traces. Denote $\M_+=\{x^*x \pl | \pl x\in \M\}$ as the positive cone of $\M$.
A trace is a map $\tau: \M_+ \rightarrow [0, +\infty]$ that is positive linear
$\tau(x+\lambda y)=\tau(x)+\lambda\tau(y)\pl, \pl \forall \lambda \ge 0$ and satisfying the tracial property $\tau(xx^*)=\tau(x^*x)\pl, \ \forall \pl x\in \M$. A von Neumann algebra $\M$ is called:\begin{enumerate}
\item[i)] \textbf{finite} if $ \M$ admits a normal faithful trace $\tau$ that is finite,  i.e.  $\tau(1)<\infty$. More precisely, a von Neumann algebra is finite if it admits a sufficient (i.e. separating) family of normal finite traces (see Theorem 2.4 in \cite{takesaki-1}). When $ \M$ is $\sigma$-finite, that is it admits at most  countably many  orthogonal projections, $ \M$ is finite if and only if it admits a faithful normal trace. In this paper we will always assume that $ \M$ is a $\sigma$-finite von Neumann algebras.

\item[ii)] \textbf{semi-finite} if $\M$ admits a normal faithful trace $\tau$ that is semi-finite, i.e. for any $x\in\M_+$, there exists a nonzero $y\in \M_+$ with $y\le x$ such that $\tau(y)<\infty$.
\item[iii)] \textbf{type III} if $\M$ is not semi-finite, i.e. does not admit any normal faithful semi-finite trace. Type III algebras model the observables localized in open bounded regions of space-time relativistic quantum field theory.
\end{enumerate}

\subsection{Examples}\label{sec: examples}
In this section, we mention some natural examples of von Neumann algebras appearing in quantum theory.

\subsubsection{Classical systems:}
Let $\M = L_\infty(X, \mu)$ for some measurable space $X$ and sigma finite measure $\mu$.
A classical state on this system is defined as a normal positive functional on $L_\infty(X, \mu)$ and thus can be identified as an element of the set of all positive integrable functions $L_1(X, \mu)$.
These are the probability densities on $X$. In this case, the trace $\tau$ is just the integration  $f\rightarrow \int_X f d\mu$. As an example, for the position and momentum of a particle, the classical system is obtained for $X=\R$ with the usual Lebesgue measure $\mu$. Any abelian von Neumann algebra is finite (see \cite{takesaki-1}, Definitions 1.15-1.17). Indeed, note that if $\mu$ is a $\sigma$-finite measure, then it is equivalent to a finite measure $\mu_0$, so $L_\infty(X, \mu)\cong L_\infty(X, \mu_0)$ admits a faithful normal finite trace.


When $X$ is countable it is natural to take $\mu$ to be the counting measure and in this case, one writes $L_\infty(X, \mu) = l_{\infty}(X)$.
A classical state is a normalized linear functional on this space and hence one can identify a state $\rho$ on $l_{\infty}(X)$ as a probability distribution on $X$.

\subsubsection{Quantum systems:}
A von Neumann algebra $\M$ should be seen as the algebra of observables of a quantum system. When $\M$ is semi-finite with a trace $\tau$, we have for any state $\phi$ on $\M$, there exists a positive $\tau$-measurable operator $\rho$ affiliated with $\M$ with $\tau({\rho}) = 1$ such that $\phi(x) = \tau(\rho x)$ for all $x \in \M$.

\subsubsection{Composite systems and tensor product:}
If we have two systems described by von Neumann algebras $\M$ and $\N$, then the composite system is described by the tensor product $\M \overline{\otimes} \N$ (see chapter IV in \cite{takesaki-1} for more detail). Let $\M\subseteq B(\Ha)$ and $\N\subseteq B(\mathcal{K})$, we have a natural containment of $\M\otimes \N\subseteq B(\Ha\otimes\mathcal{K})$ of the algebraic tensor product of $\M$ and $\N$ inside
$B(\Ha\otimes\mathcal{K})$. The tensor product of $\M$ and $\N$ (in the von Neumann sense) is denoted by $\M\overline{\otimes}\N$ and defined as the weak$^*$ closure of $\M\otimes \N$ inside $B(\Ha\otimes\mathcal{K})$. In particular, we have $B(\Ha)\overline{\otimes}B(\mathcal{K})=B(\Ha\otimes\mathcal{K})$. Naturally, one can define the tensor product of two states $\rho \in \M$ and $\sigma \in \N$ as the state
$\rho\otimes\sigma$ on $\M\bar{\otimes}\N$ which acts on product elements as follows
\[(\rho\otimes\sigma)(a\otimes b)=\rho(a)\sigma(b),\]
for $a \in \M$ and $b\in \N$.
 Since in this paper we will only talk about von Neumann algebras, we may often drop the $\overline{\otimes}$ notations and work simply with ${\otimes}$ with no confusion.

For another example, if $\cM = L_\infty(X, \mu)$ describes a classical system and $B(\Ha)$ describes  a quantum system, then the von Neumann algebra for describing the composite system is $L_\infty(X, \mu)\overline{\otimes} B(\Ha)$. Such algebras are called Type I. In this case, observables can be seen as operator-valued functions.

We note that the tensor product of semi-finite von Neumann algebras is semi-finite.

\subsubsection{Fermionic systems:}
Here we provide an example of a von Neumann algebra that frequently appears in quantum theory that is finite but not Type I. In fact it is an infinite-dimensional von Neumann algebra that admits a finite tracial state $\tau$, that is, $\tau(1)=1$.
This is called the hyperfinite $\mathrm{II_1}$-factor and it is physically relevant as it often appears in the study of the fermionic algebra that arises through a limiting process
involving finite quantum systems (\cite{kadison-pnas}).


Consider the increasing sequence of tensor products of the $2\times 2$ matrix algebra over complex numbers:
\[M_2(\C)\hookrightarrow M_2(\C)^{\otimes 2}\hookrightarrow M_2(\C)^{\otimes 3}\hookrightarrow \cdots, \]
where the inclusion $M_2(\C)^{\otimes n}\ \hookrightarrow M_2(\C)^{\otimes {n+1}}$ is given by $x\rightarrow x\otimes 1$, for each $n\in \mathbb{N}$. Let $\A$ be the norm completion of the inductive limit, that is, $\A=\overline{\bigcup_{n\in \mathbb{N}} M_2(\C)^{\otimes n}}$. Now if $\tr$ is the normalized trace on $M_2(\C)$, then
on $M_2(\C)^{\otimes n}=M_{2^n}(\C)$ has a tracial state $\tau_n$ defined as
\[\tau_n(x_1\otimes \cdots \otimes x_n)=\tr(x_1) \tr(x_2)\cdots \tr(x_n); x_1, x_2, \cdots, x_n\in M_2(\C) .\]
Hence the sequence $\{\tau_n\}_{n=1}^\infty$ defines a unique state $\tau$ on the C$^*$-algebra $\A$. And if $(\Ha_\tau, \xi_\tau, \pi_\tau)$ is the GNS tuple for $\tau$, then we define
\[\Ra:= (\pi_\tau(\A))'',\]
and $\tau$ extends to a normal tracial state on $\Ra$ by $\tau(x)=\langle \xi_\tau, x\xi_\tau\rangle.$
The algebra $(\Ra, \tau)$ is called the hyperfinite $\mathrm{II_1}$ factor.

\subsubsection{Type III algebras}
Let $0<\lambda<1$,  and define the density matrix
\[\omega_\lambda=\Big(\begin{array}{cc}
\frac{1}{(1+\lambda)} & 0\\
0 & \frac{\lambda}{(1+\lambda)}
\end{array}\Big).\]
This defines a state $\psi_\lambda (\cdot):=\mathrm{Tr}(\omega_\lambda \cdot)$ on $M_2(\C)$. Now as in the previous example define the tensor product state $\phi_\lambda$ on the UHF C$^*$-algebra $\A$ as $\phi_\lambda(x_1\otimes \cdots \otimes x_n)=\psi_\lambda(x_1)\cdots \psi_\lambda(x_n)$, $x_i\in M_2(\C)$. Let $\pi_{\phi_{\lambda}}$ be the GNS representation of $\A$ with respect to $\phi_\lambda$. Then the von Neumann algebra $(\pi_{\phi_{\lambda}}(\A)'')$ obtained by taking the weak closure of $\pi_{\phi_{\lambda}}(\A)$ is a Type III algebra (see \cite{witten}). These are known as the Powers factors.


\subsection{Relative Entropies}\label{entropies and channels}
We first consider a semi-finite von Neumann algebra $\M$ equipped with a normal faithful semi-finite trace $\tau$.  Recall that (\cite{terp, nelson}) the noncommutative $L_p$-space $L_p(\M)$ (for $1<p<\infty$)  on $(\M, \tau)$ is defined as the space of all $\tau$-measurable operators $a$ affiliated with $\M$ such that $\tau(|a|^p)<\infty$ (where $\tau$ is naturally extended to the positive $\tau$-measurable operators).

We will often write $\norm{\cdot}{p}$ if the algebra $\M$ and the trace $\tau$ is clear. We identify the $L_1$-space $L_1(\M):=\M_*$ with the pre-dual space $\M_*$ via the correspondence
\[a\in L_1(\M)\longleftrightarrow \phi_a\in \M_*,\pl  \phi_a(x)=\tau(ax)\pl.\]
Under this identification, a state $\phi$ is associated with a density operator $\rho \in L_1(\M)$ such that $\rho\ge 0$ and $\tau(\rho)=1$. In the beginning of this section we defined
normal states and substates as linear functionals. In the case where the von Neumann algebra has a tracial state $\tau$ we can identify $L_1(\M):=\M_*$ and we can define normal states and substates in terms of the density operators as follows:
\begin{align*}
&\S(\M)=\{ \rho \in L_1(\M)\pl | \pl \rho \ge 0,\tau(\rho)= 1\},\\
&\S_\le (\M)=\{ \rho \in L_1(\M)\pl  | \pl  \rho \ge 0,\tau(\rho)\le 1\pl\}\pl, \\
&\S_b (\M)=\{ \rho \in \S(\M)\pl  | \pl  m s(\rho)\le \rho \le Ms(\rho) \pl  \text{ for some } 0<m<M<\infty \}\pl,
\end{align*}
where $s(\rho)$ denotes the support of state $\rho$, which is the minimal projection $e$ such that $\rho=e\rho e$. Note that for $\rho \in \S_b(\M)$, $\tau(m s(\rho))\le \tau(\rho)=1 $ implies $\tau(s(\rho))<\infty $. For any $e$ with $\tau(e)<\infty$, $\tau$ restricted on $e\M e$ is a finite trace. In many cases, this allows us to restrict the discussion to finite von Neumann algebras.

We now recall the definition of various relative entropies for two quantum (sub)states. These definitions on finite-dimensional
systems are available in any book on quantum information theory (for example \cite{tomamichel-book}) and their extension to infinite-dimensional systems can be found in documents such as \cite{hiai-book, bst, jencova-1}.

Let $\rho \in \S_\le(\M)$ and $\sigma \in \S(\M)$. In the following unless otherwise mentioned, we always assume the support condition $s(\rho)\le s(\sigma)$.
Recall the the following definitions:
\begin{enumerate}
\item[i)] Relative entropy: \[D(\rho \| \sigma)=\tau(\rho\log \rho -\rho\log\sigma)\pl  \]
provided $\rho(\log \rho-\log\sigma)\in L_1(\M)$.
\item[ii)] Sanwiched-R\'enyi relative entropy:  for $\al\in [1/2,1)\cup (1,\infty), \frac{1}{\al}+\frac{1}{\al'}=1$. \begin{align*}D_{\al}(\rho \| \sigma)=&\frac{\al}{\al-1}\log\norm{\sigma^{-\frac{1}{2\al'}}\rho\sigma^{-\frac{1}{2\al'}}}{\al}
    \\ =&\frac{1}{\al-1}\log\tau(|\sigma^{-\frac{1}{2\al'}}\rho\sigma^{-\frac{1}{2\al'}}|^{\al})
    \end{align*}
\item[iii)] Petz-R\'enyi relative entropy:   for $\al\in (0,1)\cup (1,2)$,
\[\tilde{D}_{\al}(\rho\|\sigma)=\frac{1}{\al-1}\log\tau(\rho^\al\sigma^{1-\al}).\]
\item[iv)] Geometric R\'enyi relative entropy:   for $\al\in (0,1)\cup (1,2)$,
\[\hat{D}_{\al}(\rho\|\sigma)=\frac{1}{\al-1} \log \tau[\sigma^{1/2}(\sigma^{-1/2}\rho\sigma^{-1/2})^\al\sigma^{1/2} ].\]
\end{enumerate}
All the three R\'enyi relative entropies $D_{\al},\tilde{D}_{\al}$ and $\hat{D}_{\al}$ are finite given the corresponding trace terms are well-defined, and infinite otherwise. Moreover, they coincide when $\rho$ and $\sigma$ commute.
When $\al$ goes to $1$, both $D_{\al}$ and $\tilde{D}_{\al}$ recovers the standard relative entropy
\[ \lim_{\al \to 1}D_{\al}(\rho\|\sigma)=\lim_{\al \to 1} \tilde{D}_{\al}(\rho\|\sigma)=D(\rho\|\sigma),\]
provided that the quantities $D_{\al}(\rho\|\sigma)<\infty $ and $ \tilde{D}_{\al}(\rho\|\sigma)<\infty$ for some $\alpha>1$.

For $\al=\infty$, the Sandwiched-R\'enyi relative entropy is the max relative entropy.
\begin{enumerate}
\item[v)] Max-relative entropy: \[D_{\max}(\rho\|\sigma)=\log \inf\{\lambda>0 |\rho\le \lambda \sigma \}\pl .\]
\end{enumerate}
\begin{enumerate}
\item[vii)] Information divergence: \[V(\rho\|\sigma)=\tau(\rho(\log \rho -\log\sigma)^2)\] provided $\rho(\log \rho -\log\sigma)^2\in L_1(\M)$. Third order term

    \[ T(\rho\|\sigma)=\tau(\rho(\log \rho -\log\sigma)^3)\]
    provided $\rho(\log \rho -\log\sigma)^3\in L_1(\M)$.
\item[viii)] Information spectrum: \[D_{s}^\eps(\rho\|\sigma)=\sup\{  \log \lambda \ | \ \tau(\rho\{\rho\le \lambda \sigma\}) \le \eps\} \]
where $\{\rho\le \lambda\sigma\}$ is the support projection of the positive part $(\lambda\sigma-\rho)_+$.
\item[ix)] \label{eq:hyp}  Hypothesis testing relative entropy: \[D_{H}^\eps(\rho\|\sigma)=-\log\inf\{ \tau(Q\sigma) \pl | \pl 0\le Q\le 1\pl, \tau(Q\rho)\ge 1-\eps \}. \]

\end{enumerate}
All the logarithms above and in the following are in base $2$. Recall the purified distance between two substates $\rho $ and $\sigma$ is
\[ d(\rho,\sigma)= \sqrt{1-F(\rho,\sigma)^2}\pl, \pl F(\rho,\sigma)=\tau(|\sqrt{\rho}\sqrt{\sigma}|)+\sqrt{(1-\tau(\rho))(1-\tau(\sigma))}\pl,\]
where $F(\rho,\sigma)$ is the generalized fidelity for $\rho,\sigma\in S_\le(\M)$. Following \cite{toma-haya}, we say two substates $\rho$ and $\rho'$ are $\eps$-close, denoted as $\rho'\sim^{\eps} \rho$, if $d(\rho',\rho)\le \eps$.
 For $0\le \eps\le 1$, we consider the smoothed version of max relative entropy,
\begin{enumerate}
\item[x)] Smooth max-relative entropy: \[D_{\max}^\eps(\rho\|\sigma)=\inf_{\rho'\sim^{\eps}\rho}D_{\max}(\rho'\|\sigma),\]
where the infimum is over all substates $\rho' \in S_{\le}(\M)$ that are $\eps$-close to $\rho$.
\end{enumerate}

\subsection{Relative entropies on general von Neumann algebras}\label{subs-ent. in gen vN}
Most often, the algebras considered in quantum information theory all admit a proper trace. Algebras of type $\text{III}$ do not admit any non-trivial trace functional, not even semi-finite. However, the type $\text{III}$ algebras are relevant models in quantum field theory (see \cite{witten}). Smooth (relative) entropies were first considered in general von Neumann algebras in \cite{berta-smooth}.
We follow the monograph \cite{hiai-book} to use the standard form to define relative entropy. Recall that the standard form $(\M,\cH,J,P)$ of a von Neumann algebra $\M$ is given by an injective $*$-homomorphism~$\pi:\M \to B(\cH)$,  an anti-linear isometry $J$ on $\cH$, and a self-dual cone $P$ such that
\begin{itemize}
\item[i)] $J^2=1$, $J\M J=\M'$,
\item[ii)] $JaJ=a^*$ for $a\in \M\cap \M'$,
\item[iii)] $J\xi=\xi$ for $\xi\in P$,
\item[iv)] $aJaJP=P$ for $a \in \M$,
\end{itemize}
where $\M' \coloneqq \{x\in B(\cH)\pl |\pl xa=ax \quad \forall  a\in \M\}$ is the commutant of $\M$.
Such standard form is unique up to unitary equivalence. For example, when $\M$ is $\sigma$-finite equipped with a normal faithful state $\psi$, the standard form is basically given on the Hilbert space $L_2(\M,\psi)$
\[ \lan a,b\ran_\psi=\psi(a^*b)\pl,\qquad \norm{a}{L_2(\M,\psi)}^2=\lan a, a\ran_\psi\pl, \]
and $\M$ acts on $L_2(\M,\psi)$ via the GNS representation $\pi:\M\to L_2(\M,\psi)$
\[
\pi(x) \ket{a}= \ket{xa} \pl , \pl \forall x\in \M\pl, \pl \ket{a}\in L_2(\M,\psi)\pl.
\]

For each positive linear functional $\phi\in \M_*^+$, there exists a unique vector $\xi_\phi\in P$ implementing $\phi(x)=\bra{\xi_\phi}x\ket{\xi_\phi}$.  Recall (\cite{takesaki-1}) that for a $\phi\in\M_*^+$, the support projection of $\phi$ is the smallest projection $e\in \M$ such that $\phi(x)=\phi(e x e)$, for any $x\in \M$. The projection $e$ is often known as the support projection of $\phi$ and denoted as $s(\phi)$.

The support projection $s(\phi)$ is then the projection onto $\overline{\M'\xi_\phi}$.
Given two positive linear functional $\phi,\psi\in \M_*^+$, we define the operator $S_{\psi,\phi}$ as follows:
\begin{equation}
\label{eq:S-op-def}
S_{\psi,\phi}(a\xi_\phi+\eta)=s(\phi)a^*\xi_\psi\pl, \pl a\in \M\pl,
\end{equation}
where $a\xi_\phi\in \overline{\M\xi_\phi}, \eta\in \M\xi_\phi^\perp$. Then $S_{\bpsi,\bphi}$ is a closable anti-linear operator, and the relative modular operator is the positive self-adjoint operator defined as
\begin{equation}
\label{eq:rel-mod-op-def}
\Delta(\psi,\phi) \coloneqq (S_{\psi,\phi})^*\bar{S}_{\psi,\phi}\pl,
\end{equation}
where $\bar{S}_{\psi,\phi}$ is the closure of $S_{\psi,\phi}$.
In particular, the modular operator of $\phi$ is $\Delta_\phi:=\Delta(\phi,\phi)$
 and the modular automorphism group $\al_t^{\phi}:\M\to \M\pl$ is as follows:
 \[
 \al_t^{\phi}(x)=\Delta_\phi^{-it} x \Delta_\phi^{it}.
 \]
Let $\rho\in S_{\le}(\M)$ and $\sigma \in \M_*^+$. Again, we always assume $s(\rho) \le s(\sigma)$.  Many quantum divergences between $\rho$ and $\sigma $ are defined using the relative modular operator $\Delta(\rho,\sigma)$ (see \cite{hiai-book} \cite{gao2021recoverability}). Given the spectral decomposition of $\Delta(\rho,\sigma)$ as
$\Delta(\rho,\sigma)=\int_{[0,\infty)} t dE_{\rho,\sigma}(t)$, then for any real function $f$ on the interval  $(0, \infty)$, we have the integral expression
$$\lan \xi_\rho| f(\Delta(\rho,\sigma))\xi_\rho \ran=\int_{0}^\infty  f(t) d\|E_{\rho,\sigma}(t) \xi_\rho\|^2.$$

\begin{enumerate}
\item[i)] Relative entropy: \[D(\rho\|\sigma)=\lan \xi_\rho| \log\Delta(\rho,\sigma) | \xi_\rho \ran \pl .\]
where $\xi_\rho\in P$ is the vector implementing $\rho$.

\noindent This definition is due to H. Araki (\cite{Araki}). A useful variational expression of the relative entropy was put forward by H. Kosaki (\cite{Kosaki}).

We also define
Information variance:
\[V(\rho\|\sigma)=\lan \xi_\rho| (\log\Delta(\rho,\sigma))^2 | \xi_\rho \ran \pl .\]
The third-order quantity:
\[T(\rho\|\sigma)=\lan \xi_\rho| (\log\Delta(\rho,\sigma))^3 | \xi_\rho \ran \pl .\]
\item[ii)] Sandwiched-R\'enyi relative entropy:  for $\al\in (1,\infty), \frac{1}{\al}+\frac{1}{\al'}=1$. \[D_{\al}(\rho\|\sigma)=\al'\log \sup_{\omega\in \S(\M)} \bra{\xi_\rho} \Delta(\omega,\si)^{1/\al'}\ket{\xi_\rho}\pl.\]
\noindent Note that the Sandwiched-R\'enyi relative entropy in the context of von Neumann algebras was introduced for $\al\in [1/2, 1)\cup (1,\infty)$ in \cite{bst} and a different
but equivalent definition was given by Jen{\v{c}}ov\'a (\cite{jencova-1, jencova-2}).
\item[iii)] Petz-R\'enyi relative entropy:  for $\al\in (0,1)\cup (1,2)$, \[\tilde{D}_{\al}(\rho\|\sigma)=\frac{1}{\al-1}\log \bra{\xi_\rho} \Delta(\omega,\rho)^{1-\al}\ket{\xi_\rho}\pl.\]
\noindent For $\al>1$, it is proved in \cite{jencova-1} that \begin{align}\tilde{D}_{2-\frac{1}{\al}}(\rho\|\sigma)\le D_\al(\rho||\sigma)\le \tilde{D}_{\al}(\rho\|\sigma).\label{eq:bound}\end{align}
\item[iv)] Max relative entropy:   \[D_{\max}(\rho\|\sigma)=\log \inf\{ \lambda \pl |\pl \lambda\sigma-\rho\in\M_*^+\}\pl.\]
\item[v)] Fidelity: \[F(\rho,\sigma)= \inf_{\omega \in \mathcal{S}(\mathcal{M})} \bra{\xi_\rho} \Delta(\omega,\si)^{-1}\ket{\xi_\rho}\pl.\]

\noindent The fidelity function was first introduced by A. Uhlmann \cite{uhlmann} and a useful variational expression
\begin{align}\label{eq:fe}F(\rho, \sigma)= \frac{1}{2}\inf_{x\in \M^{++}}\{\rho(x)+\sigma (x^{-1})\},\end{align}
where $\M^{++}$ is the set of all positive invertible elements of $\M$, was provided by P. Alberti and Uhlmann \cite{Alberti-Uhlmann}. It follows that $F$ is jointly upper semi-continuous in the weak-topology. Also, note that $F(\rho, \sigma)=2^{-\frac{1}{2} D_{1/2}(\rho\| \sigma)}$, although we will not use the Sandwiched-R\'enyi relative entropy $D_{\alpha}$ for $\alpha<1$ in this article.

    \item[vi)] Smooth max relative entropy:   \[D_{\max}^\eps(\rho\|\sigma)= \inf_{\rho'\sim^\eps \rho} D_{\max}(\rho'\|\sigma)\]
    where $\rho'\sim^\eps \rho$ means the purified distance $d(\rho,\rho')=\sqrt{1-F(\rho,\rho')^2}\le \eps$.
    \item[vii)] $\eps$-Hypothesis relative entropy: \[D_H^\eps(\rho\|\sigma)=-\log\inf\{\sigma(Q)| 0\leq Q\leq 1, \rho(Q)\geq 1-\eps\}.\]
\end{enumerate}

\subsection{Quantum channels} \label{channels}
A common property shared by all above relative entropies is the data processing inequality, also called monotonicity under quantum channels. Let $\M$ and $\N$ be two von Neumann algebras. A linear map $\Phi^*: \M\to \N$ that is unital, normal and completely positive is called a quantum Markov map. The pre-adjoint map $\Phi: \N_*\rightarrow \M_*$ is called a quantum channel, which sends normal states to normal states as follows
\[\Phi(\rho)(x)=\rho(\Phi^*(x))\pl , \pl \forall \rho\in \N_*\pl, \pl  x\in \M\pl.\]
When the von Neumann algebra admits a trace $\tau$, at the level of preduals $L_1(\M)\cong \M_*, L_1(\N)\cong \N_*$, the map $\Phi$ is given by
\[\tau(\Phi(\rho) x)=\tau(\rho\Phi^*(x))\pl , \pl \forall \rho\in L_1(\N)\pl, \pl  x\in \M\pl.\]
In this case, $\Phi$ is completely positive and trace preserving (CPTP). For $\bD=D,D_\al, \tilde{D}_\al, \hat{D}_\al, D_H^\eps$ or $ D_{\max}^\eps$, the \textbf{data processing inequality} states that for any states $\rho,\sigma\in \S(\M),$ and every quantum channel $\Phi:\M_*\to\N_*$ we have
\[ \bD(\rho\|\sigma)\ge \bD(\Phi(\rho)\|\Phi(\sigma)).\]

\subsection{Haagerup Reduction}\label{Haagerup reduction} It turns out that the relative entropies in the Type III algebras can always be approximated by semifinite cases via Haagerup's reduction.
Here we briefly review the basics of Haagerup's reduction and refer to \cite{haagerup2010reduction} for more details. The key idea is to consider the additive subgroup $G=\bigcup_{n\in \bN} 2^{-n}\bZ \subset \mathbb{R}$. Let $\M\subset B(\cH)$ be a von Neumann algebra and $\phi$ be a normal faithful state. Recall  the modular automorphism group  $\al^\phi_t:\M\to\M, t\in \bR$
\[ \al^\phi_t(x)=\Delta_\phi^{it}x\Delta_\phi^{-it} \pl \forall\pl  x\in \M,\]
where $\Delta_\phi=\Delta(\phi, \phi)$ is the modular operator of $\phi$.
One can define the crossed product by the action $\al^\phi:G\curvearrowright \M$
\[ \hat{\M}\lel \M\rtimes_{\al^{\phi}}G \pl. \]
$\hat{\M}$ is the von Neumann subalgebra  $ \hat{\M}=\{\pi(\M),\la(G)\}'' \subset \M\overline\ten B(\ell_2(G))$ generated by the embeddings
\begin{align}\label{eq:cross}
  \pi:\M\to \M\rtimes_{\al^{\phi}}G\pl, \pl &\pi(a)\lel \sum_{g} \al_{g^{-1}}(a) \ten \ketbra{g}\nonumber\\
  \lambda:G\to \M\rtimes_{\al^{\phi}}G,\pl  &\la(g)(\ket{x}\ten \ket{h})=\ket{x}\ten \ket{gh}, \text{ and } \pl \forall \pl \ket{x}\in \cH\pl, \ket{h}\in \ell_2(G) \pl.
  \end{align}
Here $\pi$ is the transference homomorphism $\M\to \ell_\infty(G,\M)$, and $\lambda$ is the left regular representation on $\ell_2(G)$.
 The set of finite sums
 $\{\sum_g a_g \la(g)\pl | \pl  a_g \in \M \}\subset \hat{\M}$ forms a  $w^*$-dense subalgebra of $\hat{\M}$.
 In the following, we identify $\M$ with $\pi(\M)$ (resp. $a$ with $\pi(a)$) and view $\M\subset \hat{\M}$ as a subalgebra. There is a canonical normal conditional expectation
 \[E_\M:\hat{\M}\to \M\pl, \pl E_\M(\sum_g a_g\la(g))=a_0\pl.\]
 For a normal state $\psi\in \S(\M)$, $\hat{\psi}=\psi\circ E_\M$ gives a natural extension on $\hat{\M}$ such that
 \[ \hat{\psi}(\sum_g a_g \la(g)) \lel \psi(a_0) \pl.\]

The main object in Haagerup's construction is an increasing family of subalgebras $\M_n, n\ge 1$, whose key properties are summarized in \cite[Theorem 2.1 \& Lemma 2.7]{haagerup2010reduction}
\begin{theorem}\label{thm:hr}
There exists an increasing family of von Neumann subalgebras $\M_n,n\ge 1$ satisfying the following properties
\begin{enumerate}
\item Each $(\M_n,\tau_n)$ is a finite von Neumann algebra equipped with some normal faithful tracial state $\tau_n$.
\item $\bigcup_{n\ge 1} \M_n$ is weak$^*$-dense in $\hat{\M}$.
\item There exists $\hat{\phi}$-preserving normal faithful conditional expectation $E_{\M_n}:\hat{\M}\to \M_n$ such that
\[ \hat{\phi}\circ E_{\M_n}=\hat{\phi}\pl, \pl \al_t^{\hat{\phi}}\circ E_{\M_n}= E_{\M_n}\circ \al_t^{\hat{\phi}}\pl. \]
Moreover, $E_{\M_n}(x)\to x$ in $\sigma$-strong topology for any $x\in \hat{\M}$.
\end{enumerate}
\end{theorem}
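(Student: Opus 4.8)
This theorem is the Haagerup reduction theorem of \cite{haagerup2010reduction}; the plan is to reconstruct that construction, in which the $\M_n$ appear as centralizers of a sequence of perturbed states on $\hat{\M}$ whose modular flows are discretized at scale $2^{-n}$.

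First I would record the basic data on $\hat{\M}=\M\rtimes_{\al^\phi}G$. The canonical normal conditional expectation $E_\M:\hat{\M}\to\M$ is faithful, so $\hat{\phi}:=\phi\circ E_\M$ is a normal faithful state on $\hat{\M}$, and since the action is by modular automorphisms of $\phi$, which preserve $\phi$, the modular automorphism group of the dual state acts on the generators of $\hat{\M}$ by $\al^{\hat\phi}_t(\pi(a))=\pi(\al^\phi_t(a))$ and $\al^{\hat\phi}_t(\lambda(g))=\lambda(g)$. In particular $\lambda(G)$, hence each abelian algebra $\{\lambda(2^{-n})\}''\subseteq\hat{\M}$, sits in the centralizer of $\hat{\phi}$.

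For each $n\ge1$ take $h_n:=e^{-2^n c_n}$, where $c_n=c_n^{*}$ is affiliated with $\{\lambda(2^{-n})\}''$ and $\lambda(2^{-n})=e^{ic_n}$ (for a fixed branch of the logarithm), so that $h_n^{i2^{-n}}=\lambda(2^{-n})^{-1}$. As $h_n$ is positive and lies in the centralizer of $\hat{\phi}$, the normalized functional $\phi_n:=\hat{\phi}(h_n\,\cdot\,)$ is a normal faithful state with modular group $\al^{\phi_n}_t=\operatorname{Ad}(h_n^{it})\circ\al^{\hat\phi}_t$; one checks that $\al^{\hat\phi}$ and $\al^{\phi_n}$ commute and that $\hat{\phi}$ is $\al^{\phi_n}$-invariant. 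The decisive point is $\al^{\phi_n}_{2^{-n}}=\operatorname{id}$: on $\lambda(g)$ this holds because $\lambda(G)$ is abelian and $\al^{\hat\phi}$-fixed, while on $\pi(a)$ it follows from $h_n^{i2^{-n}}\,\pi(\al^\phi_{2^{-n}}(a))\,h_n^{-i2^{-n}}=\lambda(2^{-n})^{-1}\pi(\al^\phi_{2^{-n}}(a))\lambda(2^{-n})=\pi(\al^\phi_{-2^{-n}}\al^\phi_{2^{-n}}(a))=\pi(a)$. Thus $\al^{\phi_n}$ is periodic of period $2^{-n}$, so it integrates to an action of the circle $\bR/2^{-n}\bZ$. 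I then set $\M_n:=\hat{\M}_{\phi_n}$, the fixed-point algebra of $\al^{\phi_n}$ (equivalently, the centralizer of $\phi_n$), and define $E_{\M_n}(x):=2^n\int_0^{2^{-n}}\al^{\phi_n}_t(x)\,dt$. Since the restriction of a normal faithful state to its own centralizer is a normal faithful tracial state, $\tau_n:=\phi_n|_{\M_n}$ makes $(\M_n,\tau_n)$ a finite von Neumann algebra, which is (1). The map $E_{\M_n}$ is a normal faithful conditional expectation onto $\M_n$ (by $\al^{\phi_n}$-bimodularity over $\M_n$), it is $\hat{\phi}$-preserving because $\hat{\phi}$ is $\al^{\phi_n}$-invariant, and it commutes with $\al^{\hat\phi}_t$ because $\al^{\hat\phi}$ commutes with $\al^{\phi_n}$; this establishes (3) apart from the convergence clause.

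The remaining content — the nesting $\M_n\subseteq\M_{n+1}$, the weak$^*$-density in (2), and the $\sigma$-strong convergence $E_{\M_n}(x)\to x$ in (3) — is the genuinely analytic part and the \emph{main obstacle}. The key claim is that $\pi(a)\in\M_n$ whenever $a$ belongs to a bounded spectral subspace of $\al^\phi$ and $n$ is large enough (depending on the subspace): one proves this by matching the spectral decomposition of $c_n$ against the $\al^\phi$-spectral decomposition of $a$, so that conjugation by $h_n^{it}$ exactly cancels the $\al^\phi_t$-evolution of $\pi(a)$ — the spectral window of $a$ stays fixed while the resolution $2^{-n}$ shrinks. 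Granting this, and using $\lambda(G)\subseteq\M_n$ for all $n$, the increasing union $\bigcup_n\M_n$ of $*$-algebras contains a weak$^*$-total family and is therefore weak$^*$-dense, which is (2); the $\sigma$-strong convergence in (3) then follows formally, since each $E_{\M_n}$ is $\hat{\phi}$-preserving (hence an $L_2(\hat{\M},\hat{\phi})$-contraction) and is the identity on $\M_m$ for $n\ge m$. The nesting is obtained in parallel by choosing the branches of the $c_n$ compatibly, forcing every $\al^{\phi_n}$-fixed element to be $\al^{\phi_{n+1}}$-fixed. This spectral-and-cocycle bookkeeping is the delicate step, and for the complete argument I would follow \cite{haagerup2010reduction}.
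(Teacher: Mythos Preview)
The paper does not prove this theorem; it is quoted verbatim as \cite[Theorem 2.1 \& Lemma 2.7]{haagerup2010reduction} and used as a black box throughout. There is therefore no ``paper's own proof'' to compare your proposal against. What you have written is a faithful outline of the Haagerup--Junge--Xu construction itself --- perturbing $\hat{\phi}$ by positive elements $h_n$ in the abelian centralizer subalgebra $\{\lambda(2^{-n})\}''$ so that the new modular flow becomes $2^{-n}$-periodic, taking $\M_n$ to be the centralizer of the perturbed state, and obtaining $E_{\M_n}$ by averaging over the period --- and you correctly flag the nesting and density as the part that requires the careful branch-of-logarithm and spectral bookkeeping you defer to the original source. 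One small caveat: your $\phi_n=\hat{\phi}(h_n\,\cdot\,)$ needs normalization to be a state, and the boundedness/invertibility of $h_n$ depends on the chosen branch of $c_n$, but these are exactly the details handled in \cite{haagerup2010reduction}.
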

 For a state $\rho\in \S(\M)$,  $\hat{\rho}=\rho\circ E_\M$ is the canonical extension on $\hat{\M}$ such that
 \[ \hat{\rho}(\sum_g a_g \la(g)) \lel \rho(a_0) \pl.\]
 We denote $\rho_n:=\hat{\rho}|_{\M_n}\in S(\M_{n})$ as the restriction state of $\hat{\rho}$ on the subalgebra $\M_n \subset\hat{\M}$. Note that the predual $\M_{n,*}$ can be viewed as a subspace of $\hat{\M}_*$ via the embedding
\[ \iota_{n,*}: {\M}_{n, *}\to \hat{\M}_*\pl,  \iota_{n,*}(\omega)=\omega\circ E_{\M_n}\pl. \]
 Via this identification, $\rho_n=\hat{\rho}|_{\M_n}\circ E_{\M_n}=\hat{\rho}\circ E_{\M_n}=E_{\M_n,*}(\hat{\rho})\in \hat{\M}_*$. Moreover, it was proved in \cite[Theorem 3.1]{haagerup2010reduction}, $\rho_n\to \hat{\rho}$ converges in the norm topology. One immediate consequence is the following approximation of relative entropies.
\begin{prop}
\label{lemma:ultra}
Let $\rho$ and $\sigma$ be two normal states of $\M$. Fix $\eps\in (0,1)$.  Then for $\bD=D,D_\al,D_{\max}$, $ D_{\max}^\eps$, or $D_H^\eps$ with
 \[ \bD(\rho\|\si) \lel \bD(\hat{\rho}\|\hat{\si})=\lim_{n\to \infty} \bD(\rho_n\|\sigma_n) \pl .\]
\end{prop}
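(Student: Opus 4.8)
The plan is to establish the two equalities separately. The first equality, $\bD(\rho\|\sigma)=\bD(\hat\rho\|\hat\sigma)$, should follow from the data processing inequality applied in both directions. Indeed, the conditional expectation $E_\M:\hat\M\to\M$ is a quantum channel (it is normal, unital, completely positive, hence its pre-adjoint $E_{\M,*}:\M_*\to\hat\M_*$ is a channel), and its pre-adjoint sends $\rho\mapsto\hat\rho$, $\sigma\mapsto\hat\sigma$ by definition of the canonical extension. Conversely, the inclusion $\pi:\M\hookrightarrow\hat\M$ is also a normal unital completely positive map whose pre-adjoint restricts $\hat\rho\mapsto\hat\rho|_\M=\rho$ and $\hat\sigma\mapsto\sigma$. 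Applying the data processing inequality for $\bD$ (valid for $D$, $D_\al$, $D_{\max}$, $D_{\max}^\eps$, $D_H^\eps$, as recalled in Section~\ref{channels}) along the channel $E_{\M,*}$ gives $\bD(\rho\|\sigma)\geq\bD(\hat\rho\|\hat\sigma)$, and along $\pi_*$ gives $\bD(\hat\rho\|\hat\sigma)\geq\bD(\rho\|\sigma)$; hence equality. One small point to check: we need $s(\hat\rho)\le s(\hat\sigma)$, which follows from $s(\rho)\le s(\sigma)$ together with the fact that $E_\M$ is faithful, so that the entropies are well-defined on both sides.

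For the second equality, $\bD(\hat\rho\|\hat\sigma)=\lim_n\bD(\rho_n\|\sigma_n)$, the key input is Haagerup's norm convergence $\rho_n\to\hat\rho$ and $\sigma_n\to\hat\sigma$ in $\hat\M_*$, established in \cite[Theorem 3.1]{haagerup2010reduction}, together with the fact (recorded in the excerpt) that $\rho_n$, viewed in $\hat\M_*$ via $\iota_{n,*}$, equals $E_{\M_n,*}(\hat\rho)$. Since $E_{\M_n}:\hat\M\to\M_n$ is a normal unital completely positive map, each restriction step $\hat\rho\mapsto\rho_n$ is again a quantum channel, so data processing gives $\bD(\hat\rho\|\hat\sigma)\geq\bD(\rho_n\|\sigma_n)$ for every $n$; moreover, because the $\M_n$ are increasing and $E_{\M_m}\circ E_{\M_n}=E_{\M_n}$ for $m\ge n$, the sequence $\bD(\rho_n\|\sigma_n)$ is nondecreasing in $n$ and bounded above by $\bD(\hat\rho\|\hat\sigma)$. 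It therefore converges, to a limit $\le\bD(\hat\rho\|\hat\sigma)$. The reverse inequality $\liminf_n\bD(\rho_n\|\sigma_n)\ge\bD(\hat\rho\|\hat\sigma)$ is the substantive part: here one invokes lower semicontinuity of the relative entropy (and of $D_\al$, $D_{\max}$, etc.) with respect to the norm topology on $\hat\M_*$, which holds in the von Neumann algebra setting, applied to $\rho_n\to\hat\rho$, $\sigma_n\to\hat\sigma$. For $D_{\max}^\eps$ and $D_H^\eps$ one combines lower semicontinuity with the continuity of the purified distance / fidelity under norm convergence, so that a near-optimal smoothing state or test at level $n$ can be transported to the limit.

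The main obstacle I expect is precisely this lower semicontinuity / continuity step for the smoothed quantities $D_{\max}^\eps$ and $D_H^\eps$ on a general (type III) von Neumann algebra: one must argue that an $\eps$-ball in purified distance behaves well under the approximation, i.e. that optimizers $\rho_n'$ achieving $D_{\max}^\eps(\rho_n\|\sigma_n)$ have a norm-convergent (sub)net whose limit is $\eps$-close to $\hat\rho$, and that $D_{\max}$ is lower semicontinuous along it. For the unsmoothed $D$ and $D_\al$ this is standard (joint lower semicontinuity of relative entropy, and the variational formula for $D_\al$ via $\sup_\omega\bra{\xi_\rho}\Delta(\omega,\sigma)^{1/\al'}\ket{\xi_\rho}$ together with continuity of $\xi_\rho$ in $\rho$), but the smoothed versions require the extra compactness-type argument in $\S_\le(\hat\M)$. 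Once that is in place, sandwiching the nondecreasing, bounded sequence between its limit and $\bD(\hat\rho\|\hat\sigma)$ closes the proof.
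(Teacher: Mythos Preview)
Your overall strategy---data processing in both directions for the first equality, and data processing plus lower semicontinuity for the second---is exactly the route the paper takes (it cites \cite[Lemma 4.2]{gao2022complete} for the general pattern and \cite{jencova-1} for lower semicontinuity of $D,D_\al,D_{\max}$). You have also correctly isolated the only nontrivial step: lower semicontinuity of $D_{\max}^\eps$ and $D_H^\eps$.

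There is one genuine subtlety in your sketch of that step. You propose extracting a \emph{norm-convergent} subnet of the near-optimal smoothing states $\rho_n'$, but the unit ball of $\hat\M_*$ is not norm-compact, so this is not available. The paper's argument is slightly more delicate: one passes to a weak$^*$-limit point $\rho'$ in the \emph{full dual} $\hat\M^*$ (where Banach--Alaoglu applies), and then uses the domination $\rho'\le\lambda\hat\sigma$ by the \emph{normal} functional $\hat\sigma$ to rule out any singular part of $\rho'$, forcing $\rho'\in\hat\M_*$. Upper semicontinuity of the fidelity (for weak$^*$ convergence in the first argument and norm convergence in the second) then yields $\rho'\sim^\eps\hat\rho$. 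For $D_H^\eps$ the argument is in fact simpler than you anticipate: take an optimal (or near-optimal) test $X$ for the limit pair $(\hat\rho,\hat\sigma)$ and evaluate it directly on $(\rho_n,\sigma_n)$; no compactness of optimizers is needed. With these adjustments your proof goes through.
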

\begin{proof} The proof is identical to the argument of $D$ in \cite[Lemma 4.4]{gao2022complete} by using data processing inequality and lower semi-continuity.
The lower semi-continuity of $D,D_\al,D_{\max}$ with respect to weak$^*$-topology were proved in \cite{jencova-1}. The next lemma verifies lower semi-continuity for $D_{\max}^\eps$ and $D_H^\eps$ of a fixed $\eps$.
\end{proof}
\begin{lemma}For any $\eps\in(0,1)$,
$D_{\max}^\eps$ and $D_H^\eps$ are lower semi-continuous with respect to the weak$^*$-topology.
\end{lemma}
\begin{proof}For simplicity of notations, we discuss sequences but the argument works for nets.
Let $\M$ be a von Neumann algebra and $\rho_n,\sigma_n\in \S(\M)$ be two sequences of normal states that converge to $\rho$ and $\sigma$ respectively in $\S(\M)$ in norm topology. We show
\[ \liminf_{n}D_{\max}^\eps(\rho_n\|\sigma_n)\ge D_{\max}^\eps(\rho\|\sigma)\pl.\]
We can assume that $ \liminf_{n}D_{\max}^\eps(\rho_n\|\sigma_n)<\infty$. For any $n$ and $\delta>0$, take $\rho_n' \sim ^\eps\rho_n$ such that
 \[\rho_n' \leq \lambda_n \sigma_n \ \text{with} \ \lambda_n\leq 2^{D_{\max}^{\epsilon}(\rho_n\|\sigma_n)+\delta}.\]
 By the weak$^*$-compactness of the unit ball in $\M^*$, $\rho_n'$ admits a weak$^*$-limit point $\rho'\in (\M^*)_+$. Hence for some subsequence $\{n_k\}$, we have $\rho'_{n_k}\to \rho'$ in the weak$^*$-topology, where we can further assume that $D_{\max}^\eps(\rho_{n_k}\|\sigma_{n_k})\to  \liminf_{n}D_{\max}^\eps(\rho_n\|\sigma_n)$. Now we have to show that the weak$^*$-limit  $\rho'\in \M^*$ is in $\S(\M)\subset \M_*$. Note that
 \[\lambda :=  \liminf_{k} \lambda_{n_k}\leq 2^{\lim_k D_{\max}^{\epsilon}(\rho_{n_k}\|\sigma_{n_k})+\delta}=2^{\liminf_{n}D_{\max}^\eps(\rho_n\|\sigma_n)+\delta}<\infty. \]

Since $\sigma_n\rightarrow \sigma$ weakly, we have for every $x\in \M_+$
\[\rho'(x)=\lim _{k} \rho_{n_k}'(x)\leq \liminf_{k} \lambda_{n_k} \sigma_{n_k} (x)=\lambda \sigma(x).\]
Hence from  \cite[pp. 127]{takesaki-1}), it follows that $\rho'\in  \M_*$, so that $\rho'\in \S(\M)$.



 Furthermore, as the fidelity $F(\rho'_n,\rho_n)$ is upper semi-continuous for $\rho_n'\to \rho'$ and $\rho_n\to \rho$ weakly (see \eqref{eq:fe}), we have $\rho'\sim^\eps \rho$. Therefore,
 \[ D_{\max}^\eps(\rho\|\sigma)\le D_{\max}(\rho'\|\sigma)\le \liminf_{n}D_{\max}^\eps(\rho_n\|\sigma_n)+\delta\]
and taking $\delta\to 0$ yields the conclusion.

For $D_H^\eps$, let us define the quantity $\beta_H^\eps(\rho\|\sigma):=\inf\{ \sigma(Q) \pl | \pl 0\le Q\le 1\pl, \rho(Q)\ge 1-\eps \}$. Then $D_H^\eps(\rho\|\sigma)=-\log \beta_H^\eps(\rho\|\sigma)$. Since the unit ball in $\M$ is compact in the weak operator topology,  there exists an $X\in \M$, $0\leq X\leq 1$ such that $$\beta_H^\eps(\rho\|\sigma)=\sigma(X), \ \text{and} \  \rho(X)\geq 1-\eps.$$
For any $\delta\in (0,1)$  setting $X'=(1-\delta)X+\delta 1$, we have
\[\sigma(X')=(1-\delta)\sigma(X)+\delta \sigma(1)< \sigma (X)+ 2\delta \sigma(1),\]
\[\rho(X')=(1-\delta)\rho(X)+\delta \geq (1-\delta)(1-\eps)+\delta> 1-\eps.\]

Now since $\rho_n$ (resp. $\sigma_n$) converges to $\rho$ (resp. $\sigma$) in the weak*-sense,  we can find an $N\in \mathbb{N}$ such that $\forall n\geq N$ we have
$\sigma_n(X')< \sigma (X)+ 2\delta \sigma(1)$ and $\rho_n(X')> 1-\eps$. Hence by definition

\[\beta_H^\eps(\rho_n\|\sigma_n)\leq \sigma_n(X')\leq \beta_H^\eps(\rho\|\sigma) +2\delta \sigma(1),\]
so that  $\liminf_n D_H^\eps(\rho_n\|\sigma_n)\geq -\log (\beta_H^\eps(\rho\|\sigma) +2\delta \sigma(1))$. Letting ${\delta\searrow 0}$ gives the assertion.

\qd

\section{$D_{\max}$ AEP for states} \label{sec: AEP for states}

In this section we prove the following theorem which generalizes the quantum asymptotic equipartition property of \cite{tomamichel} (see also \cite{tomamichel-book}) in the setting of von Neumann algebras. Recall that the cumulative standard Gaussian distribution function is given by  \[\Phi(a)=\int_{-\infty}^a \frac{1}{\sqrt{2\pi}} e^{\frac{-x^2}{2}} dx,\] and the inverse function $\Phi^{-1}(\eps)=\sup\{a\in \R: \Phi(a)\leq \eps\}$. 
\begin{theorem}[AEP for states] \label{AEP for states sf}
Let $\cM$ be a von Neumann algebra and $\rho$ be a state on $\cM$ and $\sigma$ be a normal positive linear functional on $\cM$. Assume that $D(\rho \| \sigma) < \infty, V(\rho \| \sigma) < \infty, T(\rho \| \sigma) < \infty$. Then for any $\eps \in (0,1)$ and $n \geq 1$,
\begin{align}
\label{eq:aep-vn}
\frac{1}{n}D_{\max}^{\sqrt{\eps}}(\rho^{\otimes n} \| \sigma^{\otimes n}) =  D( \rho \| \sigma) -\sqrt{\frac{V(\rho\|\sigma)}{n}}\Phi^{-1}(\eps) + O(\frac{\log n}{n}) \ ,
\end{align}
where the $O(\cdot )$ hides constants that only depend on $\rho, \sigma$ and $\eps$.
\end{theorem}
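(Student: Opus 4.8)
\emph{Proof strategy.} The plan is to deduce \eqref{eq:aep-vn} from the second-order expansion of the hypothesis testing relative entropy, which for von Neumann algebras is available from the $D_H^\eps$-AEP of \cite{ke-Li}: under the hypotheses $D(\rho\|\sigma),V(\rho\|\sigma),T(\rho\|\sigma)<\infty$, for every $\eps\in(0,1)$ one has
\[ D_H^{\eps}(\rho^{\otimes n}\|\sigma^{\otimes n}) = n D(\rho\|\sigma) + \sqrt{n V(\rho\|\sigma)}\,\Phi^{-1}(\eps) + O(\log n)\pl, \]
with the $O(\cdot)$ depending only on $\rho,\sigma,\eps$. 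Because $\Phi^{-1}(1-\eps)=-\Phi^{-1}(\eps)$, the right-hand side of \eqref{eq:aep-vn} is, after multiplying by $n$, exactly the above expansion with the parameter $1-\eps$. Hence it suffices to show that $D_{\max}^{\sqrt\eps}$ and $D_H^{1-\eps}$ differ by $O(\log n)$ on the i.i.d.\ pair $(\rho^{\otimes n},\sigma^{\otimes n})$, and I would obtain this from a single-copy comparison applied to $\rho^{\otimes n},\sigma^{\otimes n}$.

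\emph{Step 1: a single-copy comparison of $D_{\max}^{\sqrt\eps}$ and $D_H^{1-\eps}$.} I would prove that for every von Neumann algebra $\cM$, every state $\rho$ and normal positive functional $\sigma$ on $\cM$ with $s(\rho)\le s(\sigma)$, every $\eps\in(0,1)$ and every sufficiently small $\delta>0$,
\[ D_H^{1-\eps+\delta}(\rho\|\sigma) - g(\delta) \;\le\; D_{\max}^{\sqrt\eps}(\rho\|\sigma) \;\le\; D_H^{1-\eps-\delta}(\rho\|\sigma) + g(\delta)\pl, \qquad g(\delta)=O\!\big(\log\tfrac1\delta\big)\pl, \]
the precise form of $g$ being irrelevant; this is the von Neumann algebra analogue of the finite-dimensional lemma of \cite{Anshu,wilde-2019}. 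I would first establish it for semifinite $\cM$, working with genuine density operators in the noncommutative $L_p$-spaces introduced above. For the right inequality one takes a near-optimal test $Q$ for $D_H^{1-\eps-\delta}(\rho\|\sigma)$ and manufactures from it a subnormalized $\rho'$ with $d(\rho,\rho')\le\sqrt\eps$ and $\rho'\le 2^{\,D_H^{1-\eps-\delta}+g(\delta)}\sigma$, by a cut-off/pinching of $\rho$ against a discretization of the spectrum of $\sigma$ (in the spirit of Hayashi--Tomamichel \cite{toma-haya}) combined with the gentle-measurement lemma, the square root arising from the conversion between the discarded $\rho$-mass and the purified distance. For the left inequality, starting from a near-optimal $\rho'\sim^{\sqrt\eps}\rho$ with $\rho'\le\lambda\sigma$, one uses a spectral projection of $\lambda\sigma-\rho$ as a feasible test, bounding $\sigma(Q)$ by means of $\rho'\le\lambda\sigma$ and $\rho(Q)$ by means of $d(\rho,\rho')\le\sqrt\eps$. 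The type III case then follows by applying the semifinite bound to the finite algebras $\M_n$ of Theorem~\ref{thm:hr} and letting $n\to\infty$ through Proposition~\ref{lemma:ultra}, which covers both $D_{\max}^\eps$ and $D_H^\eps$; as in the proof of that proposition one must check that weak$^*$ limit points of the optimizing substates and tests stay normal (their singular parts are dominated by the normal functional $\sigma$, hence vanish). This step carries all the weight: the classical Anshu--Wilde argument has to be rebuilt without density matrices, either directly through relative modular operators or, as above, by reduction to the semifinite case, and one has to verify that the overhead is genuinely logarithmic in $\delta$, which is exactly what makes the final optimization succeed.

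\emph{Step 2: assembling the AEP.} Applying Step~1 to $\rho^{\otimes n},\sigma^{\otimes n}$ with a sequence $\delta_n\downarrow0$ and substituting the $D_H^\eps$-AEP gives
\[ D_{\max}^{\sqrt\eps}(\rho^{\otimes n}\|\sigma^{\otimes n}) = n D(\rho\|\sigma) + \sqrt{n V(\rho\|\sigma)}\,\Phi^{-1}(1-\eps\pm\delta_n) + O(\log n) + O\!\big(\log\tfrac1{\delta_n}\big)\pl. \]
Since $\Phi^{-1}$ is $C^1$ with bounded derivative in a neighbourhood of the interior point $1-\eps$, we have $\Phi^{-1}(1-\eps\pm\delta_n)=-\Phi^{-1}(\eps)+O(\delta_n)$, so the total error is $O(\sqrt n\,\delta_n)+O(\log n)+O(\log\tfrac1{\delta_n})$. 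Choosing $\delta_n=\tfrac{\log n}{\sqrt n}$ for $n$ large (the finitely many remaining $n$ are absorbed into a constant depending on $\rho,\sigma,\eps$, all the quantities involved being finite; and $\delta_n\to0$ keeps $1-\eps\pm\delta_n$ inside $(0,1)$ for large $n$) makes this error $O(\log n)$. Dividing by $n$ yields \eqref{eq:aep-vn}.
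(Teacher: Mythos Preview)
Your overall architecture---a one-shot comparison between $D_{\max}^{\sqrt\eps}$ and $D_H^{1-\eps}$, applied to $\rho^{\otimes n},\sigma^{\otimes n}$, combined with the Pautrat--Wang $D_H^\eps$-AEP and a Taylor expansion in the slack parameter---is exactly the paper's approach. Two substantive points deserve correction, however.

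\textbf{The $\delta$-shifts in your comparison lemma are in the wrong direction.} As you wrote it, $D_H^{1-\eps+\delta}-g(\delta)\le D_{\max}^{\sqrt\eps}\le D_H^{1-\eps-\delta}+g(\delta)$ with $g(\delta)=O(\log(1/\delta))$ uniform in $\rho,\sigma$ is \emph{false}: already on the i.i.d.\ pair $(\rho^{\otimes n},\sigma^{\otimes n})$ one has $D_H^{1-\eps}-D_H^{1-\eps-\delta}\asymp \sqrt{nV}\cdot\delta$, which is not bounded by any function of $\delta$ alone. The paper proves instead
\[ D_{\max}^{\sqrt\eps}(\rho\|\sigma)\le D_H^{1-\eps}(\rho\|\sigma)+\log\tfrac{1}{1-\eps}\quad\text{and}\quad D_H^{1-\eps-\delta}(\rho\|\sigma)\le D_{\max}^{\sqrt\eps}(\rho\|\sigma)+\log\tfrac{4(1-\eps)}{\delta^2}\pl. \]
Note that the upper bound carries \emph{no} $\delta$ at all, only a constant depending on $\eps$; the $\delta$-slack and the $\log(1/\delta)$ overhead enter only in the lower bound. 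With these correct inequalities your Step~2 then runs as the paper does: the upper bound follows immediately, and for the lower bound one takes $\delta=1/\sqrt n$ and Taylor expands $\Phi^{-1}$ around $1-\eps$.

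\textbf{The proof technique you sketch for Step~1 is not the paper's.} For the upper bound the paper does not discretize the spectrum of $\sigma$ or start from an optimal $D_H$-test. It first writes $2^{D_{\max}^\eps(\rho\|\sigma)}=\sup_{X\ge 0,\ \sigma(X)\le 1}\ \inf_{\rho'\sim^\eps\rho}\rho'(X)$ via Sion minimax, then for fixed $X$ passes to the \emph{commutative} subalgebra generated by $X$ via the trace-preserving conditional expectation (available in the finite case), and bounds using the information spectrum $D_s^{1-\eps}$ of the conditioned pair. This is what yields the clean $\log\tfrac{1}{1-\eps}$ overhead without any $\delta$. For the lower bound the paper uses data processing of the fidelity under the binary measurement $\{X,1-X\}$ for the optimal $D_H$-test $X$, not a spectral projection of $\lambda\sigma-\rho$. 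The extension to type~III via Haagerup reduction is as you say.
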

Theorem \ref{AEP for states sf} generalizes the result of Khatri et al. in \cite{wilde-2019} where the authors prove the second order estimate for $D_{\max}^{\eps}(\rho\|\sigma)$ on bounded linear operators on an infinite-dimensional Hilbert space. We obtain the same result in a general von Neumann algebra encapsulating scenarios like continuous variable systems and quantum field theoretic settings, thus applicable in a wider class of quantum systems.

In order to prove this result, we combine the asymptotic equipartition theorem for quantum hypothesis entropy $D_H^\eps$ in von Neumann algebras established by Pautrat and Wang~\cite{ke-Li} together with a generalization of the known relation between  $D_H^\eps$ and $D^{\eps}_{\max}$ established by Anshu \emph{et al} \cite{Anshu} and Khatri \emph{et al}  \cite{wilde-2019} to states on general von Neumann algebras.

Let us first state the AEP for hypothesis testing.
 \begin{theorem}[Pautrat-Wang, 2021] \label{Pautrat-Wang}
 Let $\cM$ be a von Neumann algebra and $\rho$ be a state on $\cM$ and $\sigma$ be a normal positive linear functional on $\cM$. Assume that $D(\rho \| \sigma) < \infty, V(\rho \| \sigma) < \infty, T(\rho \| \sigma) < \infty$. Then for any $\eps \in (0,1)$ and $n \geq 1$,
\begin{align}
\label{eq:aep-vn}
\frac{1}{n}D_{H}^{\eps}(\rho^{\otimes n} \| \sigma^{\otimes n}) =  D( \rho \| \sigma) +\sqrt{\frac{V(\rho\|\sigma)}{n}}\Phi^{-1}(\eps) + O(\frac{\log n}{n}) \ ,
\end{align}
where the $O(\cdot)$ hides constants that only depend on $\rho$ and $\sigma$.
 \end{theorem}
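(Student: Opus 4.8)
The plan is to transport the entire von Neumann algebraic problem to a single classical probability measure built from the relative modular operator, and then to run the classical second-order Stein's lemma analysis (Berry--Esseen) on that measure. Concretely, let $h:=\log\Delta(\rho,\sigma)$ be the (generally unbounded) self-adjoint logarithm of the relative modular operator from Section~\ref{subs-ent. in gen vN}, let $\xi_\rho\in P$ be the standard implementing vector of $\rho$, and introduce the spectral distribution $\mu(B)=\langle \xi_\rho | E_h(B) | \xi_\rho \rangle$ on $\R$, where $E_h$ is the spectral measure of $h$. By the definitions of $D,V,T$ in Section~\ref{subs-ent. in gen vN}, the standing hypotheses $D,V,T<\infty$ say exactly that $\mu$ has finite mean $D(\rho\|\sigma)$, finite variance $V(\rho\|\sigma)$, and a finite third moment controlled by $T(\rho\|\sigma)$. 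The decisive structural fact is that the relative modular operator tensorizes, $\Delta(\rho^{\otimes n},\sigma^{\otimes n})=\Delta(\rho,\sigma)^{\otimes n}$, with $\xi_{\rho^{\otimes n}}=\xi_\rho^{\otimes n}$; hence $h^{(n)}:=\log\Delta(\rho^{\otimes n},\sigma^{\otimes n})=\sum_{i=1}^n h_i$ is a sum of $n$ commuting copies of $h$, and the spectral distribution of $h^{(n)}$ in the vector state $\xi_{\rho^{\otimes n}}$ is precisely the $n$-fold convolution $\mu^{*n}$.

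Next I would pin $D_H^\eps$ to the tail of $\mu^{*n}$. For the achievability direction I would take as test the spectral quantity $Q_n=E_{h^{(n)}}\big((\gamma_n,\infty)\big)$, realized inside $\M^{\otimes n}$ through the bounded functional calculus of $\Delta$, with threshold $\gamma_n=\sup\{\gamma:\ \mu^{*n}((\gamma,\infty))\ge 1-\eps\}$; then $\rho^{\otimes n}(Q_n)=\mu^{*n}((\gamma_n,\infty))\ge 1-\eps$ by construction, while the modular relation between $\rho^{\otimes n}$ and $\sigma^{\otimes n}$ gives $\sigma^{\otimes n}(Q_n)\le 2^{-\gamma_n}$, exactly as in the commuting Neyman--Pearson computation. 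A matching converse, via the optimality of the likelihood-ratio test (a Hayashi--Nagaoka type operator inequality), shows that no feasible test beats this threshold by more than an $O(\log n)$ term. Together these yield
\[
D_H^\eps(\rho^{\otimes n}\|\sigma^{\otimes n})=\gamma_n+O(\log n).
\]

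It then remains to evaluate $\gamma_n$, which is a purely classical computation: since $T(\rho\|\sigma)<\infty$ supplies a finite third moment, Berry--Esseen shows that $\mu^{*n}((\gamma,\infty))$ is within $O(1/\sqrt n)$ of the Gaussian tail with mean $nD(\rho\|\sigma)$ and variance $nV(\rho\|\sigma)$, so inverting at level $\eps$ gives $\gamma_n=nD(\rho\|\sigma)+\sqrt{nV(\rho\|\sigma)}\,\Phi^{-1}(\eps)+O(1)$. Dividing by $n$ and absorbing the $O(1)$ into the $O(\log n)$ from the previous step produces the stated expansion.

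I expect the heart of the difficulty to be entirely in the operator-algebraic reduction, not in the classical asymptotics. In particular, justifying the tensorization and spectral calculus of the unbounded $\Delta$ in the non-tracial Type~III regime (where $\rho$ and $\sigma$ cannot be simultaneously diagonalized) and, most delicately, turning the spectral projection $E_{h^{(n)}}((\gamma_n,\infty))$---which a priori lives on $\cH$ rather than in $\M^{\otimes n}$---into a genuine test in the algebra while controlling both its $\rho$- and $\sigma$-expectations, is where the real work lies. As an alternative route to the Type~III case one could first pass to the semifinite setting by Haagerup's reduction as in Proposition~\ref{lemma:ultra}, at the cost of having to control the interchange of the Haagerup limit with the large-$n$ limit and the resulting second-order term.
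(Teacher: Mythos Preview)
The paper does not give its own proof of this theorem. It is quoted as an external result of Pautrat and Wang, and the only justification offered is the remark immediately following the statement: the cited article generalizes Ke Li's key lemma to von Neumann algebras, after which the second-order expansion follows as in the finite-dimensional analysis of Datta--Pautrat--Rouz\'e. So there is nothing in the paper to compare your argument against beyond that one-sentence pointer.

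That said, your sketch is broadly in line with the actual Pautrat--Wang approach: reduce to the spectral distribution of $\log\Delta(\rho,\sigma)$ in the vector state $\xi_\rho$, exploit tensorization to get i.i.d.\ structure, and run Berry--Esseen. You also correctly isolate the genuine obstruction, namely that the spectral projections $E_{h^{(n)}}((\gamma_n,\infty))$ of the relative modular operator live in $B(\cH)$ and not a priori in $\cM^{\otimes n}$, so they are not admissible tests in the definition of $D_H^\eps$. This is precisely the point that is \emph{not} resolved by your sketch and that the ``key lemma'' of Ke Li (in Pautrat--Wang's von Neumann algebraic generalization) is designed to handle: one does not use the raw spectral projection of $\Delta$ as a test, but rather constructs a test in $\cM^{\otimes n}$ whose type-I and type-II errors are controlled by integrals against the spectral measure of $\Delta$. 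Your proposal flags this as ``where the real work lies'' but does not indicate how to carry it out; the Haagerup-reduction alternative you mention would also require nontrivial uniformity in $n$ that is not addressed. So as a proof the proposal has a real gap at exactly the step you yourself identify as delicate, though as a roadmap it points in the right direction.
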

\begin{rem}{\rm The article \cite{ke-Li} is devoted to generalizing a key lemma by Ke Li \cite{ke-Li-1} to the framework of general von Neumann algebras. With this key lemma, the above theorem follows as in \cite{datta2016second}.}
\end{rem}

To obtain the relation between $D_{H}^{\eps}$ and $D_{\max}^{\eps}$ we follow the same proof as \cite{Anshu}. First, the following expression can be obtained with the same proof as in the finite dimensional case.

\begin{lemma}\label{alt. max} Let $\rho\in S(\M)$ be a state and $\sigma\in \M_*^+$ be a positive linear functional. For any $\eps\in (0,1)$ and $D_{\max}^{\eps}(\rho\|\sigma)<\infty$,
\[D_{\max}^{\eps}(\rho\|\sigma)=\log \sup_{X\geq 0}\inf_{{\rho'}: d(\rho, {\rho'})\leq \eps}\{ {\rho'}(X): \sigma(X)\leq 1\},\]
where the supremum is over all positive operators $X\in \M^+$ and the infimum is taken over all normal positive linear functionals ${\rho'}$ that lie in the $\eps$-ball of the functional $\rho$ with respect to the purified distance $d(\cdot,\cdot)$.
\end{lemma}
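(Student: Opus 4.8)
The plan is to follow the finite-dimensional argument of \cite{Anshu} verbatim, replacing matrix-level manipulations with statements about positive linear functionals and elements of the von Neumann algebra $\M$ (and, where needed, its unit ball), and invoking weak$^*$-compactness in place of norm-compactness. First I would unfold the definition
\[
D_{\max}^{\eps}(\rho\|\sigma)=\inf_{\rho'\sim^\eps\rho}\ \log\inf\{\lambda>0\ :\ \lambda\sigma-\rho'\in\M_*^+\}\ ,
\]
and observe that for a fixed substate $\rho'$ with $s(\rho')\le s(\sigma)$ one has $D_{\max}(\rho'\|\sigma)=\log\inf\{\lambda:\lambda\sigma\ge\rho'\}$, and that $\inf\{\lambda:\lambda\sigma\ge\rho'\}=\sup_{X\ge 0,\ \sigma(X)\le 1}\rho'(X)$. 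This last identity is the one genuinely analytic point at the ``inner'' level: the inequality $\le$ is immediate since $\lambda\sigma\ge\rho'$ forces $\rho'(X)\le\lambda\sigma(X)\le\lambda$ for all admissible $X$; for $\ge$ one runs a Hahn--Banach/separation argument in the (real) ordered Banach space $\M_{*,\mathrm{sa}}$ — if $\lambda_0:=\inf\{\lambda:\lambda\sigma\ge\rho'\}$, then $\rho'-\lambda_0\sigma$ lies on the boundary of the positive cone, and one separates it from the interior of the cone by a positive functional on $\M_*$, i.e.\ (after identifying the dual) a positive element $X\in\M$, normalized so that $\sigma(X)\le1$, witnessing $\rho'(X)$ arbitrarily close to $\lambda_0$. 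Taking the exponential, $D_{\max}(\rho'\|\sigma)=\log\sup_{X\ge0,\ \sigma(X)\le1}\rho'(X)$.

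Next I would substitute this into the smoothing infimum, obtaining
\[
D_{\max}^{\eps}(\rho\|\sigma)=\log\inf_{\rho':d(\rho,\rho')\le\eps}\ \sup_{X\ge0:\sigma(X)\le1}\rho'(X)\ ,
\]
and the remaining content of the lemma is the exchange of $\inf_{\rho'}$ and $\sup_X$, i.e.\ a minimax theorem. Here the function $(\rho',X)\mapsto\rho'(X)$ is bilinear — hence affine, and weak$^*$-continuous in each variable separately — the smoothing ball $\{\rho':d(\rho,\rho')\le\eps\}\subseteq\S_\le(\M)$ is convex and weak$^*$-compact (convexity of the purified-distance ball follows from joint concavity of fidelity, and weak$^*$-compactness from the fact that it is a weak$^*$-closed subset of the unit ball of $\M_*$, using upper semicontinuity of $F(\cdot,\rho)$ as already exploited in the proof of Proposition \ref{lemma:ultra}), and the constraint set $\{X\in\M:X\ge0,\ \sigma(X)\le1\}$ is convex. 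One can therefore apply Sion's minimax theorem (or Ky Fan's) to interchange the two operations, yielding exactly the claimed formula with $\sup_{X\ge0}\inf_{\rho':d(\rho,\rho')\le\eps}\{\rho'(X):\sigma(X)\le1\}$.

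I expect the main obstacle to be the verification of the compactness and continuity hypotheses needed to apply the minimax theorem in the infinite-dimensional, non-tracial setting — specifically, confirming that the $\eps$-ball around $\rho$ in purified distance is weak$^*$-compact in $\M_*$ (one must be careful that weak$^*$-limits of substates stay substates and do not develop a singular part, which is where the domination by the normal functional $\sigma$, together with the finiteness assumption $D_{\max}^\eps(\rho\|\sigma)<\infty$, is used, exactly as in Proposition \ref{lemma:ultra}), and that $F(\cdot,\rho)$ is weak$^*$ upper semicontinuous so that the ball is weak$^*$-closed. The inner Hahn--Banach step and the bilinearity/affineness of the payoff are routine once the setup is fixed. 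A secondary point to handle cleanly is the reduction to the support: one restricts attention to $X$ with $X\le s(\sigma)$ (equivalently works in $s(\sigma)\M s(\sigma)$) so that $\sigma(X)\le1$ is a genuine normalization and the suprema are finite precisely under the standing hypothesis $D_{\max}^\eps(\rho\|\sigma)<\infty$.
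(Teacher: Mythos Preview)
Your proposal is correct and follows essentially the same route as the paper: first rewrite $D_{\max}(\rho'\|\sigma)$ as $\log\sup\{\rho'(X):X\ge 0,\ \sigma(X)\le 1\}$, then interchange the smoothing infimum and the supremum over $X$ via Sion's minimax theorem using convexity of both constraint sets. You actually supply more detail than the paper does---in particular the Hahn--Banach separation for the inner identity and the discussion of weak$^*$-compactness of the $\eps$-ball (handling possible singular parts as in Proposition~\ref{lemma:ultra})---whereas the paper simply cites \cite{Anshu} and invokes Sion's theorem with convexity alone.
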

\begin{proof}
Following the proof of \cite{Anshu}, it is simple to see, exactly as in the finite dimensional case that
\begin{align}\label{eq:equality}\inf \{\lambda\geq 0 | \rho\leq \lambda \sigma\}=\sup_{X, X\geq 0}\inf\{\lambda | \rho(X)\leq \lambda \sigma(X) \}.\end{align}

Normalizing $X$ with $\sigma(X)\leq 1$, we see
\[2^{D_{\max}(\rho\|\sigma)}=\sup\{\rho(X)| X\geq 0, \sigma(X)\leq 1\}.\]
The statement of the lemma now follows from swapping the infimum and the supremum by Sion's minimax theorem, as $\{X \geq 0 : \sigma(X)\leq 1\}$ is a convex set in $\M$ and
 $\{{\rho'}\in  \S_{\leq}(\M)\ | \ d({\rho'}, \rho)\leq \eps\} $ is a compact and convex in $\M_*$ (This is seen from the joint upper semi-continuity and
concavity of the fidelity function).
\end{proof}

\begin{lemma} \label{max and hyp rel}
Let $\rho\in S(\M)$ be a state and $\sigma\in (\M_*)_+$ be a positive linear functional. We have for any $\eps, \delta\in (0,1)$ with $\eps+\delta<1$, and $D_{\max}^{\sqrt{\eps}}(\rho\|\sigma)<\infty,$
\begin{align}\label{eq:dhdmax}&D_{\max}^{\sqrt{\eps}}(\rho\|\sigma)\leq D_{H}^{1-\eps}(\rho\|\sigma) +\log (\frac{1}{1-\eps})\pl, \\ &D_{H}^{1-\eps-\delta}(\rho\|\sigma)\leq  D_{\max}^{\sqrt{\eps}}(\rho\|\sigma)+\log(\frac{4(1-\epsilon)}{\delta^2})\pl. \label{eq:dhdmax2}
\end{align}
\end{lemma}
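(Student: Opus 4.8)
The plan is to establish the two inequalities \eqref{eq:dhdmax} and \eqref{eq:dhdmax2} separately, in each case starting from the variational expression for $D_{\max}^{\sqrt{\eps}}$ given by Lemma~\ref{alt. max} and the definition of $D_H^{\eps}$ as $-\log\beta_H^\eps$, where $\beta_H^\eps(\rho\|\sigma)=\inf\{\sigma(Q):0\le Q\le 1,\ \rho(Q)\ge 1-\eps\}$. Everything should mirror the finite-dimensional argument of Anshu \emph{et al.}~\cite{Anshu}, with the only delicate points being (i) that the optimizing operators live in a von Neumann algebra rather than a matrix algebra, so I must be careful that the relevant sets are weak$^*$-compact / convex and that infima are attained (or approached), and (ii) that the smoothing now runs over normal positive functionals $\rho'$ with $d(\rho,\rho')\le\sqrt\eps$, so I need the relation between purified distance and the condition $\rho'(Q)\ge 1-\eps$.

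For \eqref{eq:dhdmax}: take any feasible $Q$ for $\beta_H^{1-\eps}(\rho\|\sigma)$, i.e. $0\le Q\le1$ and $\rho(Q)\ge\eps$. Set $X=Q/\sigma(Q)$, so $\sigma(X)=1$. I want to produce, for every such $X$, a smoothed state $\rho'$ with $d(\rho,\rho')\le\sqrt\eps$ and $\rho'(X)$ small; by Lemma~\ref{alt. max} this controls $D_{\max}^{\sqrt\eps}$ from above. The natural candidate is the ``pinched/truncated'' functional: let $p$ be (an approximation of) the spectral projection of $Q$ associated to values close to $1$, and take $\rho'=\frac{p\rho p}{\rho(p)}$ or a sub-normalized version thereof — in the algebra language, $\rho'(\cdot)=\rho(p\,\cdot\,p)$ rescaled, where $p$ is chosen so that $\rho(p)\ge\eps$ hence $d(\rho,\rho')^2=1-F(\rho,\rho')^2\le 1-\rho(p)\le 1-\eps$... wait, I need $\le\eps$; the correct choice (following \cite{Anshu}) is to keep the part of $\rho$ where $Q$ is large, so $\rho'$ has fidelity at least $\sqrt{\rho(Q)}\ge\sqrt\eps$ with $\rho$, giving $d(\rho,\rho')\le\sqrt{1-\eps}$, but we want the ball of radius $\sqrt\eps$, so actually feasibility of $Q$ should be $\rho(Q)\ge 1-\eps$ and then $\rho'(X)=\rho'(Q)/\sigma(Q)\le 1/\sigma(Q)=1/\beta$, yielding $D_{\max}^{\sqrt\eps}(\rho\|\sigma)\le -\log\beta+\log\frac{1}{1-\eps}$ after accounting for the sub-normalization factor $\rho(Q)\ge 1-\eps$. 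I will record this bookkeeping carefully; the key analytic input is that the spectral projection $p=\mathbf 1_{[1-\delta',1]}(Q)$ lies in $\M$ and that $\rho(p\,\cdot\,p)$ is a normal positive functional, both automatic in a von Neumann algebra.

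For \eqref{eq:dhdmax2}: conversely, start from a near-optimal $X\ge0$ with $\sigma(X)\le1$ and $\rho'(X)\le 2^{D_{\max}^{\sqrt\eps}(\rho\|\sigma)}$ for all $\rho'$ in the $\sqrt\eps$-ball — by Lemma~\ref{alt. max} with the sup and inf swapped (Sion), such an $X$ exists up to arbitrarily small error. From $X$ I must manufacture a hypothesis test $Q$ with $0\le Q\le1$, $\rho(Q)\ge 1-\eps-\delta$, and $\sigma(Q)$ comparably small. The standard device is $Q=\mathbf 1_{\{X\le t\}}$ (spectral projection of $X$) for a threshold $t$ chosen via a Markov/Chebyshev-type argument: on the region $\{X>t\}$ the value of $\rho(X)$ is forced to be large, but since $\rho$ can be smoothed within distance $\sqrt\eps$ we only get $\rho(\{X>t\})\lesssim \eps + (\text{something}/t)$, and tuning $t\asymp 4(1-\eps)/\delta^2$ produces the claimed additive constant $\log\frac{4(1-\eps)}{\delta^2}$. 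Here the purified-distance-to-trace-distance relation (or rather fidelity vs. $\rho(p)$ inequalities) from Section~\ref{vN}, valid for sub-states on general $\M$, is what lets me pass from ``$\rho'(X)$ small for all nearby $\rho'$'' to ``$\rho(X\le t)$ large'' at the cost of $\eps+\delta$ in the testing parameter.

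The main obstacle I anticipate is purely the von~Neumann-algebraic bookkeeping in the second inequality: in finite dimensions one freely uses eigen-decompositions and attained optima, whereas here I must justify that the optimal $X$ exists (via weak$^*$-compactness of $\{X\ge0:\sigma(X)\le1,\ X\le c\mathbf 1\}$, after an a~priori bound $X\le c\mathbf 1$ coming from $D_{\max}^{\sqrt\eps}<\infty$), that its spectral projections are the right test operators, and that the purified-distance smoothing step goes through with sub-normalized functionals. None of these should present genuine difficulty — each is a routine application of the standard form machinery and the fidelity estimates already recalled — but they are what separates the general statement from the finite-dimensional one, so I will spell them out rather than merely cite \cite{Anshu}.
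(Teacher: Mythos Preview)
Your plan for \eqref{eq:dhdmax} has the logical structure backwards. Lemma~\ref{alt. max} reads $D_{\max}^{\sqrt\eps}(\rho\|\sigma)=\log\sup_{X}\inf_{\rho'}\rho'(X)$, so to bound it from above you must, for \emph{every} $X\ge0$ with $\sigma(X)\le1$, exhibit a nearby $\rho'$ with $\rho'(X)$ small. You instead start from a hypothesis test $Q$ and set $X=Q/\sigma(Q)$; this produces only very special $X$'s and gives no control over the supremum. Moreover, your candidate $\rho'$ (pinched by a spectral projection of $Q$) lives where $Q$ is \emph{large}, so $\rho'(X)=\rho'(Q)/\sigma(Q)$ is close to $1/\sigma(Q)$---the wrong direction. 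The paper's proof fixes an \emph{arbitrary} $X$, passes to the commutative subalgebra $\A$ it generates via the trace-preserving conditional expectation $\E:\M\to\A$ (this is where finiteness of $\M$ is used), applies data processing $D_H^{1-\eps}(\rho\|\sigma)\ge D_H^{1-\eps}(\E(\rho)\|\E(\sigma))\ge D_s^{1-\eps}(\E(\rho)\|\E(\sigma))=:K$, and takes $P\in\A$ to be the spectral projection $\{\E(\rho)\le 2^{K+\eta}\E(\sigma)\}$. The point is that $P$ commutes with $X$, so $\widetilde\rho=P\rho P/\tau(P\rho)$ satisfies both $d(\rho,\widetilde\rho)\le\sqrt\eps$ (from $\tau(P\rho)>1-\eps$) and $\tau(X\widetilde\rho)\le 2^{K+\eta}/(1-\eps)$ (from the bimodule property of $\E$). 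The extension to general $\M$ is not ``routine bookkeeping'': the conditional expectation $\E$ need not exist without a trace, and the paper invokes Haagerup reduction (Proposition~\ref{lemma:ultra}) to pass from the finite case to the general one. Your plan omits both the $\A$-conditional-expectation idea and the Haagerup step.

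For \eqref{eq:dhdmax2} your reading of the minimax is also off: swapping sup and inf in Lemma~\ref{alt. max} yields an \emph{optimal $\rho'$} (valid for all $X$), not an $X$ with $\rho'(X)\le 2^{D_{\max}^{\sqrt\eps}}$ for all $\rho'$ in the ball. The paper's argument here is much more direct than your Chebyshev/threshold route and works verbatim in any von Neumann algebra: take the optimal $\widetilde\rho\le 2^{D_{\max}^{\sqrt\eps}}\sigma$ with $F(\rho,\widetilde\rho)\ge\sqrt{1-\eps}$ and the optimal test $0\le X\le1$ for $D_H^{1-\eps-\delta}$, then apply fidelity data processing to the measurement $\{X,1-X\}$ to get $(\sqrt{1-\eps}-\sqrt{1-\eps-\delta})^2\le 2^{D_{\max}^{\sqrt\eps}-D_H^{1-\eps-\delta}}$, and finish with $\sqrt{1-\eps}-\sqrt{1-\eps-\delta}\ge\delta/(2\sqrt{1-\eps})$. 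No spectral thresholding, no compactness arguments, and no reduction are needed for this direction.
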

\begin{proof}
We first prove~\eqref{eq:dhdmax} for a finite von Neumann algebra $\M$, which proceeds similarly as \cite{Anshu} and \cite{wilde-2019}.
Note that since the von Neumann algebra is tracial, we can think of $\rho, \sigma$ as elements in $L_1(\M)$ rather than functionals in $\M_*$.
By Lemma \ref{alt. max} if we fix an element $X\geq 0$ in $\M$ such that $\sigma(X)=\tau(X\sigma)\leq 1$, it suffices to find a $\widetilde{\rho}$ with $d(\rho, \widetilde{\rho})\leq \sqrt{\eps}$ such that
\[\widetilde{\rho}(X)\leq 2^{D_{H}^{1-\eps}(\rho\|\sigma) }/ (1-\eps).\]
To this end, we consider the commutative subalgebra $\A$ generated by $X$. Since $(\M, \tau)$ is a finite von Neumann algebra, there exists a trace-preserving conditional expectation
$\E:\M\rightarrow \A$ onto $\A$ given by
\[\tau(xy)=\tau(x\E(y))\pl, \pl  \forall\pl x\in \A\pl, \pl y\in \M\pl.\] This is a CPTP map satisfying the bimodule property
\[\E(axb)=a\E(x)b, \ \text{for} \ a, b\in\A,x\in \M.\]
By the data processing inequality for the entropy $D_H^\eps(\cdot\|\cdot)$,
\begin{equation}\label{eq-inf}
D_{H}^{1-\eps}(\rho\|\sigma)\geq D_{H}^{1-\eps}(\E(\rho)\|\E(\sigma))\geq D_{s}^{1-\eps}(\E(\rho)\|\E(\sigma))=:K,
\end{equation}
where the last quantity is called $\eps$-information spectrum relative entropy for a density $\gamma,$ and a  positive element $\omega$ as defined in section \ref{entropies and channels} as
\[D_{s}^{\eps}(\gamma\|\omega)=\sup \{\lambda\in \R | \tau(\gamma \{\gamma\leq 2^\lambda \omega \})\leq \eps\},\]
where $ \{ \gamma\leq 2^\lambda \omega\}$ denotes the projection onto the positive part of  $2^\lambda \omega -\gamma$. Note that the last inequality of equation \eqref{eq-inf} follows from choosing $Q=1- \{\gamma\leq 2^\lambda \omega \}$ in the definition of $D_{H}^{1-\eps}(\gamma\|\omega)$, as was shown in the Lemma 12 in \cite{toma-haya}.

Now for any $\eta>0$, we define $P$ to be the projection on the positive part of $(2^{K+\eta} \E(\sigma)-\E(\rho))$ and $Q=1-P$. From the definition of $D_{s}^{1-\eps}(\E(\rho)\|\E(\sigma))$, it follows that $\tau(P\E(\rho))>1-\eps$.  Note that $P,Q\in \A$ and hence
\[\tau(Q\rho)=\tau(\E(Q)\rho)=\tau(Q\E(\rho))=1-\tau(P\E(\rho))\leq \eps.\]
Now defining $\widetilde{\rho}=\frac{P\rho P}{\tau(P\rho)}$, one can verify that
\[d(\rho, \widetilde{\rho})\leq \sqrt{\eps}.\]
Now we obtain
\begin{align*}\tau(X\widetilde{\rho})&=\frac{\tau(X P\rho P)}{1-\tau(P\rho)}
=\frac{\tau(\E(X P\rho P))}{1-\tau(P\rho)}
=\frac{\tau(X P\E(\rho) P)}{1-\tau(P\rho)}
\leq \frac{\tau(2^{K+\eta}X P\E(\sigma) P)}{1-\eps}\\
&= \frac{2^{K+\eta}}{1-\eps}\tau(X \E(P\sigma P) )
= \frac{2^{K+\eta}}{1-\eps}\tau(X P\sigma P )
\leq \frac{2^{K+\eta}}{1-\eps}\tau(X \sigma )
\leq \frac{2^{D_{H}^{1-\eps}(\rho\|\sigma)+\eta} }{1-\eps}.
\end{align*}
Here we have repeatedly used the fact that $X, P\in \A$ and hence \[\E(XP\rho P)=XP\E(\rho)P\pl .
\]
After taking logarithm and infimum over all $\widetilde{\rho}$, the above calculations yields
\[D_{\max}^{\sqrt{\eps}}(\rho\|\sigma)\leq D_{H}^{1-\eps}(\rho\|\sigma) +\eta+ \log_2 (\frac{1}{1-\eps}),\]
As this holds for every $\eta>0$, the desired assertion follows with the assumption that $\M$ is finite.
The case for general von Neumann algebras follows from the Haagerup reduction method \ref{Haagerup reduction}.

The second inequality of the lemma follows directly from the argument given in \cite{Anshu}. Indeed, choose the optimal $\widetilde{\rho}$ for $D_{\max}^{\sqrt{\eps}}(\rho\|\sigma)$ and optimal $0\leq X\leq 1$ for $D_{H}^{1-\eps-\delta}(\rho\|\sigma)$. Then by data processing inequality for fidelity and applied to the measurement $\{X, 1-X\}$, it holds that
\[\log (\sqrt{1-\eps}-\sqrt{1-\eps-\delta})^2\leq D_{\max}^{\sqrt{\eps}}(\rho\|\sigma)-D_{H}^{1-\eps-\delta}(\rho\|\sigma).\]
Furthermore, using the relation $\sqrt{1-\eps}-\sqrt{1-\eps-\delta}\geq \frac{\delta}{2\sqrt{1-\eps}}$.
\end{proof}

\begin{proof}[Proof of Theorem \ref{AEP for states sf}:]
Using Lemma \ref{max and hyp rel} and Theorem \ref{Pautrat-Wang} the proof follows easily. Indeed,
exploiting the first inequality in the previous lemma we get
\[\frac{1}{n}D_{\max}^{\sqrt{\eps}}(\rho^{\otimes n}\|\sigma^{\otimes n})\leq \frac{1}{n}D_{H}^{1-\eps}(\rho^{\otimes n}\|\sigma^{\otimes n})+\frac{1}{n}\log_2(1/1-\eps).\]
Now use Theorem \ref{Pautrat-Wang} to obtain
\[\frac{1}{n}D_{\max}^{\sqrt{\eps}}(\rho^{\otimes n}\|\sigma^{\otimes n})\leq D( \rho \| \sigma) -\sqrt{\frac{V(\rho\|\sigma)}{n}}\Phi^{-1}(\eps) + O(\frac{\log n}{n}).\]
Here we use the fact $\Phi^{-1}(1-\eps)=-\Phi^{-1}(\eps)$.

For the other direction we use the second inequality in Lemma \ref{max and hyp rel}
\[\frac{1}{n}D_{\max}^{\sqrt{\eps}}(\rho^{\otimes n}\|\sigma^{\otimes n})\geq \frac{1}{n}D_{H}^{1-\eps-\delta}(\rho^{\otimes n}\|\sigma^{\otimes n})- \frac{1}{n}\log_2(\frac{4(1-\epsilon)}{\delta^2}).\]
Then again using Theorem \ref{Pautrat-Wang} we obtain

\[\frac{1}{n}D_{\max}^{\sqrt{\eps}}(\rho^{\otimes n}\|\sigma^{\otimes n})\geq D( \rho \| \sigma) +\sqrt{\frac{V(\rho\|\sigma)}{n}}\Phi^{-1}(1-\eps-\delta) + O(\frac{\log n}{n}) - \frac{1}{n}\log_2(\frac{4(1-\epsilon)}{\delta^2}).\]
Now note that the function $\Phi^{-1}$ is continuously differentiable and choosing $\delta=1/\sqrt{n}$, and estimating by Taylor's theorem we obtain
\[\frac{1}{n}D_{\max}^{\sqrt{\eps}}(\rho^{\otimes n}\|\sigma^{\otimes n})\geq D( \rho \| \sigma) -\sqrt{\frac{V(\rho\|\sigma)}{n}}\Phi^{-1}(\eps) + O(\frac{\log n}{n}).\]
\end{proof}

We end this section by providing the following proposition which asserts that the finiteness of $D(\rho\|\sigma), V(\rho\|\sigma)$ and $T(\rho\|\sigma)$ once we assume that $D_{\max} (\rho\|\sigma)<\infty$.

\begin{prop} \label{lemma:appl}
Let $D_{\max} (\rho\|\sigma)<\infty$ for $\rho\in \S(\M)$ and $\sigma\in \M_*^+$. Then it follows that $D(\rho\|\sigma), V(\rho\|\sigma)$ and $T(\rho\|\sigma)<\infty$. Moreover,
\[V(\rho\|\sigma)\leq (\log (3+   D_{\max}(\rho\|\sigma)^{1/2} )^2, T(\rho||\sigma) \le \left(\log  (D_{\max}(\rho||\sigma)+2D_{\max}(\rho||\sigma)^{1/2}+6)\right)^3\]


\end{prop}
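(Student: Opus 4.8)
The plan is to express the three quantities through the relative modular operator, reduce them to moments of a single probability measure on $\R$, and then estimate those moments using $D_{\max}<\infty$. Work in the standard form, let $\Delta=\Delta(\rho,\sigma)$ and let $\xi_\rho\in P$ be the vector implementing $\rho$. Write $\lambda=2^{D_{\max}(\rho\|\sigma)}$; finiteness of $D_{\max}$ means exactly that $\rho\le\lambda\sigma$ as normal functionals, which in particular forces $s(\rho)\le s(\sigma)$, so $\xi_\rho$ lies in the closure of the domain of $\log\Delta$ and, for Borel $B\subseteq\R$,
\[\mu(B):=\langle\xi_\rho| E_B(\log\Delta)|\xi_\rho\rangle\]
defines a Borel probability measure on $\R$ with $D(\rho\|\sigma)=\int x\,d\mu$, $V(\rho\|\sigma)=\int x^2\,d\mu$, and $T(\rho\|\sigma)=\int x^3\,d\mu$. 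Thus it suffices to show $\mu$ has light tails and to compute the corresponding moment bounds.

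For the right tail I would use the Petz--R\'enyi divergence in the range $\alpha\in(1,2]$. From $\sigma\ge\lambda^{-1}\rho$ and the antitonicity of the relative modular operator in its second argument we get the operator inequality $\Delta=\Delta(\rho,\sigma)\le\lambda\,\Delta(\rho,\rho)=\lambda\Delta_\rho$; since $t\mapsto t^{\alpha-1}$ is operator monotone for $\alpha-1\in(0,1]$ and $\Delta_\rho\xi_\rho=\xi_\rho$, this yields
\[\int 2^{(\alpha-1)x}\,d\mu=\langle\xi_\rho|\Delta^{\alpha-1}|\xi_\rho\rangle\le\lambda^{\alpha-1}\langle\xi_\rho|\Delta_\rho^{\alpha-1}|\xi_\rho\rangle=\lambda^{\alpha-1},\]
i.e. the generating function $\int 2^{sx}\,d\mu$ is at most $2^{sD_{\max}(\rho\|\sigma)}$ for every $s\in(0,1]$ (equivalently $\tilde D_{1+s}(\rho\|\sigma)\le D_{\max}(\rho\|\sigma)$). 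By Markov's inequality this gives the subexponential bound $\mu(\{x\ge t\})\le 2^{-(t-D_{\max}(\rho\|\sigma))}$ for $t\ge D_{\max}(\rho\|\sigma)$, and $\mu(\{x\ge t\})\le 1$ otherwise. The left tail requires nothing beyond $s(\rho)\le s(\sigma)$: the identity $\xi_\rho=\Delta^{1/2}\xi_\sigma$ gives $\langle\xi_\rho|f(\log\Delta)|\xi_\rho\rangle=\langle\xi_\sigma|\Delta f(\log\Delta)|\xi_\sigma\rangle$ for bounded Borel $f$, hence $d\mu=2^{x}\,d\nu$ where $\nu$ is the spectral measure of $\log\Delta$ with respect to $\xi_\sigma$, a measure of total mass $\le\|\xi_\sigma\|^2=\sigma(1)$; therefore $\mu(\{x\le -t\})\le 2^{-t}\,\nu(\R)\le\sigma(1)\,2^{-t}$ for all $t\ge 0$.

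Both tails of $\mu$ decay geometrically, so $\int|x|^k\,d\mu<\infty$ for every $k$; in particular $D(\rho\|\sigma)$, $V(\rho\|\sigma)$ and $T(\rho\|\sigma)$ are finite. The explicit bounds come from the layer-cake identity $\int|x|^k\,d\mu=\int_0^\infty k u^{k-1}\mu(\{|x|\ge u\})\,du$: the part $u\le D_{\max}(\rho\|\sigma)$ contributes at most a degree-$k$ polynomial in $D_{\max}(\rho\|\sigma)$, while the remaining integral of $ku^{k-1}$ against a geometrically decaying tail is a convergent elementary integral contributing lower-order corrections; substituting $k=2,3$, using $\sigma(1)=1$ for a state, and tuning the crossover point gives the quadratic bound for $V$ and the analogous cubic bound for $T$ (in general a factor of $\sigma(1)$ appears in the left-tail term). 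The step that needs care is the R\'enyi estimate behind the right tail: the operator-monotonicity argument above is transparent for semifinite algebras, but for type III one should either keep everything on the level of the relative modular operator as done here, so that no trace is invoked, or appeal to the compatibility of $\tilde D_\alpha$ and $D_{\max}$ with Haagerup reduction from Proposition~\ref{lemma:ultra} to descend from the finite case; the left-tail estimate and the final integrations are soft and dimension-free.
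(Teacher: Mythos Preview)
Your approach is correct for the finiteness claims and takes a genuinely different route from the paper. Both arguments rest on the same modular-operator input---the operator inequality $\Delta(\rho,\sigma)\le\lambda\,\Delta(\rho,\rho)$ coming from $\rho\le\lambda\sigma$, together with the identity $\Delta^{1/2}\xi_\sigma=\xi_\rho$ (equivalently, control of $\langle\xi_\rho|\Delta^{-1}|\xi_\rho\rangle$). Where you diverge is in how you convert these into moment bounds. The paper uses the pointwise inequality $\ln^3 t\le\frac{1}{8v^3}\bigl(\ln(t^v+t^{-v}+1)^2\bigr)^3$, observes that $x\mapsto\ln^3 x$ is concave on $[e^2,\infty)$, and applies Jensen's inequality to $\langle\xi_\rho|(X^v+X^{-v}+1)^2|\xi_\rho\rangle$, which it computes term-by-term to get exactly $\lambda+2\lambda^{1/2}+6$; this is what produces the specific functional form $(\log(\lambda+2\lambda^{1/2}+6))^3$ and the analogous bound for $V$. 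Your tail-bound/layer-cake route is more probabilistic: from the exponential decay $\mu(\{x\ge t\})\le 2^{D_{\max}-t}$ and $\mu(\{x\le -t\})\le\sigma(1)2^{-t}$ you get $\int|x|^k\,d\mu\le D_{\max}^k+O(D_{\max}^{k-1})$, i.e.\ bounds of the same order in $D_{\max}$ but not the exact constants announced in the proposition. So if the goal is only finiteness (which is what the AEP actually uses), your argument is complete and in fact cleaner, since it bounds $\int|x|^3\,d\mu$ and hence controls $T$ from both sides, whereas the paper's Jensen step only gives an upper bound on $T$. If the goal includes reproducing the displayed explicit inequalities, your final paragraph is too sketchy: layer-cake with exponential tails does not naturally yield $(\log 3+D_{\max})^2$ or the stated cubic form, and you would need the paper's Jensen trick (or an equivalent sharp inequality) to land on those particular constants.
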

\begin{proof}
We first prove the finiteness of $D(\rho\|\sigma)$.
Recall from subsection \ref{subs-ent. in gen vN} the definition of the relative entropy, the variance and the quantity $T$ are defined as follows
\[D(\rho\|\sigma)=\lan \xi_\rho| \log\Delta(\rho,\sigma) | \xi_\rho \ran \pl , V(\rho\|\sigma)=\lan \xi_\rho| (\log\Delta(\rho,\sigma))^2 | \xi_\rho \ran \pl , T(\rho\|\sigma)=\lan \xi_\rho| (\log\Delta(\rho,\sigma))^3 | \xi_\rho \ran\]
where $\Delta(\rho,\sigma)$ is the relative modular operator and $\xi_\rho$ is a vector implementing $\rho$ via  $\rho(x)=\bra{\xi_\rho}x\ket{\xi_\rho}, \forall x\in \M$ (for instance, there is a unique one in the positive cone of $L_2(\cM)$).

Now let $\rho\leq \lambda \sigma$ for some $\lambda>0$. Then it follows that (see \cite{Araki}) $$D(\rho\|\sigma)\leq D(\rho\|\lambda^{-1}\rho)=\log \lambda.$$

\noindent Hence $D(\rho\|\sigma)\leq D_{\max}(\rho\|\sigma)<\infty$.

The finiteness of $V(\rho\|\sigma)$ follows from the argument of Lemma III.4 in \cite{DF}. Here we argue for $T(\rho\|\sigma)$.
Observe that for $v\in (0,1)$ and $t>0$,
\begin{align*}
\ln^3 t =\frac{1}{8v^3}\ln^3 t^{2v}\le \frac{1}{8v^3}\left(\ln (t^v+t^{-v}+1)^2\right)^3.
\end{align*}
Note that $x\mapsto \ln^3 x$ is concave on the interval $[e^2,\infty]$, and the spectrum of $(X^v+X^{-v}+1)^2$ of any positive operator $X$ is in $[9,\infty )$. Take the short notation $X:=\Delta(\rho/\sigma)$ and $\bra{\xi}=\bra{\xi_\rho} $ and note that $T(\rho||\sigma)=\frac{1}{(\ln 2)^3} \bra{\xi} (\ln X)^3  \ket{\xi}$ . We have by Jensen inequality that
\begin{align*}
\bra{\xi}\ln^3 \Delta \ket{\xi}\le \frac{1}{8v^3}\bra{\xi}\left(\ln (X^v+X^{-v}+1)^2\right)^3  \ket{\xi}
\le \frac{1}{8v^3}\left(\ln \bra{\xi}(X^v+X^{-v}+1)^2\ket{\xi} \right)^3 \pl.
\end{align*}
For simplicity, take $v=\frac{1}{2}$
\begin{align*} \bra{\xi}(X^{1/2}+X^{-1/2}+1)^2\ket{\xi}= \bra{\xi}X+X^{-1}+3+2X^{\frac{1}{2}}+2X^{-\frac{1}{2}}\ket{\xi}\le \lambda+2\lambda^{1/2}+6
\end{align*}
where for each term we use
\begin{align*}\bra{\xi}X\ket{\xi}=&\bra{\xi_\rho} \Delta(\rho,\sigma)\ket{\xi_\rho}\le  \bra{\xi_\rho} \lambda\Delta(\rho,\rho)\ket{\xi_\rho}
=\lambda \\
\bra{\xi}X^{-1}\ket{\xi}= & \bra{\xi_\rho} \Delta(\sigma,\rho)\ket{\xi_\rho}=\bra{\xi_{\sigma}}\xi_{\sigma}\rangle=1\\
\bra{\xi}X^{1/2}\ket{\xi}\le & \bra{\xi}X\ket{\xi}^{1/2}\le  \lambda^{1/2} \\
\bra{\xi}X^{-1/2}\ket{\xi}\le  & \bra{\xi}X^{-1}\ket{\xi}^{1/2}=1
\end{align*}
This proves that
\[ T(\rho||\sigma) \le [\log  (D_{\max}(\rho||\sigma)+2D_{\max}(\rho||\sigma)^{1/2}+6)]^3\]
and the argument for $V(\rho||\sigma)$ is similar.
\end{proof}

\section{$D_{\max}$ AEP for channels}\label{AEP for channels}
In this section we prove an Asymptotic Equipartition Property for quantum channels. It comes with a cost that we lose the second order expansion as it is unclear even in finite dimensions. We say $\Psi:\M_*\to \N_*$ is a \textbf{replacer channel} if
$$\Psi(a)= a(1) \sigma,$$
where $\sigma \in \N_*$.
\subsection{Channel Divergences:}
For two quantum channels $\Phi,\Psi : \M_*\to \N_*$, we define the $\al$-sandwiched R\'enyi divergence as
\[{D}_{\al}(\Phi\|\Psi)=\sup_{\rho\in \S(\M)}{D}_{\al}(\Phi(\rho)\|\Psi(\rho)).\]
Note that this quantity is $+\infty$ if there exists $\rho\in \S(\M)$ such that $s(\Phi(\rho))\nleq s(\Psi(\rho))$.

In this section we will only concern channel divergence $D_\al$ with $\al> 1$. The regularized channel divergence is defined as
\[{D}_{\al}^{\text{reg}}(\Phi\|\Psi):=\lim_{n\to\infty} \frac{1}{n}{D}_{\al}(\Phi^{\otimes n} \|\Psi^{\otimes n}). \]
Although in infinite dimensions, the above quantities can often be infinite, there are still densely many pairs $(\Phi,\Psi)$ such that ${D}_{\al}(\Phi\|\Psi)$ and ${D}_{\al}^{\text{reg}}(\Phi\|\Psi)$ are finite. Indeed, if for a constant $M>0$, we have $\Phi\le_{CP} M\Psi$ as completely positive maps, i.e. $M\Psi- \Phi$ is completely positive, then ${D}_{\al}(\Phi\|\Psi)\le {D}_{\al}^{\text{reg}}(\Phi\|\Psi)\le \log M$. Note that the channel divergence is super-additive, ${D}_{\al}(\Phi_1\ten \Phi_2\|\Psi_1\ten \Psi_2)\ge {D}_{\al}(\Phi_1\|\Psi_1)+{D}_{\al}(\Phi_2\|\Psi_2)$, by choosing tensor product states. The above limit $\lim_{n}$ can be replaced by $\sup_n$, hence always exists (allowing value as $+\infty$).

Following \cite{CMW}, we also introduce the channel divergence with reference system
\begin{align}\mathbb{D}^{\text{ref}}(\Phi\|\Psi)=\sup_{\rho\in \S(\cR\overline{\ten} \M)}\mathbb{D}(\id_\cR\otimes \Phi(\rho)\|\id_\cR\otimes \Psi(\rho)),\label{eq:R}\end{align}
 where the supremum is taken over all the bipartite states $\rho$ acting on the von Neumann algebra $\cR\overline{\ten}\M$, and $\cR$ is an arbitrary von Neumann algebra as a reference system. By Haagerup reduction, it suffices to consider $\cR$ being finite. We will use this definition for $\mathbb{D}=D,D_\al, D_{\max},$ and $D_{\max}^\eps$.

\begin{theorem} [$D_{\max}$ AEP for channels]
\label{thm:aep-channel-sec}
Let $\Psi_{\sigma}:\cM_*\to \cN_*$ be a replacer channel with output state $\sigma\in \S(\N)$. Suppose $\Phi:\cM_*\to \cN_*$ is a quantum channel such that ${D}^{\textup{ref}}_{\al}(\Phi\|\Psi_\sigma)< \infty$ for some $\alpha>1$. Then for any $\eps \in (0,1)$,
\begin{align}
\label{eq:aep-channel}
D_{\max}^{\eps, \textup{ref}}(\Phi^{\otimes n} \| \Psi_{\sigma}^{\otimes n}) = n D^{\textup{ref}}( \Phi \| \Psi_{
\sigma}) + O(\sqrt{n}) \ ,
\end{align}
where the $O(\cdot)$ hides constants that only depend on $\Psi$ and $\sigma$.
\end{theorem}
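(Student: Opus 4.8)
The plan is to reduce the channel AEP to the state AEP (Theorem~\ref{AEP for states sf}) applied to the Choi-type states obtained from a near-optimal reference system, combined with an additivity result for the channel divergence against a replacer channel. Concretely, I would proceed in three stages. First, establish that the regularized channel divergence stabilizes: for a replacer channel $\Psi_\sigma$ one has $D^{\textup{reg}}(\Phi\|\Psi_\sigma)=D^{\textup{ref}}(\Phi\|\Psi_\sigma)$, i.e.\ the channel relative entropy with reference is \emph{additive} under tensor products, $D^{\textup{ref}}((\Phi_1\ten\Phi_2)\|(\Psi_{\sigma_1}\ten\Psi_{\sigma_2}))=D^{\textup{ref}}(\Phi_1\|\Psi_{\sigma_1})+D^{\textup{ref}}(\Phi_2\|\Psi_{\sigma_2})$ whenever the right-hand factors are replacer channels. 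This is the von Neumann algebra analogue of the finite-dimensional fact from \cite{CMW,amortized}; the proof uses the noncommutative vector-valued $L_p$-space theory of Pisier \cite{pisier} to write the sandwiched R\'enyi channel divergence against a replacer channel as a vector-valued $L_p$ norm that is multiplicative, then takes the limit $\al\to 1$ to transfer additivity to the relative entropy. Via Haagerup reduction (Proposition~\ref{lemma:ultra} extended to the reference setting) it suffices to prove this for finite $\cR$ and semifinite $\cM,\cN$, and then for finite algebras.

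\textbf{Reduction to the state AEP.} With additivity in hand, fix $n$ and pick a reference algebra $\cR$ and bipartite state $\rho_{\cR\cM}\in\S(\cR\overline\ten\cM)$ nearly optimal for $D^{\textup{ref}}(\Phi^{\ten n}\|\Psi_\sigma^{\ten n})$; by super-additivity and additivity, the optimal value equals $n D^{\textup{ref}}(\Phi\|\Psi_\sigma)$, and moreover by a standard argument one may take $\rho_{\cR\cM}=\omega^{\ten n}$ for a single near-optimal $\omega$ on a fixed $\cR_0\overline\ten\cM$ (pushing the approximation error into the $O(\cdot)$ term). Set $\tilde\rho:=(\id_{\cR_0}\ten\Phi)(\omega)$ and $\tilde\sigma:=(\id_{\cR_0}\ten\Psi_\sigma)(\omega)=\omega_{\cR_0}\ten\sigma$; these are a state and a normal positive functional on $\cR_0\overline\ten\cN$. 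Because $D^{\textup{ref}}_\al(\Phi\|\Psi_\sigma)<\infty$ for some $\al>1$, the max relative entropy bound $D_{\max}(\tilde\rho\|\tilde\sigma)\le D_\al(\tilde\rho\|\tilde\sigma)+\tfrac{1}{\al-1}\log\tfrac{1}{1-\text{something}}$ style estimate (the $D_{\max}\le D_\al$ inequality proven earlier in general von Neumann algebras) shows $D_{\max}(\tilde\rho\|\tilde\sigma)<\infty$, hence by Proposition~\ref{lemma:appl} the quantities $D(\tilde\rho\|\tilde\sigma),V(\tilde\rho\|\tilde\sigma),T(\tilde\rho\|\tilde\sigma)$ are all finite. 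Now apply Theorem~\ref{AEP for states sf} to $\tilde\rho,\tilde\sigma$ to get
\begin{align*}
\tfrac{1}{n}D_{\max}^{\sqrt{\eps'}}(\tilde\rho^{\ten n}\|\tilde\sigma^{\ten n})=D(\tilde\rho\|\tilde\sigma)+O\big(\tfrac{1}{\sqrt n}\big),
\end{align*}
and since $\tilde\rho^{\ten n}=(\id\ten\Phi)^{\ten n}(\omega^{\ten n})$ while $D^{\textup{ref}}(\Phi\|\Psi_\sigma)=D(\tilde\rho\|\tilde\sigma)+(\text{error})$, combining with the monotonicity $D_{\max}^{\eps,\textup{ref}}(\Phi^{\ten n}\|\Psi_\sigma^{\ten n})\ge D_{\max}^{\sqrt{\eps'}}(\tilde\rho^{\ten n}\|\tilde\sigma^{\ten n})$ gives the lower bound $D_{\max}^{\eps,\textup{ref}}(\Phi^{\ten n}\|\Psi_\sigma^{\ten n})\ge nD^{\textup{ref}}(\Phi\|\Psi_\sigma)-O(\sqrt n)$ (actually one needs the reverse direction here, so I would instead use this to certify that the reference state achieves the rate up to $O(\sqrt n)$, which is exactly the content of the lower bound on $D_{\max}^{\eps,\textup{ref}}$).

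\textbf{The upper bound and the main obstacle.} For the matching upper bound $D_{\max}^{\eps,\textup{ref}}(\Phi^{\ten n}\|\Psi_\sigma^{\ten n})\le nD^{\textup{ref}}(\Phi\|\Psi_\sigma)+O(\sqrt n)$ I would invoke the Relative EAT route: by the remark at the end of Section~1, the sequential chain rule (Theorem~\ref{thm:chain rule1}) applied to $\Phi_i=\id^{\ten(n-i)}\ten\Phi\ten\id^{\ten(i-1)}$ yields $D_{\max}^\eps(\Phi^{\ten n}\|\Psi_\sigma^{\ten n})\le nD_\al^{\textup{reg}}(\Phi\|\Psi_\sigma)+\tfrac{1}{\al-1}\log(1-\sqrt{1-\eps^2})^{-1}$; then the additivity of the channel divergence against a replacer channel, now applied at the R\'enyi level and taking $\al\to1$, identifies $D_\al^{\textup{reg}}(\Phi\|\Psi_\sigma)$ with $D^{\textup{reg}}(\Phi\|\Psi_\sigma)=D^{\textup{ref}}(\Phi\|\Psi_\sigma)$ up to a correction of order $\al-1$; choosing $\al=1+1/\sqrt n$ makes both correction terms $O(\sqrt n)$. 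The hardest step is establishing the additivity of $D^{\textup{ref}}_\al(\cdot\|\Psi_\sigma^{\ten})$ in the general von Neumann algebra setting: in finite dimensions one uses an explicit Choi matrix and a clean variational formula for the sandwiched R\'enyi channel divergence against a replacer, but here one must instead work with Kosaki/Haagerup $L_p$-spaces and Pisier's vector-valued $L_p(\cN;L_q)$ construction, verify that the relevant completely bounded norm is multiplicative under tensor products, and check that all approximation steps (spectrum truncation à la Hayashi--Tomamichel, then Haagerup reduction to the finite case) are compatible with the reference supremum and with passing $\al\to1$. I expect the super-additivity direction to be routine (product ansatz) and the sub-additivity direction — the genuine content — to require the full force of the $L_p$-multiplicativity argument together with a careful continuity analysis in $\al$ near $1$.
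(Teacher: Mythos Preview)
Your overall plan is sound and the lower bound matches the paper's one-line invocation of the state AEP (Theorem~\ref{AEP for states sf}). The main technical ingredient you single out --- additivity of $D_\al^{\textup{ref}}(\cdot\|\Psi_\sigma)$ for replacer channels, via Pisier's vector-valued $L_p$ theory and Haagerup reduction --- is exactly the paper's Theorem~\ref{rmv-reg}, so you correctly identified the hard step.

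For the upper bound, however, the paper takes a much shorter route than your chain-rule detour. It simply applies the state-level inequality $D_{\max}^\eps(\rho\|\sigma)\le D_\al(\rho\|\sigma)+\tfrac{1}{\al-1}\log(1-\sqrt{1-\eps^2})^{-1}$ (Lemma~\ref{lemma-1}~iii)) to every bipartite state in the supremum defining the channel quantity, obtaining directly
\[
D_{\max}^{\eps,\textup{ref}}(\Phi^{\ten n}\|\Psi_\sigma^{\ten n})\le D_\al^{\textup{ref}}(\Phi^{\ten n}\|\Psi_\sigma^{\ten n})+\tfrac{1}{\al-1}\log(1-\sqrt{1-\eps^2})^{-1},
\]
then invokes the additivity $D_\al^{\textup{ref}}(\Phi^{\ten n}\|\Psi_\sigma^{\ten n})=nD_\al^{\textup{ref}}(\Phi\|\Psi_\sigma)$ and sets $\al=1+O(1/\sqrt n)$ using continuity of $\al\mapsto D_\al$. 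The chain rule is not used at all in Section~\ref{AEP for channels}. Your Relative-EAT route also works (the paper itself remarks after Theorem~\ref{thm:REAT} that it yields the upper bound in Theorem~\ref{thm:aep-channel}), but it imports the whole of Section~\ref{REAT} for a conclusion that follows from a two-line argument. Note too that the sequential application gives $\sum_i D_\al^{\textup{reg}}(\Phi_i\|\Psi_i)$ with $\Phi_i=\id\ten\Phi\ten\id$, not $nD_\al^{\textup{reg}}(\Phi\|\Psi_\sigma)$ as you write; since $\Psi_i$ is no longer a replacer channel, you still need the additivity of $D_\al^{\textup{ref}}$ to bound each summand by $D_\al^{\textup{ref}}(\Phi\|\Psi_\sigma)$, so the heavy lifting is identical and the chain rule buys you nothing here.

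One genuine slip in your lower bound: you claim $D_{\max}(\tilde\rho\|\tilde\sigma)<\infty$ from the inequality ``$D_{\max}\le D_\al+\cdots$'', but that inequality is for the \emph{smooth} max entropy $D_{\max}^\eps$ and says nothing about $D_{\max}$ itself. To feed Theorem~\ref{AEP for states sf} you need $D,V,T$ finite for the chosen near-optimal $\omega$, and you cannot reach Proposition~\ref{lemma:appl} without $D_{\max}<\infty$. Either restrict the supremum to reference states with $D_{\max}(\tilde\rho\|\tilde\sigma)<\infty$ (and argue these still attain $D^{\textup{ref}}$), or mimic the proof of Proposition~\ref{lemma:appl} directly from the hypothesis $D_\al(\tilde\rho\|\tilde\sigma)<\infty$ for some $\al>1$.
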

In the sequel we will often simply write $\Psi$ for $\Psi_\sigma$ as a replacer channel with no confusion. We first need the following additivity result
\begin{theorem}\label{rmv-reg}
Let $\Phi$ be an arbitrary channel and $\Psi$ be a replacer map. Then we have for $n\geq 1$
\[D^{\textup{ref}}_\al(\Phi^{\otimes n}\|\Psi^{\otimes n})=n D^{\textup{ref}}_\al(\Phi\|\Psi).\]
\end{theorem}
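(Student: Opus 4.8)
The plan is to prove the nontrivial inequality $D^{\textup{ref}}_\al(\Phi^{\otimes n}\|\Psi^{\otimes n})\le n\, D^{\textup{ref}}_\al(\Phi\|\Psi)$ (the reverse inequality is superadditivity, obtained by tensoring optimizing states), and it suffices by induction to treat $n=2$, i.e. to show $D^{\textup{ref}}_\al(\Phi\ten\Phi\|\Psi\ten\Psi)\le 2 D^{\textup{ref}}_\al(\Phi\|\Psi)$. By Haagerup reduction (Section~\ref{Haagerup reduction}) together with Proposition~\ref{lemma:ultra} applied to the channels $\id_\cR\ten\Phi^{\ten 2}$ and $\id_\cR\ten\Psi^{\ten 2}$, and using that the $\textup{ref}$-supremum only needs finite $\cR$, I would first reduce to the case where $\cM,\cN,\cR$ are all \emph{finite} von Neumann algebras, so that states can be represented by densities in $L_1$ and the sandwiched R\'enyi divergence has its variational/$L_p$-norm form $D_\al(\omega\|\tau)=\al'\log\|\tau^{-1/2\al'}\omega\,\tau^{-1/2\al'}\|_\al$. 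Fix a bipartite state $\rho_{R_1R_2M_1M_2}$ on $\cR\ten\cM\ten\cM$ realizing (up to $\delta$) the left-hand side; the key structural fact is that $\Psi$ being a replacer channel means $(\id\ten\Psi)(\omega) = \omega_R\ten\sigma$ erases the $M$-system and replaces it by the fixed state $\sigma$, so the output reference algebra grows in a controlled way.

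The heart of the argument is an application of the non-commutative vector-valued $L_p$-space theory of Pisier (\cite{pisier}), which is explicitly flagged in the introduction as the main tool for this theorem. Concretely, writing $p=\al$, one expresses $D_\al^{\textup{ref}}(\Phi^{\ten 2}\|\Psi^{\ten 2})$ in terms of a norm of the form $\|(\id_R\ten\Phi\ten\Phi)(\rho)\|$ measured against the densities $\sigma\ten\sigma$, and then uses the fact that for a replacer channel the second tensor leg contributes, after the $\sigma^{-1/2\al'}(\cdot)\sigma^{-1/2\al'}$ sandwich on that leg, exactly a factor that can be pulled out as an $L_\al(L_\al)$-norm over the now-enlarged reference system $\cR\ten\cN$. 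That is, I would peel off one copy of $\Phi$ at a time: treat $\cR' := \cR\ten\cN$ (the first output together with the original reference) as a new finite reference algebra, apply the single-copy bound $D_\al^{\textup{ref}}(\Phi\|\Psi)\ge D_\al(\id_{\cR'}\ten\Phi(\omega')\|\id_{\cR'}\ten\Psi(\omega'))$ to the partially-processed state $\omega' = (\id_{\cR}\ten\Phi\ten\id_{\cM})(\rho)$, and then apply it once more on the remaining copy. The replacer structure is precisely what makes the ``new reference'' step legitimate: because $\Psi$ outputs a constant state, the partial trace/slicing needed to view the output of the first factor as part of a reference register for the second factor does not disturb the $\Psi$-leg. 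The associativity of Pisier's vector-valued $L_p$ spaces, $L_\al(\cR; L_\al(\cN;\,\cdot\,))=L_\al(\cR\ten\cN;\,\cdot\,)$ isometrically, is the technical identity underlying this decoupling, and it is where infinite-dimensionality is handled cleanly (everything in sight is a genuine Banach-space identity rather than a trace manipulation).

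The step I expect to be the main obstacle is making the ``peeling'' rigorous in the general (type III, then reduced-to-finite but possibly non-atomic) setting: one must check that the supremum defining $D_\al^{\textup{ref}}$ is stable under enlarging the reference algebra by a tensor factor and under the identification of an output register as a new reference register, and that the vector-valued $L_\al$ norms behave as expected when the ``outer'' density $\sigma\ten\sigma$ is only a state (not invertible) — this is handled by a support projection/approximation argument, cutting down to $s(\sigma)$ and using lower semicontinuity from Proposition~\ref{lemma:ultra}. A secondary subtlety is that $D_\al^{\textup{ref}}(\Phi\|\Psi_\sigma)<\infty$ for some $\al>1$ must be propagated to the tensor powers; this follows because $\Phi\le_{CP}M\Psi_\sigma$ for a finite $M$ (as noted after the definition of channel divergence, finiteness of $D_\al^{\textup{ref}}$ forces, on a dense set, such a domination, and in any case one reduces to that situation), whence $\Phi^{\ten n}\le_{CP}M^n\Psi_\sigma^{\ten n}$ and all quantities involved are finite, so no $\infty-\infty$ issues arise. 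Once the finite case is established with these norm identities, Haagerup reduction transfers it verbatim to arbitrary von Neumann algebras, completing the induction on $n$.
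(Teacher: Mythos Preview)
Your identification of the tools (Pisier's vector-valued $L_p$-spaces, Haagerup reduction, superadditivity for the easy direction) is on target, but the core ``peeling'' argument has a genuine gap. After one application of the single-copy bound to $\omega' = (\id_\cR\ten\Phi\ten\id_\cM)(\rho)$ with enlarged reference $\cR'=\cR\ten\cN$, you obtain
\[D_\al\big((\id_\cR\ten\Phi\ten\Phi)(\rho)\,\big\|\,(\id_{\cR'}\ten\Psi)(\omega')\big)\le D_\al^{\textup{ref}}(\Phi\|\Psi),\]
but $(\id_{\cR'}\ten\Psi)(\omega') = (\id_\cR\ten\Phi)(\rho_{RM_1})\ten\sigma$, which is \emph{not} the target second argument $(\id_\cR\ten\Psi\ten\Psi)(\rho)=\rho_\cR\ten\sigma\ten\sigma$. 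The ref divergence forces both slots to carry the \emph{same} input state, so ``applying it once more on the remaining copy'' cannot close this mismatch: there is no triangle inequality or chain rule for $D_\al$ in the second argument that lets you pass from one to the other at cost $D_\al^{\textup{ref}}(\Phi\|\Psi)$. The associativity $L_\al(\cR;L_\al(\cN;\cdot))=L_\al(\cR\ten\cN;\cdot)$ you invoke is correct but does not help here, since the obstruction lives in the outer $\rho_\cR$-weight, not in the nested $L_\al$-structure.

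The paper closes this gap by introducing a different intermediate quantity: the \emph{amortized} channel divergence
\[D^A_\al(\Phi\|\Psi)=\sup_{\rho,\sigma}\big\{D_\al((\id\ten\Phi)(\rho)\|(\id\ten\Psi)(\sigma))-D_\al(\rho\|\sigma)\big\},\]
whose very definition encodes the chain-rule step with \emph{different} states on the two sides. A telescoping application over the $n$ copies then gives $D_\al^{\textup{ref}}(\Phi^{\ten n}\|\Psi^{\ten n})\le n\,D^A_\al(\Phi\|\Psi)$ for arbitrary channels (this is the argument of \cite{fawzi-fawzi}). The replacer hypothesis and the Pisier machinery are used only once, in a separate Proposition, to show $D^A_\al(\Phi\|\Psi)\le D_\al^{\textup{ref}}(\Phi\|\Psi)$ when $\Psi=\Psi_\sigma$: one identifies $\exp(\tfrac{1}{\al'}D^A_\al(\Phi\|\Psi))$ with the Pisier cb-norm $\|\Theta_{\sigma^{-1/\al'}}\circ\Phi:L_1(\cM)\to L_\al(\cN)\|_{cb}$ and then bounds the infimum over reference densities $\omega_\cR$ by the particular choice $\omega_\cR=\rho_\cR$. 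Your remarks on Haagerup reduction and support approximation are fine and match how the paper handles the type~III case.
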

For the proof, following \cite{amortized}, we need to intorduce the amortized channel divergence for $\Phi, \Psi$ as follows
\[D^A_{\al}(\Phi\|\Psi)=\sup_{\rho,\si\in \S(\cR\M)} D_\al(\id_\cR\otimes \Phi(\rho)\|\id_\cR\otimes \Psi(\sigma))-D_\al(\rho\|\sigma).\]
In \cite{amortized}, Proposition 41, in the finite-dimensional case, the authors showed that for a channel $\Phi$ and the replacer channel $\Psi$, we have $D^A_{\al}(\Phi\|\Psi)=D^{\textup{ref}}_\al(\Phi\|\Psi)$. Furthermore, Theorem 5.4 in \cite{fawzi2021defining} asserts that for any two completely positive maps $\Phi, \Psi$ we have $D^A_{\al}(\Phi\|\Psi)=\lim_{n\to \infty}\frac{1}{n}D^{\textup{ref}}_\al(\Phi^{\otimes n}\|\Psi^{\otimes n})$. It follows from these two results that the assertion of the  theorem \ref{rmv-reg} in finite dimensions. We will extend this in our general setting of von Neumann algebras.

The proof of Theorem \ref{rmv-reg} uses the theory of non-commutative vector-valued $L_p$ spaces which we briefly recall here (see \cite{pisier} for more detail).  Let $1\leq p,q\leq \infty$ and $\frac{1}{r}=|\frac{1}{p}-\frac{1}{q}|$.  Recall that for two finite von Neumann algebras $\M, \N$, the space $L_p(\M, L_q(\N))$ is the norm completion of $\M\overline{\otimes}\N$ with respect to the following norm: for $p\le q$,
\[\|X\|_{L_p(\M, L_q(\N))}:=\inf_{X=(a\otimes 1)Y (b\otimes 1)} \|a\|_{L_{2r}(\M)} \|Y\|_{L_q(\M\bar{\otimes}\N)} \|b\|_{L_{2r}(\M)},\]
where the infimum is taken over all factorization $X=(a\otimes 1)Y (b\otimes 1)$, for $a,b\in \M$ and $Y\in \M\overline{\otimes} \N$; for $p\ge  q$
\[\|X\|_{L_p(\M, L_q(\N))}:=\sup_{\|a\|_{L_{2r}(\M)}=\|b\|_{L_{2r}(\M)}=1} \|(a\otimes 1)X (b\otimes 1)\|_{L_q(\M\bar{\otimes}\N)},\]
where the supremum is over all $a,b\in \M$ with unit $L_{2r}$-norm. For $X$ positive, it is sufficient to consider $a=b>0$ in above optimization. In particular, when $q=1$, noting that $\frac{1}{r}+\frac{1}{p}=1$ ($r$ is the conjugate of $p$ and denoted as $p'$) the $L_p(\N,L_1(\M))$-norm for positive $X$ is given by
\begin{align*} \norm{X}{L_p(\M,L_1(\N))}=\sup_{\norm{a}{L_{2p'}(\M)}=1} \norm{ (a\ten 1)X(a^*\ten 1)}{1} =&\sup_{\norm{a}{L_{2p'}(\M)}=1}\tau_\M\ten \tau_\N((a^*a\ten 1)X)\\ =&\sup_{\norm{a}{L_{2p'}(\M)}=1}\tau_\M(a^*aX_\M )=\norm{X_\M}{L_p(\M)}\pl,\end{align*}
where $X_\M=\id \ten \tau_\N( X)$ is the reduced operator of $X$ on $\M$.

Given a linear map $\Phi: L_1(\M)\rightarrow L_p(\N)$ with $p\ge 1$, we have by
Pisier's lemma \cite[Lemma 1.7]{pisier} that the completely bounded norm can be equivalently given by
\begin{align} \norm{\Phi:  L_1(\M)\rightarrow L_p(\N)}{cb}=&\sup_{n} \norm{ \id_{M_n}\ten \Phi:  L_1(M_n, L_1(\M))\rightarrow L_1(M_n,L_p(\N))}.{} \nonumber
\end{align}
The next proposition shows that the supremum over matrix algebras $M_n$ can be further enlarged to over all finite von Neumann algebra $\cR$. This justifies the fact that in the definition of channel divergence \eqref{eq:R}, we can allow the reference system to be any von Neumann algebra.
\begin{prop}
Let $p\in [1,\infty]$ and $\Phi: L_1(\M)\rightarrow L_p(\N)$ be a linear map. Then
\begin{align}
\norm{\Phi:  L_1(\M)\rightarrow L_p(\N)}{cb}=\sup_{\cR} \norm{ \id_{\cR}\ten \Phi:  L_1(\cR, L_1(\M))\rightarrow L_1(\cR,L_p(\N))}{}  \label{eq:cb}\end{align}
where the supremum is over any finite von Neumann algebra.
\end{prop}
\begin{proof}It suffices to show that for any $\cR$,
\[\norm{ \id_{\cR}\ten \Phi:  L_1(\cR, L_1(\M))\rightarrow L_1(\cR,L_p(\N))}{} \le \norm{\Phi:  L_1(\M)\rightarrow L_p(\N)}{cb}.\]
Recall that
\[ L_1(\cR, L_1(\M))\cong L_1(\cR\overline{\ten}\M)\cong L_1(\cR)\widehat{\ten}L_1(\M)\]
where $\widehat{\ten}$ is the operator projective tensor product. Then for any $\cR$,
\[ \norm{ \id_{\cR}\ten \Phi:  L_1(\cR, L_1(\M))\rightarrow L_1(\cR)\widehat{\ten}L_p(\N)}{} \le  \norm{\Phi:  L_1(\M)\rightarrow L_p(\N)}{cb}\pl.\]
It remains to show that for any $\cR$ and $p\in[1,\infty]$,
\[\id: L_1(\cR)\widehat{\ten}L_p(\N)\to L_1(\cR,L_p(\N))\pl \] is contraction.
We argue the case for $p<\infty$ and $p=\infty$ is similar. Let $\frac{1}{p}+\frac{1}{p'}=1$. By duality, we know
$L_1(\cR)\widehat{\ten}L_p(\N)^*=CB(L_1(\cR), L_{p'}(\N^{op}))$ where $\N^{op}$ is the opposite algebra of $\N$ \cite[Theorem 4.1]{pisier2003introduction}, and $L_\infty(\cR,L_{p'}(\N^{op}))$ is a weak$^*$ dense subspace of the dual $L_1(\cR,L_p(\N))^*$ (\cite[Proposition 4.5]{junge2010mixed}). Then it is equivalent to show
\[ \id^*: L_\infty(\cR,L_{p'}(\N^{op}))\to CB(L_1(\cR), L_{p'}(\N^{op}))\pl, \id^*(x)=T_x\]
is a contraction, where for $x\in \cR\overline{\ten} \N^{op}$,  $T_x$ is a linear map from $L_1(\cR)$ to $L_{p'}(\N^{op})$
\[T_x(y)=\tau_{\cR}(x(y\ten 1))\pl, \pl y\in L_1(\cR)\pl.\]
The argument to show $$\norm{T_x:L_1(\cR)\to L_{p'}(\N^{op})}{cb}\le  \norm{x}{L_\infty(\cR,L_{p'}(\N^{op}))}$$
is similar to \cite[lemma 3.4]{ganesan2020quantum}. Indeed, we first note that
$\operatorname{id}_{M_n}\ten T_x= T_{\phi\ten x}$ where $\phi=\sum_{i,j} e_{ij}\ten e_{ij}\in M_n\ten M_n$ is the Choi matrix for $\operatorname{id}:M_n\to M_n$. Given $$\norm{a}{S_2^n(L_2(\cR))}=\norm{b}{S_2^n(L_2(\cR))}=1,$$we can write
$a=\sum_{k}\mu_k\omega_k\ten a_k$ such that $\mu_k$ (resp. $a_k$) orthogonal in $S_2^n$ (resp. $L_2(\R)$) and $\norm{a_k}{2}=1, \sum_{k}\norm{\omega_k}{2}^2=1$ and similarly for $b=\sum_{l}\nu_l\si_l\ten b_l$. Then
\begin{align*}\operatorname{id}_{M_n}\ten T_x(ab)=T_{\phi\ten x}(ab)&=\sum_{k,l}\Big(\tr\ten \operatorname{id}_{M_n}((\omega_k\si_l\ten 1)\phi)\Big)\ten \Big(\tau_\cR\ten \id_\N ((a_kb_l\ten 1) x)\Big)\\
&=\sum_{k,l}\Big(\tr\ten \operatorname{id}_{M_n}(\omega_k\cdot \phi \cdot \si_l)\Big)\ten \Big(\tau_\cR\ten \id_\N (a_k\cdot x \cdot b_l)\Big)\\
&=\tr\ten id_{M_n}\ten\tau_\cR\ten \id_\N \Big(\sum_{k,l} (\omega_k\cdot \phi \cdot \si_l)\ten (a_k\cdot x \cdot b_l)\Big)\pl,
\end{align*}
where we use the short notation $a\cdot x \cdot b:=(a\ten 1)x(b\ten 1)$.
Using bracket notation,
\[\phi=\ket{h}\bra{h}\pl, \ket{h}=\sum_{i=1}^n\ket{i}\ket{i}\]
where $\{\ket{i}\}$ is the standard basis in $l_2^n$. We have
\[\norm{\omega_k\cdot \phi \cdot \si_l}{1}=\norm{\omega_k\ten 1 \ket{h}}{l_2}\norm{\si_l^*\ten 1 \ket{h}}{l_2}
=\norm{\omega_k}{2}\norm{\si_l}{2}
\pl.\]
Here $\norm{\cdot}{l_2}$ is the vector norm and \[\norm{\omega_k\ten 1 \ket{h}}{l_2}^2=\bra{h}\omega_k^*\omega_k\ten 1 \ket{h}=\tr(\omega_k^*\omega_k)=\norm{\omega_k}{2}\pl.\]
Therefore,
\begin{align*}&\norm{\sum_{k,l} (\omega_k\cdot \phi \cdot \si_l)\ten (a_k\cdot x \cdot b_l)}{L_1(M_{n^2}(\cR),L_{p'}(\N^{op}))} \\ \le &\sum_{k,l}\norm{\omega_k\cdot \phi \cdot \si_l}{S_1}\norm{a_k\cdot x \cdot b_l}{L_1(\cR,L_{p'}(\N^{op}))}
\\ \le &\sum_{k,l} \norm{\omega_k}{2}\norm{\si_l}{2} \norm{a_k\cdot x \cdot b_l}{L_1(\cR,L_{p'}(\N^{op}))}.
\end{align*}
Note that $\norm{a_k}{2}=\norm{b_l}{2}=1$. Assume the polar decomposition $a_k=u|a_k|$ and $b_l=|b_l|v$. We have by the definition of the norms of $L_1(\cR,L_{p'}(\N^{op})$ and $L_\infty(\cR,L_{p'}(\N^{op})$
\begin{align*}\norm{a_k\cdot x \cdot b_l}{L_1(\cR,L_{p'}(\N^{op})}\le & \norm{u|a_k|^{\frac{1}{p}}}{2p}\norm{|a_k|^{\frac{1}{p'}}\cdot x \cdot |b_l|^{\frac{1}{p'}}}{L_p'(\cR\overline{\ten}\N^{op})}\norm{|b_l|^{\frac{1}{p}}v}{2p}
\\ \le & \norm{ x }{L_\infty(\cR,L_{p'}(\N^{op}))}\end{align*}
for any $l,k$. Therefore,
\begin{align*}
\norm{\operatorname{id}_{M_n}\ten T_x(ab)}{S_1^n(L_p'(\cN^{op}))}\le& \norm{\sum_{k,l} (\omega_k\cdot \phi \cdot \si_l)\ten (a_k\cdot x \cdot b_l)}{L_1(M_{n^2}(\cR), L_p'(\cN^{op}))}\\
\le &\sum_{k,l} \norm{\omega_k}{2}\norm{\si_l}{2} \norm{a_k\cdot x \cdot b_l}{L_1(\cR,L_{p'}(\N^{op})}
\\
\le &(\sum_{k,l} \norm{\omega_k}{2}\norm{\si_l}{2}) \norm{ x }{L_\infty(\cR,L_{p'}(\N^{op}))}\le \norm{ x }{L_\infty(\cR,L_{p'}(\N^{op}))}.
\end{align*}
Then by Pisier's lemma \cite[Lemma 1.7]{pisier},
\begin{align*} \norm{T_x:L_1(\cR)\to L_{p'}(\N^{op})}{cb}=&\sup_{n}\norm{\id_{M_n}\ten T_x:S_1^n(L_1(\cR))\to S_1^n(L_{p'}(\N^{op}))}{}\\ \le  &\norm{ x }{L_\infty(\cR,L_{p'}(\N^{op}))}\pl.\end{align*}
This verifies $\id^*: L_\infty(\cR,L_{p'}(\N^{op}))\to CB(L_1(\cR), L_{p'}(\N^{op}))$ is a contraction and finishes the proof.
\end{proof}

\begin{prop}
Let $\M$ be a von Neumann algebra with a normal faithful state $\phi$ and let $\Psi:\M_*\to \N_*$ be a replacer map associated with a normal state $\sigma$ on $\N$. Then for any channel $\Phi:\M_*\rightarrow \N_*$,
\[D^A_{\al}(\Phi\|\Psi)\leq \TD(\Phi\|\Psi).\]
\end{prop}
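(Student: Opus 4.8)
The plan is to show that the amortized channel divergence $D^A_\al(\Phi\|\Psi)$ collapses to the ordinary reference channel divergence $D^{\textup{ref}}_\al(\Phi\|\Psi)$ when $\Psi = \Psi_\sigma$ is a replacer map, by establishing that amortization gives no advantage: for \emph{any} reference system $\cR$ and any pair of input states $\rho,\si\in\S(\cR\ten\M)$,
\begin{align*}
D_\al(\id_\cR\ten\Phi(\rho)\|\id_\cR\ten\Psi_\sigma(\si)) - D_\al(\rho\|\si) \leq D^{\textup{ref}}_\al(\Phi\|\Psi_\sigma).
\end{align*}
First I would reduce to the semifinite (indeed finite) case: by Haagerup reduction (Proposition~\ref{lemma:ultra} together with the observation in the definition of $D^{\textup{ref}}$ that it suffices to take $\cR$ finite) both sides are approximated by their restrictions to finite von Neumann subalgebras, and a replacer map restricts compatibly to a replacer map. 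So assume $\cR,\M,\N$ are finite with normal faithful traces, and work with density operators in the vector-valued $L_p$ formalism. Here the key structural fact is that $\id_\cR\ten\Psi_\sigma(\si)$ depends on $\si$ only through its reduced density $\si_\cR = \id_\cR\ten\tau_\M(\si)$, namely $\id_\cR\ten\Psi_\sigma(\si) = \si_\cR\ten\hat\sigma$ where $\hat\sigma$ is the density of $\sigma$; so the denominator in the sandwiched Rényi expression factorizes.

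The main computation is then to express $D_\al(\id_\cR\ten\Phi(\rho)\,\|\,\si_\cR\ten\hat\sigma)$ in terms of a vector-valued $L_p$ norm and extract a ``$-D_\al(\rho\|\si)$'' from it. Writing $p=\al$, $p'=\al'$, and using $D_\al(\xi\|\eta) = \al'\log \|\eta^{-1/2p'}\xi\,\eta^{-1/2p'}\|_{p}$, the product structure of the second argument lets me pull the $\hat\sigma^{-1/2p'}$ through the $\N$-leg and the $\si_\cR^{-1/2p'}$ through the $\cR$-leg. Then the crucial step: by Pisier's identity \eqref{eq:cb} for the completely bounded norm of a map $L_{p'}\to L_p$ evaluated with reference algebra $\cR$ — attained at positive elements for completely positive maps by \cite[Theorem 12]{junge-mult} — the $\cR$-conjugation by $\si_\cR^{-1/2p'}$ is exactly the supremum defining the $cb$-norm, up to the ``cost'' $D_\al(\rho\|\si)$ which measures how far $\rho$ is from $\si_\cR\ten(\text{something})$. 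Concretely I would bound
\begin{align*}
\big\| (\si_\cR^{-1/2p'}\ten 1)\,\id_\cR\ten\Phi(\rho)\,(\si_\cR^{-1/2p'}\ten 1)\big\|_{L_p(\cR,L_p(\N))} \leq \|\id_\cR\ten\Phi\|_{\cdots} \cdot \big\|\si^{-1/2p'}\rho\,\si^{-1/2p'}\big\|_{\cdots},
\end{align*}
where the first factor, after conjugating also by $\hat\sigma$, is exactly $2^{D^{\textup{ref}}_\al(\Phi\|\Psi_\sigma)/\al'}$ — because testing $\|\id_\cR\ten\Phi: L_{p'}(\M)\to L_p(\N)\|_{cb}$ against a \emph{pure} reference state recovers the channel divergence against a replacer channel — and the second is $2^{D_\al(\rho\|\si)/\al'}$. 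Taking logarithms and supremizing over $\rho,\si,\cR$ gives $D^A_\al(\Phi\|\Psi_\sigma)\leq D^{\textup{ref}}_\al(\Phi\|\Psi_\sigma)$, which is the claim (the reverse inequality being immediate since $D^{\textup{ref}}_\al\le D^A_\al$ by taking $\si=\rho$).

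The step I expect to be the main obstacle is the precise bookkeeping that converts the $cb$-norm supremum \eqref{eq:cb} into the reference channel divergence $D^{\textup{ref}}_\al(\Phi\|\Psi_\sigma)$ and simultaneously isolates the factor $2^{D_\al(\rho\|\si)/\al'}$ as a genuine $L_{p'}(\cR,L_1(\M))$-type norm of $\si^{-1/2p'}\rho\,\si^{-1/2p'}$; this requires matching the asymmetric ($p\ne q$) vector-valued norm definitions, handling the positivity reductions ($a=b>0$) carefully, and checking that the factorization inequality for $X=(a\ten 1)Y(b\ten 1)$ is applied in the right direction. A secondary technical point is ensuring all the fractional powers $\si_\cR^{-1/2p'}$, $\hat\sigma^{-1/2p'}$ are well-defined on the relevant supports, which is where the standing hypotheses on supports and the finiteness $D^{\textup{ref}}_\al(\Phi\|\Psi_\sigma)<\infty$ (equivalently $\Phi\le_{CP}M\Psi_\sigma$ on the relevant cone) enter; on a genuinely infinite-dimensional $\M$ one first proves the bound on the bounded-spectrum part $\S_b$ and passes to the limit using lower semicontinuity of $D_\al$ from \cite{jencova-1}.
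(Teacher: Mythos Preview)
Your approach is essentially the same as the paper's: reduce to finite von Neumann algebras via Haagerup reduction, then use Pisier's vector-valued $L_p$ theory and the cb-norm characterization \eqref{eq:cb} to relate the amortized difference to $\|\Theta_{\sigma^{-1/\al'}}\circ\Phi:L_1(\M)\to L_\al(\N)\|_{cb}$, and finally bound this cb-norm by $2^{D^{\textup{ref}}_\al(\Phi\|\Psi)/\al'}$ by writing it as $\sup_\rho\|T(\rho)\|_{L_1(\cR,L_\al(\N))}=\sup_\rho\inf_{\omega_\cR}\|\omega_\cR^{-1/2\al'}T(\rho)\omega_\cR^{-1/2\al'}\|_\al$ and choosing $\omega_\cR=\rho_\cR$.

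There is, however, a concrete slip in your bookkeeping, exactly at the place you flagged as the main obstacle. Your displayed inequality has $\|\si^{-1/2p'}\rho\,\si^{-1/2p'}\|$ on the right with the \emph{full} bipartite state $\si$, and you identify this factor with $2^{D_\al(\rho\|\si)/\al'}$. But after pulling out the $\cR$-conjugation, the input that $\id_\cR\ten(\Theta_{\hat\sigma^{-1/\al'}}\circ\Phi)$ actually sees is $Y=(\si_\cR^{-1/2p'}\ten 1)\,\rho\,(\si_\cR^{-1/2p'}\ten 1)$, not $\si^{-1/2p'}\rho\,\si^{-1/2p'}$; the cb-norm bound then yields the factor
\[
\|Y\|_{L_\al(\cR,L_1(\M))}=\|Y_\cR\|_\al=\|\si_\cR^{-1/2p'}\rho_\cR\,\si_\cR^{-1/2p'}\|_\al=2^{D_\al(\rho_\cR\|\si_\cR)/\al'},
\]
involving only the \emph{reduced} states on $\cR$. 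The bridge back to $D_\al(\rho\|\si)$ is the data processing inequality $D_\al(\rho_\cR\|\si_\cR)\le D_\al(\rho\|\si)$ under the partial trace over $\M$, which then goes in the correct direction for the desired bound. This DPI step is the missing ingredient in your sketch; it cannot be absorbed into the ``factorization inequality for $X=(a\ten 1)Y(b\ten 1)$'' because that factorization only moves elements of $\cR$, not of $\cR\overline\ten\M$. The paper's proof uses the same DPI implicitly when it passes from the amortized difference to the normalized element $X_{\cR\M}$ satisfying $\|X_\cR\|_\al\le 1$.
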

\begin{proof}
We first follow the idea of \cite[Proposition 41]{amortized} with necessary adaptations to prove the case where $\M$ is a finite von Neumann algebra. Suppose $\alpha'$ is such that  $1=\frac{1}{\al}+\frac{1}{\al'}$. Given states $\rho_{\cR\cM},\omega_{\cR\cM}\in S(\mathcal{R}\overline{\ten}\cM )$, where $\cR$ is a finite von Neumann algebra, we write $\rho_{\cR},\omega_{\cR} $ for the corresponding reduced operators on $\cR$. We have
\begin{align*}
&D_\al(\id_\cR\otimes \Phi(\rho_{\cR\M})\|\id_\cR\otimes \Psi(\omega_{\cR\M}))-D_\al(\rho_{\cR\M}\|\omega_{\cR\M})\\
=&D_\al(\id_\cR\otimes \Phi(\rho_{\cR\M})\|\omega_{\cR}\otimes\sigma_\N)-D_\al(\rho_{\cR\M}\|\omega_{\cR\M})\\
=&\al' \log \left\Vert (\omega_{\cR}\otimes\sigma_\N)^{-\frac{1}{2\al'}}\id_\cR\otimes \Phi(\rho_{\cR\M}) (\omega_{\cR}\otimes\sigma_\N)^{-\frac{1}{2\al'}}\right\Vert_\al-\al' \log \left\Vert \omega_{\cR\M}^{-\frac{1}{2\al'}}\rho_{\cR\M} \omega_{\cR\M}^{-\frac{1}{2\al'}}\right\Vert_\al\\
=&\al' \log \left\|\Theta_{\sigma_\N^{-\frac{1}{\al'}}} \circ \Phi(X_{\cR\M})\right\|_\al
\end{align*}
where $\Theta_\omega(\cdot)=\omega^{1/2}(\cdot)\omega^{1/2}$ and
\[X_{\cR\M}=\frac{\omega_\cR^{-\frac{1}{2\al'}} \rho_{\cR\M} \omega_\cR^{-\frac{1}{2\al'}}}{\|\omega_\cR^{-\frac{1}{2\al'}} \rho_{\cR\M} \omega_\cR^{-\frac{1}{2\al'}}\|_\al}.\]
Taking the supremum over $\rho, \omega$, we have
\begin{align*}
D^A_{\al}(\Phi\|\Psi) =     \sup_{X_{\cR\M}\geq 0, \|X_\cR\|_\al\leq 1} \frac{\al}{\al-1} \log \left\Vert\Theta_{\sigma_\N^{-\frac{1}{\al'}}} \circ \Phi(X_{\cR\M})\right\Vert_\al\pl,
\end{align*}
Note that given $X$ positive, $\|X_\cR\|_\al\leq1$ is equivalent to $\|X\|_{L_\al(\cR, L_1(\M))}\leq 1$.
Now using \cite[Theorem 12]{junge-mult}
\[ \sup_{X_{\cR\M}\geq 0, \|X_\cR\|_\al\leq 1} \al' \log \| \Theta_{\sigma_\N^{-\frac{1}{\al'}}} \circ \Phi(X_{\cR\M})\|_\al=  \al' \log\| \Theta_{\sigma_\N^{-\frac{1}{\al'}}} \circ \Phi: L_1(\M)\to L_\al(\N)\|_{cb},\]
where the CB norm from $L_1(\M)$ to $L_\al(\N)$ is given in equation (\ref{eq:cb}).
Thus, writing $T= \Theta_{\sigma_\N^{-\frac{1}{\al'}}} \circ \Phi$, we obtain
\begin{align*}
\exp({\frac{1}{\al'} D^A_{\al}(\Phi\|\Psi)})&= \|T:L_1(\M)\to L_\al(\N)\|_{cb}\\
&=\sup_{\cR} \| T: L_1(\cR\bar{\otimes}\M)\rightarrow L_1(\cR, L_\al(\N))\|\\
&=\sup_{\rho\in \S(\cR\M)} \|T(\rho_{\cR\M})\|_{L_1(\cR, L_\al(\N))}\\
&\overset{*}{=} \sup_{\rho\in \S(\cR\M)}\inf_{\omega_\cR\in \S(\cR)}\|\omega_\cR^{-\frac{1}{2\al'}} T( \rho_{\cR\M} )\omega_\cR^{-\frac{1}{2\al'}} \|_\al\\
&\leq   \sup_{\rho\in \S(\cR\M)}\|\rho_\cR^{(1-\al)/2\al} T( \rho_{\cR\M} )\rho_\cR^{(1-\al)/2\al} \|_\al\\
&=\exp(\frac{1}{\al'} \TD(\Phi\|\Psi)).
\end{align*}
Here the equation with $*$ follows from the formula of $||X||_{L_{p}(\M, L_{q}(\N))}$ given in the discussion of the non-commutative $L_p$ -space, for $p\leq q$ in the case where $p=1$ and $q=\alpha$.
This proves the assertion for finite von Neumann algebras.
Now we use Haagerup reduction for general the case.
From the definition of $D^A_{\al}(\Phi\|\Psi)$, for any $\epsilon>0$, we have two states $\rho_{\cR\M}, \omega_{\cR\M}\in \S(\cR\M)$ such that
\[D^A_{\al}(\Phi\|\Psi) -\epsilon\leq D_\al (\id_{\cR} \otimes\Phi (\rho_{\cR\M})\|  \id_{\cR} \otimes\Psi (\omega_{\cR\M}))- D_\al (\rho_{\cR\M} \|\omega_{\cR\M}). \]
Note that by Haagerup's reduction Proposition \ref{lemma:ultra}, we have a sequence of finite von Neumann algebras $(\cR\bar{\otimes}\M)_n$ approximating the algebra $\cR\bar{\otimes}\M$ and using the martingale convergence of $D_\al$ we get
\[D^A_{\al}(\Phi\|\Psi) -\epsilon\leq \lim\inf_{m,n} D_\al ((\id_{\cR} \otimes\Phi)_m (\rho_{\cR\M, n})\|  (\id_{\cR} \otimes\Psi)_m (\omega_{\cR\M, n}))- \lim_n D_\al (\rho_{\cR\M, n} \|\omega_{\cR\M, n}), \]
where the subscript $n$ refers to the corresponding approximating states and $m$ refers the restriction of $\id_{\cR} \otimes\Phi$ on the corresponding approximation for $\N$ (see Lemma \ref{lemma:ultra1}). The quantity in the right hand side is in the finite von Neumann algebra and hence we can use previous case and Lemma \ref{lemma:ultra1}
 to get
\[ D^A_{\al}(\Phi\|\Psi) -\epsilon\leq \lim\inf_{m, n} \TD(\Phi_{m, n}\|\Psi_{m, n}) \leq \TD(\Phi\|\Psi),\]

As $\epsilon$ was arbitrary, the proof is complete.
\end{proof}

\begin{proof}[Proof of Theorem \ref{rmv-reg}]
 Using the proof of Theorem 5.4 in \cite{fawzi2021defining} we have for any channels $\Phi, \Psi$
\[  D^{\text{ref}}_\al(\Phi^{\otimes n}\|\Psi^{\otimes n})\leq n D^A_{\al}(\Phi\|\Psi).\]
Now as $\Psi$ is a replacer channel we further have by the previous proposition
\[ D^A_{\al}(\Phi\|\Psi)\leq D^{\text{ref}}_\al(\Phi\|\Psi). \]
Then it follows that for any channel $\Phi$ and a replacer channel $\Psi$ we have
\[D^{\text{ref}}_\al(\Phi^{\otimes n}\|\Psi^{\otimes n})\leq  nD^{\text{ref}}_\al(\Phi\|\Psi).\]
The other direction follows easily from the super-additivity of the channel divergence.
\end{proof}

Finally, for the proof of Theorem \ref{thm:aep-channel-sec},  we need the following generalization of \cite[Lemma 15]{tomamichel} (see also \cite[Lemma 5]{datta-renner} and \cite[Lemma 4]{datta-hyp}) in general von Neumann algebras. Note that here we are using the original definitions of the divergences (without the reference system).
\begin{lemma}\label{lemma-1}
i) Let $\rho,\sigma,\psi\in \M_*^+$ such that $\rho\leq \sigma+\psi$, $\rho(1)=1, \psi(1)<1$. Then there exists a $\widetilde{\rho}\in \M_*^+$ with $\tilde{\rho}(1)=1$ such that
\[\tilde{\rho}\leq (1-\psi(1))^{-1}\sigma \ \ \ \ and \ \ \  F(\rho,\tilde{\rho})\geq 1-\psi(1).\]
ii) Given $\rho,\sigma\in \S(\M)$ and $\lambda >0$, denote $\phi_\lambda:=(\rho-\lambda \sigma)_+$ as the positive part of $\rho-\lambda \sigma$. Then $$D_{\max}^{\eps(\lambda)}(\rho\|\sigma)\leq \log \frac{\lambda}{\sqrt{1-\eps(\lambda)^2}}, \ \text{where} \ \eps(\lambda)=\sqrt{\phi_\lambda(1)(2-\phi_\lambda(1))}.$$
iii) Let $\rho,\sigma\in \S(\M)$. For any $\eps \in (0, 1) $ and $1<\al\le 2$,
\[ D_{\max}^\eps(\rho \|\si)\le D_\al(\rho\|\si)+ \frac{1}{\al-1} \log \frac{2}{\eps^2} -\log ( \sqrt{1-\eps^2}).\]
\end{lemma}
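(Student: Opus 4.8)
The plan is to prove the three parts of Lemma~\ref{lemma-1} in order, since part~(iii) — which is the ingredient actually needed for Theorem~\ref{thm:aep-channel-sec} — follows from parts~(i) and~(ii) by an appropriate choice of the threshold parameter $\lambda$. Throughout I would work first in the semifinite case where $\rho,\sigma,\psi$ can be treated as density operators in $L_1(\M)$, and then invoke Haagerup reduction (Proposition~\ref{lemma:ultra}) together with the martingale convergence of $D_\al$, $D_{\max}^\eps$ and $F$ to pass to general von Neumann algebras; all three statements are of the "there exists a smoothing state" or "an inequality between monotone quantities" type, so they are compatible with the reduction as in the proof of Lemma~\ref{lemma-1}'s finite-dimensional analogues.

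For part~(i), the natural construction is the Alicki--Fannes / gentle-measurement style correction: set $\widetilde\rho = (1-\psi(1))^{-1}(\rho-\psi')$ where $\psi'$ is a suitable positive functional dominated by $\psi$ chosen so that $\widetilde\rho$ is a genuine state, e.g.\ via a support-projection truncation of $\psi$; then $\rho \le \sigma+\psi$ gives $\rho - \psi' \le \sigma$ hence $\widetilde\rho \le (1-\psi(1))^{-1}\sigma$, and the fidelity bound $F(\rho,\widetilde\rho)\ge 1-\psi(1)$ comes from the operator-concavity of $t\mapsto\sqrt t$ together with $\|\rho-\widetilde\rho\|_1$ being controlled by $\psi(1)$. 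The cleanest route to the fidelity estimate is to write $\widetilde\rho$ as a convex-combination-type object dominated by a multiple of $\rho$, so that $F(\rho,\widetilde\rho)\ge \rho(\text{support correction})\ge 1-\psi(1)$; this mimics the finite-dimensional computation in \cite[Lemma 15]{tomamichel} verbatim once $\rho,\widetilde\rho$ are viewed in the standard form $L_2(\M)$.

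For part~(ii), I would apply part~(i) with $\sigma$ replaced by $\lambda\sigma$ and $\psi := \phi_\lambda = (\rho-\lambda\sigma)_+$: since $\rho \le \lambda\sigma + \phi_\lambda$ and $\phi_\lambda(1) < 1$ (because $\rho,\sigma$ are states and $\lambda>0$, one checks $\phi_\lambda(1)\le 1$ with strictness unless degenerate — if $\phi_\lambda(1)=1$ the bound is vacuous), part~(i) produces $\widetilde\rho$ with $\widetilde\rho\le (1-\phi_\lambda(1))^{-1}\lambda\sigma$, i.e.\ $D_{\max}(\widetilde\rho\|\sigma)\le \log\lambda - \log(1-\phi_\lambda(1))$, and with $F(\rho,\widetilde\rho)\ge 1-\phi_\lambda(1)$, i.e.\ purified distance $d(\rho,\widetilde\rho)\le\sqrt{1-(1-\phi_\lambda(1))^2}=\sqrt{\phi_\lambda(1)(2-\phi_\lambda(1))}=\eps(\lambda)$. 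A small subtlety is that the stated bound is $D_{\max}^{\eps(\lambda)}(\rho\|\sigma)\le\log\lambda$ without the extra $-\log(1-\phi_\lambda(1))$ term; I would absorb this by rescaling $\widetilde\rho$ slightly or, more simply, by using the alternative known form $D_{\max}^{\eps(\lambda)}(\rho\|\sigma)\le\log\lambda$ obtained by optimizing — this is exactly the point where I would follow \cite{datta-renner,datta-hyp} closely.

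Finally, part~(iii) is the quantitative consequence: given $1<\al\le 2$, use the Markov-type bound $\phi_\lambda(1) = \tr[(\rho-\lambda\sigma)_+\cdot 1] \le \lambda^{1-\al}\, 2^{(\al-1)D_\al(\rho\|\sigma)}$ (this is the von Neumann algebra version of the standard operator inequality $(\rho-\lambda\sigma)_+ \le \lambda^{1-\al}\,\sigma^{(1-\al)/(2\al)}\rho\,\sigma^{(1-\al)/(2\al)}$ paired appropriately — provable by functional calculus after passing to the relative modular operator, or by the pinching/variational characterization of $D_\al$ in the standard form given in Section~\ref{subs-ent. in gen vN}), then choose $\lambda$ so that $\eps(\lambda)$ equals the prescribed $\eps$, equivalently $\phi_\lambda(1) = 1-\sqrt{1-\eps^2}$, and substitute into part~(ii) to read off $D_{\max}^\eps(\rho\|\sigma)\le\log\lambda = D_\al(\rho\|\sigma) - \frac{1}{\al-1}\log(1-\sqrt{1-\eps^2})$. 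The main obstacle I anticipate is establishing the Markov-type operator inequality for $(\rho-\lambda\sigma)_+$ in a genuinely type~III algebra where $\rho$ and $\sigma$ are not operators: here one must phrase everything through the relative modular operator $\Delta(\rho,\sigma)$ and the variational formula $D_\al(\rho\|\sigma)=\al'\log\sup_{\omega}\langle\xi_\rho|\Delta(\omega,\sigma)^{1/\al'}|\xi_\rho\rangle$, and it is cleanest to first prove the inequality in the finite case (where it is classical) and then transport it via Haagerup reduction, checking that $\phi_\lambda(1)$ and $D_\al(\rho\|\sigma)$ both converge along the approximating finite subalgebras $\M_n$.
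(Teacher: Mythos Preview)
Your overall architecture—deduce (ii) from (i), then (iii) from (ii) via a Markov-type bound on $\phi_\lambda(1)$ and an optimal choice of $\lambda$—matches the paper's. The differences are in the execution of (i) and (ii), and one of them is a genuine gap.

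\textbf{Part (i).} Your subtraction ansatz $\widetilde\rho = (1-\psi(1))^{-1}(\rho-\psi')$ with $\psi'$ dominated by $\psi$ does not work as written. From $\rho\le\sigma+\psi$ you only get $\rho-\psi\le\sigma$; if $\psi'\le\psi$ then $\rho-\psi'\ge\rho-\psi$ and the claimed inequality $\rho-\psi'\le\sigma$ fails, while enlarging $\psi'$ to recover it destroys positivity of $\rho-\psi'$. There is no evident choice of $\psi'$ that satisfies both constraints simultaneously. The paper bypasses this by a different construction: Sakai's noncommutative Radon--Nikodym theorem applied to $\sigma\le\sigma+\psi$ yields $c\in\M_+$, $0\le c\le 1$, with $\sigma(x)=(\sigma+\psi)(cxc)$; one then sets $\widetilde\rho(x):=\rho(c^2)^{-1}\rho(cxc)$. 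Domination $\widetilde\rho\le\rho(c^2)^{-1}\sigma\le(1-\psi(1))^{-1}\sigma$ follows directly from $\rho\le\sigma+\psi$, and the fidelity bound is the one-line vector-state estimate $F(\rho,\widetilde\rho)\ge\rho(c)\ge\rho(c^2)\ge 1-\psi(1)$. This works verbatim in any von Neumann algebra, so no Haagerup reduction is needed for (i).

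\textbf{Part (ii).} You correctly flag the spurious $-\log(1-\phi_\lambda(1))$ term but your fix is vague. The paper's resolution is simply to drop the normalization: the \emph{substate} $\rho'(x)=\rho(cxc)$ from the proof of (i) already satisfies $\rho'\le\lambda\sigma$ on the nose, and since the smoothing in $D_{\max}^\eps$ ranges over $\S_{\le}(\M)$, this gives $D_{\max}^{\eps(\lambda)}(\rho\|\sigma)\le\log\lambda$ directly, with the same fidelity estimate $F(\rho,\rho')\ge 1-\phi_\lambda(1)$.

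\textbf{Part (iii).} Here you and the paper agree: the inequality $\phi_\lambda(1)\le\lambda^{1-\al}2^{(\al-1)D_\al(\rho\|\sigma)}$ is exactly what is proved, and the choice of $\lambda$ is the same. Where you gesture at the relative modular operator, the paper gives a concrete tracial argument: restrict $\rho,\sigma$ to the commutative subalgebra $\A$ generated by $X=(\rho-\lambda\sigma)_+$; on the support projection $e$ of $X$ the restricted densities satisfy $e\widetilde\rho e\ge\lambda e\widetilde\sigma e$, so operator monotonicity of $t\mapsto t^{\al-1}$ (which is where $1<\al\le 2$ enters) gives $e\le\lambda^{1-\al}e\widetilde\rho^{\al-1}\widetilde\sigma^{1-\al}e$ and hence $\tau(X)\le\lambda^{1-\al}\tau(\widetilde\rho^{\al}\widetilde\sigma^{1-\al})\le\lambda^{1-\al}2^{(\al-1)D_\al(\rho\|\sigma)}$ by data processing for the conditional expectation onto $\A$. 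The type~III case is then handled by Haagerup reduction, as you anticipate.
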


\begin{proof}i) Using Sakai's non-commutative Radon-Nikodym theorem \cite{sakai} for $\sigma\leq \sigma+\psi$, we get an element $c\in \M_+, 0\leq c\leq 1$ such that $\sigma(x)=(\sigma+\psi)(cxc)$, for all $x\in \M$. Assume $\rho(c^2)\neq 0$. Define $$\tilde{\rho}(x)=\rho(c^2)^{-1}\rho(cxc).$$ Clearly $\tilde{\rho}(1)=1$ is unital and for all positive $x$ we have
\[\tilde{\rho}(x)\leq \rho(c^2)^{-1}(\sigma+\psi)(cxc)=\rho(c^2)^{-1}\sigma(x).\]
Now observe that $$1-\rho(c^2)=\rho(1-c^2)\leq (\sigma+\psi)(1-c^2)=(\sigma+\psi)(1)-(\sigma+\psi)(c^2)=\psi(1).$$
So $\rho(c^2)^{-1}\leq (1-\psi(1))^{-1}$ and hence  we get
\[\widetilde{\rho}\leq (1-\psi(1))^{-1}\sigma.\]
Note that, it is impossible to have $\rho(c^2)=0$. Since otherwise, from the earlier calculations, we get $1-\rho(c^2)\leq \psi(1)$. This would imply that $1\leq \psi(1)$, which contradicts the assumption $\psi(1)<1$. Now note that if $(\pi, \xi)$ is a vector representation for $\rho(x)=\lan \xi|\pi(x) |\xi\ran $, then ${(\pi, \rho(c^2)^{-1/2}c\xi)}$ (identifying $c$ with $\pi(c)$) is a vector representation $\tilde{\rho}$. Hence
\[F(\rho,\tilde{\rho})\geq \rho(c^2)^{-1/2}|\langle \xi|c|\xi\rangle|= \rho(c^2)^{-1/2}\langle \xi|c|\xi \rangle=\rho(c^2)^{-1/2}\rho(c)\geq  \rho(c)  \pl,\]
where we use the fact $\rho(c^2)\leq \rho(c)\leq 1$  as $0\le c\leq 1$. Thus,
\[F(\rho,\widetilde{\rho})\geq \rho(c).\]
We compute the infidelity
\begin{align*}
1-F(\rho,\widetilde{\rho})&\leq 1-\rho(c)= \rho(1-c)\leq (\sigma+\psi)(1-c)
= (\sigma+\psi)(1)-(\sigma+\psi)(c)\\
&\leq (\sigma+\psi)(1)-(\sigma+\psi)(c^2)=(\sigma+\psi)(1)-\sigma(1)=\psi(1)\pl,
\end{align*}
where we used the inequality $(\sigma+\psi)(c^2)\leq (\sigma+\psi)(c)$ as $c^2\le c$. This proves i).

For ii), note that $\rho=\lambda \sigma+\rho-\lambda \sigma\leq \lambda \sigma+ \phi_\lambda$. Also note that $\phi_\lambda(1)<\rho(1)=1$ as $\lambda>0$. Using the proof of i) we find a state $\rho'(x)=\rho(c^2)^{-1}\rho(cxc)$ such that $$\rho'\leq (1-\phi_\lambda(1))^{-1}\lambda \sigma,$$
and the fidelity
\[ F(\rho,\rho')\ge \rho(c)\ge \rho(c^2)\ge 1-\phi_\lambda(1)\pl.\]
So the purified distance
$$d(\rho, \rho')=\sqrt{1- F(\rho, \rho')^2}\leq \eps(\lambda).$$
By definition, we have
\[ D_{\max}^{\eps(\lambda)}(\rho\|\sigma)\le D_{\max}(\rho'\|\sigma)\leq \log \frac{\lambda}{\sqrt{1-\eps(\lambda)^2}}.\]
For iii), we first argue for the finite case. Let $e$ be the support of $X=(\rho-\la\si)_+$ and $\A$ be the commutative subalgebra generated by $X$. Clearly $e\in \A$. Look at the inclusion $i: \A\rightarrow \M$.
 Using the duality, we get a restriction map $E:\M_{*}\rightarrow \A_{*}$ defined as
$$E(\gamma)=\gamma|_{\A}.$$
Denote $\widetilde{\rho}=E(\rho)$ and $\widetilde{\sigma}=E(\sigma)$ as the restriction state on the subalgebra $\A$, and, with slight abuse of the notation, we view $\widetilde{\rho}$ and $\widetilde{\sigma}$ as the probability density function w.r.t to the trace $\tau$.  Then it follows that
\[e (\widetilde{\rho}-\lambda \widetilde{\sigma})e \geq 0\ ,   \ e\widetilde{\rho}e\geq \lambda e \widetilde{\sigma}e.\]
Since the function $x\mapsto x^{\al-1}$ is operator monotone for $1<\al\le 2$, we have by the commutativity of  $\widetilde{\rho}$ and $\widetilde{\sigma}$
\[ \lambda^{1-\al} e (\widetilde{\rho}^{\al-1})  (\widetilde{\sigma}^{1-\al}) e\geq e.\]
Note also that $X\leq e\rho e$ and hence $X\leq e \widetilde{\rho} e$.
\begin{align*}
\tau(X)&\leq \lambda^{1-\al}\tau(X e (\widetilde{\rho}^{\al-1})  (\widetilde{\sigma}^{1-\al}) e )
\leq \lambda^{1-\al} \tau(e \widetilde{\rho}  e (\widetilde{\rho}^{\al-1})e  (\widetilde{\sigma}^{1-\al}) e)
\leq\lambda^{1-\al} \tau (\widetilde{\rho}^\al \widetilde{\sigma}^{1-\al})\\
&= \lambda^{1-\al}\exp[ (\al-1) \widetilde{D_\al} (\widetilde\rho\|\widetilde\sigma)]
= \lambda^{1-\al} \exp[(\al-1) {D_\al} (\widetilde\rho\|\widetilde\sigma)]
\leq \lambda^{1-\al} \exp[ (\al-1) {D_\al} (\rho\|\sigma)].
\end{align*}
In the above we used the fact that Petz-R\'enyi relative entropy $\widetilde{D}_\al$ and Sandwiched R\'enyi relative entropy ${D_\al}$ coincides in the commutative case. The last inequality follows from data processing inequality. Now given $\eps$, we choose $\lambda$ such that $\tau(X)=1- \sqrt{1-\eps^2} $, for $X=(\rho-\lambda\sigma)_{+}$. We get from ii) and the above calculation
\[D_{\max}^\eps(\rho \|\si)\le \log \frac{\lambda}{\sqrt{1-\eps^2}} \leq D_\al(\rho\|\si)+ \frac{1}{\al-1} \log ({1- \sqrt{1-\eps^2}})^{-1}-\log (\sqrt{1-\eps^2}).\]
Finally using $(1-\sqrt{1-\eps^2})^{-1}\leq \frac{2}{\eps^2}$, we get the required inequality.

The case of general von Neumann algebra follows from Haagerup reduction Lemma \ref{lemma:ultra}.
\end{proof}
For the proof of Theorem \ref{thm:aep-channel-sec}, we will also need the following lemma on the continuity on $\al\to \tilde{D}_\al$ for channel divergence.
\begin{lemma}\label{lemma:conti}
i) Let $\rho, \sigma \in S(\mathcal{M})$ be two normal states on a von Neumann algebra $\mathcal{M}$ and $\gamma \in (0, 1]$.
Define \[c_\gamma (\rho \|\sigma) := \frac{1}{\gamma} \log \left( 2^{\gamma \tilde{D}_{1+\gamma}(\rho \|\sigma)} + 2^{-\gamma\tilde{D}_{1-\gamma}(\rho \|\sigma)} + 1 \right).\] Then for all $\gamma \in (0, 1]$ and $0 < \delta \leq \frac{\gamma}{2}$,
\[
\tilde{D}_{1+\delta}(\rho \|\sigma) \leq D(\rho \|\sigma) + \delta (c_\gamma(\rho \|\sigma))^2.
\]
ii) Let $\Phi,\Psi:\mathcal{M}_*\to \mathcal{N}_*$ be two quantum channels and $\gamma \in (0, 1]$. Define \[c_\gamma (\Phi \|\Psi) := \frac{1}{\gamma} \log \left( 2^{\gamma \tilde{D}_{1+\gamma}^{{\rm ref}}(\Phi \|\Psi)} + 2^{-\gamma\tilde{D}_{1-\gamma}^{{\rm ref}}(\Phi \|\Psi_\sigma)} + 1 \right).\]
Then for all $\gamma \in (0, 1]$ and $0 < \delta \leq \frac{\gamma}{2}$,
\begin{align}
\tilde{D}_{1+\delta}^{{\rm ref}}(\Phi\|\Psi) \leq D^{{\rm ref}}(\Phi \|\Psi) + \delta (c_\gamma(\Phi \|\Psi))^2. \label{eq:conti}
\end{align}
\end{lemma}
\begin{proof}
The continuity i) for two states is identical to \cite[Lemma 4.13]{bergh2023infinite} for $\mathcal{M}=B(\mathcal{H})$ using relative modular operator. The case ii) for two quantum channels follows from taking supremum of i) over all states $\id_{\mathcal{R}}\ten \Phi(\rho)$ and $\id_{\mathcal{R}}\ten \Psi(\rho)$. Indeed, given $\eps>0$, let $\rho\in S(\mathcal{R}\ten \mathcal{M})$ be a normal state such that
\begin{align*} \tilde{D}_{1+\delta}^{{\rm ref}}(\Phi\|\Psi)\le \tilde{D}_{1+\delta}(\id_{\mathcal{R}}\ten \Phi(\rho)\|\id_{\mathcal{R}}\ten \Psi(\rho))+\eps.
\end{align*}
Using the continuity in i) and the fact $c_\gamma (\id_{\mathcal{R}}\ten \Phi(\rho)\|\id_{\mathcal{R}}\ten \Psi(\rho))\le c_\gamma (\Phi \|\Psi)$, we have
\begin{align*} \tilde{D}_{1+\delta}^{{\rm ref}}(\Phi\|\Psi)-\eps &\le \tilde{D}_{1+\delta}(\id_{\mathcal{R}}\ten \Phi(\rho)\|\id_{\mathcal{R}}\ten \Psi(\rho)) \\
&\le  D(\id_{\mathcal{R}}\ten \Phi(\rho)\|\id_{\mathcal{R}}\ten \Psi(\rho)) + \delta (c_\gamma(\id_{\mathcal{R}}\ten \Phi(\rho)\|\id_{\mathcal{R}}\ten \Psi(\rho)))^2
\\
&\le D^{{\rm ref}}(\Phi \|\Psi) + \delta (c_\gamma(\Phi \|\Psi))^2.
\end{align*}
Since $\eps$ is arbitrary, the proof is complete.
\end{proof}

We are now ready to the main theorem of this section.
\begin{proof}[Proof of Theorem \ref{thm:aep-channel-sec}]
Assume that for some $\alpha > 1$, we have $D_\alpha^{\text{ref}}(\Phi\|\Psi_\sigma)<\infty$. Using the bound in equation \eqref{eq:bound},
for any $\delta\le \gamma=1-\frac{1}{\alpha}$, we have
\[\tilde{D}_{1+\delta}^{\text{ref}}(\Phi\|\Psi_\sigma) \le \tilde{D}_{2-\frac{1}{\alpha}}^{\text{ref}}(\Phi\|\Psi_\sigma) \le D_\alpha^{\text{ref}}(\Phi\|\Psi_\sigma)<\infty,\]
and \[ c_{1-\frac{1}{\alpha}} (\Phi \|\Psi_\sigma)<\infty.\]

For the upper bound, using Lemma \ref{lemma-1}  we have
\begin{align*}
D_{\max}^{\eps, \text{ref}} (\Phi^{\otimes n} \|\Psi_\sigma^{\otimes n})&\leq D_{1+\delta}^{\text{ref}}(\Phi^{\otimes n}\|\Psi_\sigma^{\otimes n}) + \frac{1}{\delta} \log \frac{2}{\eps^2} -\log (\sqrt{1-\eps^2}).\\
&=n D_{1+\delta}^{\text{ref}}(\Phi\|\Psi_\sigma)+ \frac{1}{\delta} \log\frac{2}{\eps^2} -\log (\sqrt{1-\eps^2})\\
&\le n \tilde{D}_{1+\delta}^{\text{ref}}(\Phi\|\Psi_\sigma)+ \frac{1}{\delta} \log\frac{2}{\eps^2} -\log (\sqrt{1-\eps^2})\\
& \le n D(\Phi\|\Psi_\sigma)+ n\delta (c_{1-\frac{1}{\al}}(\Phi \|\Psi_\sigma))^2+ \frac{1}{\delta} \log\frac{2}{\eps^2} -\log (\sqrt{1-\eps^2}).
\end{align*}
Here, the first equality follows from Theorem \ref{rmv-reg}, the first inequality uses $D_{1+\delta}\le \tilde{D}_{1+\delta}$, and the second inequality uses the continuity from lemma \ref{lemma:conti}. Choosing $\delta=O(1/\sqrt{n})$ yields the upper bound
\[D_{\max}^{\eps, \text{ref}} (\Phi^{\otimes n} \|\Psi_\sigma^{\otimes n})\leq n D^{\text{ref}}(\Phi\|\Psi) + O(\sqrt{n}).\]
where all the constant terms are included in the expression $O(\sqrt{n})$.

The lower bound follows from the AEP for states described in Theorem \ref{eq:aep-states}. Indeed, from the definition of the channel divergence, for a small $\eta>0$, we can find a
von Neumann algebra $\cR$ and state $\rho\in \S(\cR\overline{\ten} \M)$ such that $D^{\text{ref}}(\Phi\|\Psi)-\eta \leq {D}(\id_\cR\otimes \Phi(\rho)\|\id_\cR\otimes \Psi_{\sigma}(\rho))$. Note that $id_\cR\otimes \Psi_{\sigma}(\rho)=\rho_{\cR}\otimes \sigma$, for some state $\rho_{\cR}$. Now we let $x=\id_\cR\otimes \Phi(\rho), y= \rho_{\cR}\otimes \sigma$. Then we get
\begin{align*}
n{D}(\id_\cR\otimes \Phi(\rho)\|\id_\cR\otimes \Psi_{\sigma}(\rho))&=n{D}(x\|y)\\
&=  D_{\max}^{\epsilon}(x^{\otimes n}\| y^{\otimes n})+ O({\sqrt{n}})\\
&\leq D_{\max}^{\eps, \text{ref}} (\Phi^{\otimes n} \|\Psi_\sigma^{\otimes n}) + O({\sqrt{n}})
\end{align*}
where the equality  follows from the AEP of states Theorem \ref{eq:aep-states}, where we hide the second order terms in the expression $O(\sqrt{n})$. And the inequality follows from the definition of the channel divergence. So putting all together we
have
\[D_{\max}^{\eps, \text{ref}} (\Phi^{\otimes n} \|\Psi_\sigma^{\otimes n})\geq n D^{\text{ref}}(\Phi\|\Psi) + O(\sqrt{n})-n\eta.\]
As $\eta$ was arbitrary, we get the required result.
\end{proof}

\section{Divergence Chain Rule and Sequential Applications of Channels}\label{REAT}

\subsection{Semi-finite case:}

Our main technical result in this article is the following chain rule, which we state here in the semi-finite case. The fully general version is provided as Theorem \ref{thm:chain rule1-sec} in the next subsection.
\begin{theorem}\label{thm:chain rule}
Let $\M$ and $\N$ two semi-finite von Neumann algebras and let
$\Phi,\Psi : L_1(\M)\to L_1(\N)$ be two quantum channels. Then for any $\alpha\in (1,\infty]$, $\rho,\sigma\in \S(\M)$ we have
\[{D}_{\al}(\Phi(\rho)\|\Psi(\si))\leq {D}_{\al}(\rho\|\sigma)+{D}_{\al}^{\textup{reg}}(\Phi\|\Psi).\]
\end{theorem}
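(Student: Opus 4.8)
The plan is to establish the inequality first for \emph{bounded} pairs of states supported on a corner of finite trace, where the finite‑dimensional argument of \cite{fawzi-fawzi} (equivalently the short proof of \cite{berta-toma}) carries over essentially unchanged, and then to deduce the general semi‑finite statement by approximation, the approximation being carried out with the spectral truncation technique of Hayashi and Tomamichel \cite{toma-haya}. We may assume at the outset that $D_\al^{\textup{reg}}(\Phi\|\Psi)<\infty$ and $D_\al(\rho\|\sigma)<\infty$, since otherwise there is nothing to prove; in particular $s(\rho)\le s(\sigma)$. We may also take $\al\in(1,\infty)$: the case $\al=\infty$ follows by letting $\al\to\infty$, since $D_\al(\cdot\|\cdot)$ and $D_\al^{\textup{reg}}(\cdot\|\cdot)$ are nondecreasing in $\al$ and increase to $D_{\max}(\cdot\|\cdot)$ and $D_{\max}^{\textup{reg}}(\cdot\|\cdot)$ respectively.

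\emph{Truncation.} Fix a normal faithful semi‑finite trace $\tau$ on $\M$ and regard $\rho,\sigma$ as densities in $L_1(\M)$. For $k\in\bN$ let $e_k$ be the spectral projection of $\sigma$ for the interval $[\tfrac1k,k]$, so that $\tau(e_k)<\infty$ and $e_k\uparrow s(\sigma)$; after adjusting supports and renormalizing one obtains states $\rho_k,\sigma_k$ with: (i) $\rho_k,\sigma_k$ supported on $e_k$, $\rho_k$ bounded and $\sigma_k\ge\delta_k e_k$, so $\rho_k,\sigma_k\in\S_b(\M)$; since $\tau(e_k)<\infty$, the corner $e_k\M e_k$ is a \emph{finite} von Neumann algebra and $\rho_k,\sigma_k$ are states on it; (ii) $D_{\max}(\rho_k\|\sigma_k)<\infty$; (iii) $\rho_k\to\rho$ and $\sigma_k\to\sigma$ in $L_1(\M)$; (iv) $\limsup_k D_\al(\rho_k\|\sigma_k)\le D_\al(\rho\|\sigma)$. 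Properties (i)--(iv) are exactly what the truncation lemmas of \cite{toma-haya} provide.

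\emph{The bounded case and the limit.} Composing $\Phi$ and $\Psi$ with the normal unital completely positive map $x\mapsto e_kxe_k$, we are applying the channels to states $\rho_k,\sigma_k$ on the finite algebra $e_k\M e_k$ with bounded densities bounded away from $0$; in this situation every relative‑entropic quantity and every variational expression appearing in the proof involves only bounded operators, and the chain‑rule argument of \cite{fawzi-fawzi} (or \cite{berta-toma}) goes through verbatim, as it uses only the variational characterization of $D_\al$, noncommutative vector‑valued $L_p$‑space duality, Stinespring‑type dilations, and tensorization, all available for semi‑finite von Neumann algebras. This gives $D_\al(\Phi(\rho_k)\|\Psi(\sigma_k))\le D_\al(\rho_k\|\sigma_k)+D_\al^{\textup{reg}}(\Phi\|\Psi)$, where on the right‑hand side the compression by $e_k\M e_k$ could only have decreased the regularized channel divergence (data processing). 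Since $\Phi$ and $\Psi$ are contractive on preduals we have $\Phi(\rho_k)\to\Phi(\rho)$ and $\Psi(\sigma_k)\to\Psi(\sigma)$ in $\N_*$, so joint lower semicontinuity of $D_\al$ \cite{jencova-1} combined with (iv) yields
\begin{align*}
D_\al(\Phi(\rho)\|\Psi(\sigma)) &\le \liminf_k D_\al(\Phi(\rho_k)\|\Psi(\sigma_k))\\
&\le \limsup_k D_\al(\rho_k\|\sigma_k)+D_\al^{\textup{reg}}(\Phi\|\Psi)\le D_\al(\rho\|\sigma)+D_\al^{\textup{reg}}(\Phi\|\Psi),
\end{align*}
which is the assertion (and in passing shows the left‑hand side is finite).

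\textbf{Main obstacle.} The delicate step is the truncation: one must cut the spectrum of $\sigma$ away from both $0$ and $\infty$ — in order to force $e_k\M e_k$ to be finite and $D_{\max}(\rho_k\|\sigma_k)$ to be finite — without inflating the R\'enyi divergence in the limit, i.e.\ while keeping property (iv). This is precisely where the quantitative estimates of \cite{toma-haya} on how spectral cut‑offs perturb quantities of the form $\tau\big(\sigma^{(1-\al)/\al}\rho\,\sigma^{(1-\al)/\al}\cdots\big)$ come in, and one has to verify that they remain stable under the additional compression onto $e_k\M e_k$. A secondary, essentially routine, point is to confirm that the finite‑dimensional proof of \cite{fawzi-fawzi,berta-toma} — in particular the use of the variational formula for $D_\al$ and of multiplicativity of vector‑valued $L_p$‑norms — remains valid when the input algebra is an arbitrary finite von Neumann algebra and the output algebra an arbitrary semi‑finite one.
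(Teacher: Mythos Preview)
Your overall approximation scheme (truncate $\sigma$ spectrally, pass to the finite corner $e_k\M e_k$, take limits using lower semicontinuity and data processing) is sound and corresponds roughly to the paper's Step~5. The genuine gap is in your ``bounded case'': you assert that once $\rho_k,\sigma_k$ live on a finite algebra with bounded invertible densities, the proof of \cite{fawzi-fawzi} or \cite{berta-toma} ``goes through verbatim''. This is precisely the hard part, and neither proof survives the passage to infinite-dimensional finite von Neumann algebras unchanged. The short proof of \cite{berta-toma} hinges on Matsumoto's optimal preparation channel for $\hat D_\al$, which requires $\sigma^{-1/2}\rho\,\sigma^{-1/2}$ to have \emph{finite spectrum} so that the classical input space is finite; mere boundedness of $\rho_k,\sigma_k$ does not give this (think of $\M=L_\infty[0,1]$). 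The original proof of \cite{fawzi-fawzi} relies on a Lie--Trotter identity and finite-dimensional complex interpolation estimates that do not transfer directly either. So your reduction leaves exactly the nontrivial case unproved.

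The paper addresses this by a further sequence of reductions (its Steps~1--4): first prove, via Matsumoto's construction and pinching, a \emph{one-shot} inequality
\[
D_\al(\Phi(\rho)\|\Psi(\sigma))\le D_\al^{M}(\rho\|\sigma)+D_\al(\Phi\|\Psi)+\tfrac{\al}{\al-1}\log|\spec(\sigma)|
\]
under the hypothesis that $\sigma$ (and, initially, $\sigma^{-1/2}\rho\,\sigma^{-1/2}$) has finite spectrum; then use the Hayashi--Tomamichel spectral truncation to relax ``finite spectrum'' to ``bounded spectrum'' while controlling the error; and finally \emph{regularize}, applying the semi-finite extension of the Mosonyi--Ogawa identity $\lim_n\tfrac1n D_\al^M(\rho^{\otimes n}\|\sigma^{\otimes n})=D_\al(\rho\|\sigma)$ (the paper's Theorem~\ref{thm:hiai-mosonyi}) to turn $D_\al^M$ into $D_\al$ and $D_\al(\Phi\|\Psi)$ into $D_\al^{\textup{reg}}(\Phi\|\Psi)$. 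Your outline skips both the finite-spectrum intermediary and the regularization step through the measured divergence, which are the essential mechanisms by which the pinching error term is absorbed.
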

The above theorem is known as a chain rule for quantum channels, which was first proved by H. Fawzi and O. Fawzi in \cite{fawzi2021defining} in finite dimensions. Here, to establish the semi-finite case, we follow a recent simple proof by Berta and Tomamichel \cite{berta-toma} and adapt ideas from the seminal work of Hayashi and Tomamichel \cite{toma-haya}.

Denote $\text{spec} ( {\sigma})$ as the spectrum set of $\sigma$ and $|\text{spec}({\sigma})|$ as the cardinality of the set. Namely, $|\text{spec}({\sigma})|=k$ if $\sigma=\sum_{i=1}^k \mu_i F_{i}$ has finite spectrum, and $+\infty$ otherwise. As a first step, we extend a useful lemma to infinite dimensions. It is commonly known as Hayashi's pinching map \cite{hayashi}. Given a density operator $\sigma\in \S(\M)$, we denote $$\N_\sigma=\{\sigma\}':=\{a\in \M: a\sigma=\sigma a\}=\{a\in \M: ae^{i\sigma t}=e^{i\sigma t}a \text{ for all } t\in \mathbb{R}\}$$ as the commutant (or centralizer) of $\sigma$. One can show that the trace $\tau|_{\N_\sigma}$ restricted on the subalgebra $\N_\sigma$ is again semi-finite. Indeed, let $e_{\eps}:=1_{[\eps,\infty)}(\sigma)$ be the spectral projection of $\sigma$ on the unit interval $[\eps, \infty)$, for $\eps>0$ and $e_{\{0\}}:= 1_{0}(\sigma)$. If $y\in \N_\sigma$ and $y\ge 0$, and $y\neq 0$, we set $\tilde{y}_{\eps}:=y^{\frac{1}{2}}e_{\eps} y^{\frac{1}{2}}$. Then $\tilde{y}_{\eps}\leq y$ and $\tau(\tilde{y}_{\eps})=\tau(ye_{\eps})\le \norm{y}{}\norm{e_{\eps}}{1}<\infty$. If $\tilde{y}_{\eps}=0$ for all $\eps>0$, then  $y^{\frac{1}{2}}(1-e_0 )y^{\frac{1}{2}}=0$, which shows that $y=e_0 y e_0\in e_0\M e_0$. Hence there exists a $\tilde{y}\in e_0\M e_0$ such that
$0\leq \tilde{y}\leq y$ and $\tau(\tilde{y})<\infty$. Then there exists a trace preserving conditional expectation $\E_{\sigma}:\M\to \N_\sigma$ onto $\N_\sigma$ such that
\[\tau(xy)=\tau(\E_\sigma(x)y)\pl,  \pl  \forall \pl x\in \M, y\in \N_\sigma\pl. \]
$\E_{\sigma}$ can be viewed as generalization of pinching map for the state $\sigma$. In particular, if $\sigma=\sum_{i=1}^k \mu_i E_{i}$ is of finite spectrum with $E_i\ge 0, \sum_{i}E_i=1$ is a finite family of mutually orthogonal projections, then
\[ \E_{\sigma}(x)=\sum_{i=1}^k E_i x E_i\pl, \pl \forall x\in \M.\]
\begin{lemma}\label{lemma:order}
For $\sigma=\sum_{i=1}^k \mu_i E_{i}$ with finite spectrum, $\rho\le k\E_{\sigma}(\rho)$ for any $\rho\in \S(\M)$.
\end{lemma}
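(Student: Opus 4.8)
The claim is the operator inequality $\rho \le k\,\E_\sigma(\rho)$ where $\E_\sigma(x) = \sum_{i=1}^k E_i x E_i$ is the pinching onto the finite family of spectral projections of $\sigma$. The plan is to reduce this to the classical/finite-dimensional pinching estimate, which says that for a projection-valued resolution of the identity $\sum_i E_i = 1$ with $k$ terms, one has $x \le k \sum_i E_i x E_i$ for every positive $x$. The content is purely that the number of blocks is $k$; nothing about $\sigma$ beyond $|\spec(\sigma)| = k$ is used, and nothing about the ambient algebra being finite or semi-finite is really needed once the conditional-expectation description of $\E_\sigma$ is in hand.

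\textbf{Key steps.} First I would fix the decomposition $1 = \sum_{i=1}^k E_i$ into mutually orthogonal projections and write, for $\rho \ge 0$,
\[
\rho = \sum_{i,j=1}^k E_i \rho E_j = \sum_{i=1}^k E_i \rho E_i + \sum_{i\neq j} E_i \rho E_j.
\]
The diagonal part is exactly $\E_\sigma(\rho)$, so the statement is equivalent to $\sum_{i\neq j} E_i \rho E_j \le (k-1)\E_\sigma(\rho)$. Second, the standard trick: for any $k$-th root of unity phases, or more simply using the $k$ unitaries $U_m = \sum_{i=1}^k \omega^{mi} E_i$ with $\omega = e^{2\pi \mathrm{i}/k}$ (these lie in $\N_\sigma \subseteq \M$ since they are functions of the $E_i$), one has the averaging identity
\[
\frac{1}{k}\sum_{m=1}^k U_m \rho U_m^* = \sum_{i=1}^k E_i \rho E_i = \E_\sigma(\rho),
\]
because $\frac1k\sum_{m=1}^k \omega^{m(i-j)} = \delta_{ij}$. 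Third, since each $U_m \rho U_m^* \ge 0$, every single term in the sum is bounded by the whole sum: $\rho = U_k \rho U_k^* \le \sum_{m=1}^k U_m \rho U_m^* = k\,\E_\sigma(\rho)$ (taking $U_k = 1$). This is the whole argument. I would also remark that $\E_\sigma$ being the unique trace-preserving conditional expectation onto $\N_\sigma = \{\sigma\}'$ forces it to coincide with $x \mapsto \sum_i E_i x E_i$ on the finite-spectrum case — this is already asserted in the text preceding the lemma, so it can simply be cited.

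\textbf{Main obstacle.} Honestly there is no serious obstacle: the argument is algebraic and identical to the finite-dimensional one, and it does not even require semi-finiteness. The only point worth a sentence of care is confirming that the unitaries $U_m$ genuinely belong to $\M$ (equivalently to $\N_\sigma$) so that conjugation by them is an operation internal to the algebra — but since $\sigma$ has finite spectrum, $E_i \in \M$ and hence $U_m = \sum_i \omega^{mi} E_i \in \M$ is automatic. So the proof is essentially three lines; I would present it as the unitary-averaging identity followed by the positivity bound.
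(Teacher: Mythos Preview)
Your proposal is correct. The paper takes a slightly different but closely related route: instead of averaging over the Fourier unitaries $U_m=\sum_i \omega^{mi}E_i$, it writes
\[
\rho=\sum_{i,j}E_i\rho E_j=(E_1,\dots,E_k)\,\bigl(J_k\otimes\rho\bigr)\,(E_1,\dots,E_k)^T
\]
where $J_k$ is the $k\times k$ all-ones matrix, and then uses the elementary operator inequality $J_k\le k\cdot 1_k$ (tensored with $\rho$) to bound this by $k\sum_i E_i\rho E_i$. Your argument and the paper's are essentially Fourier-dual to one another: the discrete Fourier transform diagonalizes $J_k$, which is why the roots-of-unity unitaries do the job. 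Your version has the advantage of exhibiting $\E_\sigma$ explicitly as a mixture of unitary conjugations (hence manifestly unital and completely positive), while the paper's block-matrix form is marginally more self-contained since it does not invoke any auxiliary unitaries. Either is perfectly adequate here.
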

\begin{proof}Denote $1$ as the identity element of $\M$.
Note that the $k\times k$ matrix \[\left(\begin{array}{cccc}1&1&\cdots&1\\ 1&1&\cdots&1\\ \vdots& &\ddots& \\ 1& &\cdots&1\end{array}\right),\] satisfies the following operator inequality

\[\left(\begin{array}{cccc}1&1&\cdots&1\\ 1&1&\cdots&1\\ \vdots& &\ddots& \\ 1& &\cdots&1\end{array}\right)\le k\left(\begin{array}{cccc}1& & & \\  &1& &\\ & &\ddots& \\ & & &1\end{array}\right).\] Then
\begin{align*} \rho=&\sum_{i,j=1}^k E_i\rho E_j=\left(E_1,E_2,\cdots, E_k\right)\left(\begin{array}{cccc}\rho&\rho&\cdots&\rho\\ \rho&\rho&\cdots&\rho\\ \vdots& &\ddots& \\ \rho& &\cdots&\rho\end{array}\right)\left(\begin{array}{c} E_1 \\ E_2 \\ \vdots \\ E_k\end{array}\right)
\\ \le &
k \left(E_1,E_2,\cdots, E_k\right)\left(\begin{array}{cccc}\rho& & & \\  &\rho& &\\ & &\ddots& \\ & & &\rho\end{array}\right) \left(\begin{array}{c} E_1 \\ E_2 \\ \vdots \\ E_k\end{array}\right)=k\sum_{i}^k E_i\rho E_i=k\E_{\sigma}(\rho) \qedhere
\end{align*}
\end{proof}

Under the assumption of finite spectrum, we have the following estimate as in finite dimensions.

\begin{lemma}\label{lemma:pinching-spectrum}
Let $\sigma$ be a positive element with finite spectrum. Let $\Phi,\Psi:L_1(\M)\rightarrow L_1(\N)$ be two positive maps. Then for any $\al\in (1,\infty)$ and $\rho\in \S(\M)$,
\[D_\al(\Phi(\rho)\|\Psi(\sigma))\le D_\al(\Phi(\E_\sigma(\rho))\|\Psi(\sigma)) +\frac{\al}{\al-1}\log |\text{spec}(\sigma)|.\]
\end{lemma}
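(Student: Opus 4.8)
The plan is to exploit the fact that the pinching map $\E_\sigma$ fixes $\sigma$ (since $\sigma\in\N_\sigma$, so $\E_\sigma(\sigma)=\sigma$ and hence $\Psi(\E_\sigma(\sigma))=\Psi(\sigma)$), combined with the operator inequality $\rho\le k\,\E_\sigma(\rho)$ from Lemma~\ref{lemma:order}, where $k=|\text{spec}(\sigma)|$. First I would apply the positive map $\Phi$ to the inequality $\rho\le k\,\E_\sigma(\rho)$ to get $\Phi(\rho)\le k\,\Phi(\E_\sigma(\rho))$ in $L_1(\N)_+$; this uses only positivity of $\Phi$. Then I invoke two standard monotonicity properties of the sandwiched R\'enyi divergence $D_\al$ for $\al\in(1,\infty)$: monotonicity in the first argument, namely that $\gamma_1\le\gamma_2$ implies $D_\al(\gamma_1\|\omega)\le D_\al(\gamma_2\|\omega)$, and the scaling identity $D_\al(c\gamma\|\omega)=D_\al(\gamma\|\omega)+\frac{\al}{\al-1}\log c$ for $c>0$ (the latter following directly from the definition $D_\al(\rho\|\sigma)=\frac{\al}{\al-1}\log\|\sigma^{-1/2\al'}\rho\sigma^{-1/2\al'}\|_\al$ and homogeneity of the norm). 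Applying these in sequence gives
\[
D_\al(\Phi(\rho)\|\Psi(\sigma))\le D_\al\big(k\,\Phi(\E_\sigma(\rho))\,\big\|\,\Psi(\sigma)\big)=D_\al(\Phi(\E_\sigma(\rho))\|\Psi(\sigma))+\frac{\al}{\al-1}\log k,
\]
which is exactly the claim once we recall $k=|\text{spec}(\sigma)|$.

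A point that needs a small amount of care is the validity of the scaling identity and monotonicity in the first argument for $D_\al$ in the semi-finite von Neumann algebra setting, rather than purely in finite dimensions; these are by now standard (see e.g.~\cite{jencova-1,hiai-book}) and I would cite them rather than reprove them. One should also confirm that $\Phi(\E_\sigma(\rho))$ has support dominated by that of $\Psi(\sigma)$ so that $D_\al(\Phi(\E_\sigma(\rho))\|\Psi(\sigma))$ is well-defined in the first place; but this is automatic, since $\Phi(\rho)\le k\,\Phi(\E_\sigma(\rho))$ forces $s(\Phi(\rho))\le s(\Phi(\E_\sigma(\rho)))$, and if the support condition fails the right-hand side of the asserted inequality is $+\infty$ and there is nothing to prove. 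Thus no genuine obstacle arises here.

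The only mild subtlety—more bookkeeping than obstacle—is that $\Phi$ is presented as a map $L_1(\M)\to L_1(\N)$ (a channel on preduals), so one should make sure the inequality $\rho\le k\,\E_\sigma(\rho)$, which lives in $\M$ (or $L_1(\M)$ after identifying states with densities), is transported correctly: since $\Phi$ is positive and the order on $L_1$-densities is the one inherited from the positive cone, $\Phi$ preserves the inequality. With that, the proof is a three-line application of Lemma~\ref{lemma:order} together with the elementary properties of $D_\al$, and requires nothing deeper.
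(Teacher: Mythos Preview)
Your proposal is correct and follows essentially the same approach as the paper: both apply $\Phi$ to the pinching inequality $\rho\le k\,\E_\sigma(\rho)$ from Lemma~\ref{lemma:order} and then use that $D_\al$ for $\al>1$ is monotone in the first argument with the scaling $D_\al(c\gamma\|\omega)=D_\al(\gamma\|\omega)+\frac{\al}{\al-1}\log c$. The paper spells out this last step concretely by conjugating with $\Psi(\sigma)^{(1-\al)/2\al}$ and invoking $\|x\|_\al\le\|y\|_\al$ for $0\le x\le y$, whereas you cite it as a standard property---but the content is identical.
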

\begin{proof}
Using the pinching inequality in Lemma \ref{lemma:order}, we have for any $\rho$ and positive $\Phi$ , $$\Phi(\rho)\le | \text{spec}(\sigma)| \Phi(\E_{\sigma}(\rho)).$$
Then conjugating by $\Psi(\sigma)^{\frac{1-\al}{2\al}}$ both sides gives us
\[\Psi(\sigma)^{\frac{1-\al}{2\al}}\Phi(\rho)\Psi(\sigma)^{\frac{1-\al}{2\al}}\le |\text{spec}(\sigma)| [\Psi(\sigma)^{\frac{1-\al}{2\al}}\Phi(\E_{\sigma}(\rho))\Psi(\sigma)^{\frac{1-\al}{2\al}}].\]
Now the claim follows from the norm inequality $\norm{x}{\al}\le \norm{y}{\al}$ when $x\le y$.
\end{proof}

The next result is about measured R\'enyi relative entropies. Recall that a positive operator valued measure (POVM) in a von Neumann algebra $\M$ is a positive linear map $\mathfrak{M}^*:L_\infty(X)\to \M$ from some general measure space $X$ such that $\mathfrak{M}^*(1)=1$.
Then the measurement map in the predual is the map $\mathfrak{M}:L_1(\M)\to L_1(X) $ which sends density operators to probability density functions. When $X=\Omega=\{1,\cdots, k\}$ is a finite set, this means the measurement has finite outcome and $\mathfrak{M}=(M_j)_{1\le j\le k}$ is given by a finite family of positive operators in $\M$ such that $\sum_j M_j=1$. For each $\rho\in \S(\M)$, $\mathfrak{M}(\rho)=(\tr(\rho M_j))_{1\le j\le k}$ gives the discrete probability distribution on $\Omega$.

\begin{defi}\label{def:measure}
Let $\rho,\sigma\in \S(\M)$. Then for $\alpha\in [1/2,1)\cup (1, \infty)$, the measured R\'enyi divergence $D_\al^{M}(\rho\|\sigma)$ is defined as follows
\[D_\al^{M}(\rho\|\sigma)=\sup\{ D_\al(\mathfrak{M}(\rho)\|\mathfrak{M}(\sigma)): \mathfrak{M} \text{ is  a  POVM  in } \M \}. \]
\end{defi}
It follows from \cite[Proposition 5.2]{hiai-book} that  one can replace the supremum of POVMs over general measure spaces $L_\infty(X)$ by only the finite outcome POVMs over discrete $\Omega$. In the sequel we will use this equivalent simpler definition.


In finite dimensions, it was proved by Ogawa-Mosonyi \cite{mosnyi-ogawa}  that the regularized measured R\'enyi relative entropy coincides with sandwiched R\'enyi relative entropy. This was extended to approximately finite dimensional von Neumann algebra by Hiai and Mosonyi \cite{hiai-mosonyi}.
Below, we extend this theorem to semi-finite cases.

\begin{theorem} \label{thm:hiai-mosonyi}
 Let $\M$ be a semi-finite von Neumann algebra. Then for two states $\rho,\sigma\in \S(\M)$ with $s(\rho)\leq s(\sigma)$, and $\al\in [1,\infty)$,  we have
\[\lim_{n\to\infty}\frac{1}{n}D_\al^M(\rho^{\otimes n}\|\sigma^{\otimes n})= {D}_{\al}(\rho\|\sigma).\]
\end{theorem}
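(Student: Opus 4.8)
The plan is to reduce the semi-finite case to the (approximately) finite-dimensional result of Hiai–Mosonyi via a two-step approximation: first a spectral truncation of $\sigma$ to finite spectrum, and then a further reduction within the commutant of the truncated $\sigma$. Throughout, the inequality ${D}_\al(\rho\|\sigma)\le D_\al^{\textup{reg},M}(\rho\|\sigma):=\lim_n \frac1n D_\al^M(\rho^{\otimes n}\|\sigma^{\otimes n})$ is the easy direction: measured R\'enyi divergence is monotone under the (CPTP) measurement channel, hence $D_\al^M(\rho^{\otimes n}\|\sigma^{\otimes n})\le D_\al(\rho^{\otimes n}\|\sigma^{\otimes n})=n D_\al(\rho\|\sigma)$ by additivity of the sandwiched R\'enyi divergence under tensor products; superadditivity of $n\mapsto D_\al^M(\rho^{\otimes n}\|\sigma^{\otimes n})$ (tensorizing POVMs) gives that the limit exists and equals the supremum. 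So the content is the reverse inequality $D_\al(\rho\|\sigma)\le \lim_n \frac1n D_\al^M(\rho^{\otimes n}\|\sigma^{\otimes n})$.

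For the reverse inequality I would argue as follows. \textbf{Step 1 (spectral truncation).} Using the Hayashi–Tomamichel spectrum-truncation technique (as in \cite{toma-haya}), for $\delta>0$ replace $\sigma$ by a positive element $\sigma_\delta$ with \emph{finite} spectrum obtained by grouping the spectral projections of $\sigma$ into finitely many geometrically-spaced bands $[(1+\delta)^{-(k+1)}\|\sigma\|,(1+\delta)^{-k}\|\sigma\|)$ (plus a tail), so that $(1+\delta)^{-1}\sigma_\delta\le \sigma\le \sigma_\delta$ after discarding a small tail; this costs only $O(\log\|\sigma\|_{\infty}/\log(1+\delta))$ bands and changes both $D_\al$ and $D_\al^M$ by at most a term of order $\tfrac{\al}{\al-1}\log(1+\delta)$ plus a controllable tail error on $\rho$'s mass. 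Concretely $D_\al(\rho\|\sigma)\le D_\al(\rho\|\sigma_\delta)+\tfrac{\al}{\al-1}\log(1+\delta)$ and similarly for the measured quantity, so it suffices to prove the theorem for $\sigma$ of finite spectrum and then let $\delta\to0$. \textbf{Step 2 (reduce to the commutant).} With $\sigma$ of finite spectrum, let $\N_\sigma=\{\sigma\}'$ and $\E_\sigma$ the trace-preserving conditional expectation onto it. By Lemma~\ref{lemma:pinching-spectrum} (with $\Phi=\Psi=\id$, noting the pinching is the measurement by the spectral projections of $\sigma$, which is itself a POVM), $D_\al(\rho\|\sigma)\le D_\al(\E_\sigma(\rho)\|\sigma)+\tfrac{\al}{\al-1}\log|\spec(\sigma)|$, and applying this to $\rho^{\otimes n}$, $\sigma^{\otimes n}$ gives $D_\al(\rho^{\otimes n}\|\sigma^{\otimes n})\le D_\al(\E_\sigma^{\otimes n}(\rho^{\otimes n})\|\sigma^{\otimes n})+n\,\tfrac{\al}{\al-1}\log|\spec(\sigma)|$. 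Crucially $|\spec(\sigma^{\otimes n})|$ grows only polynomially in $n$ when $\sigma$ has finite spectrum, so after dividing by $n$ and taking $n\to\infty$ that error disappears, and $\frac1n D_\al(\rho^{\otimes n}\|\sigma^{\otimes n})=D_\al(\rho\|\sigma)$.

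\textbf{Step 3 (apply Hiai–Mosonyi inside $\N_\sigma$).} Now $\E_\sigma(\rho)$ and $\sigma$ both lie in $\N_\sigma$, where $\sigma$ is central. The pair $(\N_\sigma,\tau|_{\N_\sigma})$ is semi-finite, but more to the point, because $\sigma$ is now a \emph{fixed, central, finite-spectrum} element, a measured R\'enyi divergence computed with POVMs in $\M$ dominates the one computed with POVMs in $\N_\sigma$, and for the latter one can invoke the approximate-finite-dimensional result of Hiai–Mosonyi \cite{hiai-mosonyi}: any state on a semi-finite algebra against a fixed element is, after restricting to a suitable finite subalgebra (the one generated by $\E_\sigma(\rho)$ and $\sigma$, or an increasing net of finite-trace corners $e\N_\sigma e$ approximating it), arbitrarily well approximated, and on finite-dimensional algebras $\lim_n\frac1n D_\al^M=D_\al$. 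Chaining: $D_\al(\rho\|\sigma)\le D_\al(\E_\sigma(\rho)\|\sigma)+o(1)\cdot n^{-1}\cdot n=\lim_n\frac1n D_\al^M(\E_\sigma(\rho)^{\otimes n}\|\sigma^{\otimes n})\le \lim_n\frac1n D_\al^M(\rho^{\otimes n}\|\sigma^{\otimes n})$, the last step by data-processing of $D_\al^M$ under $\E_\sigma^{\otimes n}$ (composing a $\N_\sigma^{\otimes n}$-POVM with $\E_\sigma^{\otimes n}$ yields an $\M^{\otimes n}$-POVM). Combining with the easy direction closes the proof.

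The main obstacle I anticipate is \textbf{Step 3}: transferring the Ogawa–Mosonyi/Hiai–Mosonyi equality $\lim_n\frac1n D_\al^M=D_\al$ from (approximately) finite-dimensional algebras to a genuinely semi-finite $\N_\sigma$ with an \emph{infinite}-trace normalization. One must either verify that $D_\al$ and $D_\al^M$ of two states are each continuous under the restriction to an increasing net of finite-trace corners $e\N_\sigma e$ (lower semicontinuity plus a matching upper bound, using that $\E_\sigma(\rho),\sigma$ have $\sigma$-finite support), or directly run the Hiai–Mosonyi argument—whose engine is a Gibbs/variational representation of $D_\al^M$ and a de Finetti or typical-projection reduction—in the semi-finite setting. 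Everything else (Steps 1 and 2) is a bookkeeping exercise with the spectrum-truncation estimates and the pinching lemma already established in the excerpt.
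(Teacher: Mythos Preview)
Your outline has the right ingredients but takes an unnecessary detour and contains a technical slip. The paper's argument is considerably simpler than what you propose.

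In your Step 2 you write $\E_\sigma^{\otimes n}$ and record the error as $n\,\tfrac{\al}{\al-1}\log|\spec(\sigma)|$, then immediately invoke the polynomial growth of $|\spec(\sigma^{\otimes n})|$ to say it vanishes. These are inconsistent: the tensor-product pinching $\E_\sigma^{\otimes n}$ gives the linear error $n\log|\spec(\sigma)|$, which does \emph{not} vanish after dividing by $n$. What you want is the \emph{full} pinching $\E_{\sigma^{\otimes n}}$, which gives error $\tfrac{\al}{\al-1}\log|\spec(\sigma^{\otimes n})|\le \tfrac{\al}{\al-1}(k-1)\log(n+1)$, and this is what disappears in the limit.

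More importantly, your Step 3 is not an obstacle at all: once you pinch, $\E_{\sigma^{\otimes n}}(\rho^{\otimes n})$ and $\sigma^{\otimes n}$ \emph{commute}, so they sit in a commutative subalgebra and the sandwiched and measured R\'enyi divergences coincide there automatically. Thus
\[
D_\al(\E_{\sigma^{\otimes n}}(\rho^{\otimes n})\|\sigma^{\otimes n})=D_\al^M(\E_{\sigma^{\otimes n}}(\rho^{\otimes n})\|\sigma^{\otimes n})\le D_\al^M(\rho^{\otimes n}\|\sigma^{\otimes n}),
\]
the last step by data processing for $D_\al^M$ under $\E_{\sigma^{\otimes n}}$. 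This gives the one-shot bound $D_\al(\rho\|\sigma)\le D_\al^M(\rho\|\sigma)+\tfrac{\al}{\al-1}\log|\spec(\sigma)|$ directly, with no appeal to the approximately-finite-dimensional theorem of Hiai--Mosonyi; regularizing finishes the finite-spectrum case. Your attempt to pass to $\N_\sigma$ and then run a semi-finite version of Hiai--Mosonyi there is circular (you are trying to prove exactly that statement) and unnecessary.

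For the reduction to finite spectrum, the paper proceeds in two steps rather than one: first assume $\sigma$ has bounded spectrum and discretize it via the Hayashi--Tomamichel step function (this is your Step~1, made precise), then for general $\sigma$ approximate by the compression channels $\Phi_k(x)=e_kxe_k+\tau(x(1-e_k))\tau(e_k')^{-1}e_k'$ with $e_k$ the spectral projection of $\sigma$ on $[1/k,k]$, and use lower semicontinuity plus data processing of $D_\al$ and $D_\al^M$. Your single ``plus a controllable tail error'' sweeps under the rug the fact that $\sigma$ need not be bounded below, so the geometric banding alone cannot produce finitely many bands.
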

\begin{proof}
The $\leq$ direction  follows from data processing inequality. Indeed, any measurement operator $\mathfrak{M}:L_1(\M)\to L_1(\Omega) $ can be thought of a channel and
$D_\al(\mathfrak{M}(\rho)\|\mathfrak{M}(\sigma))\leq D_\al(\rho\|\sigma)$, by the data processing inequality. And hence $D_\al^{M}(\rho\|\sigma)\leq D_\al(\rho\|\sigma)$. The $\leq$ direction now follows from the fact that the sandwiched R\'enyi relative entropy is additive under tensor product.

The proof for the other direction is divided into several steps.\\
{\bf Step 1}. \textbf{$\sigma$ has finite spectrum.}
This proof is identical to finite-dimensional case \cite[Theorem III. 7]{mosnyi-ogawa} . Assume that $\sigma=\sum_{i=1}^k \mu_i E_i$ with $k=|\text{spec}(\sigma)|$. Using Lemma \ref{lemma:pinching-spectrum} (with $\Phi=\Psi=\id$),
\begin{align} D_\al(\rho\|\sigma)\le D_\al(\E_\sigma(\rho)\|\sigma)+  \frac{\al}{\al-1}\log k\le D_\al^M(\rho\|\sigma)+ \frac{\al}{\al-1}\log k \label{eq:oneshot}\end{align}
where we use the fact $\E_\sigma(\rho)$ and $\sigma$ commute. Using the fact $|\text{spec}(\sigma^{\otimes n})|\le (n+1)^{k-1}$, we have
\begin{align*}
 {D}_{\al}(\rho\|\sigma)&=\lim_{n\to\infty} \frac{1}{n}{D}_{\al}(\rho^{\otimes n}\|\sigma^{\otimes n})\\
 &\leq \lim_{n\to\infty}\frac{1}{n} {D}_{\al}^M(\rho^{\otimes n}\|\sigma^{\otimes n}) + \frac{1}{n} \frac{\al}{\al-1}\log |\text{spec}(\sigma^{\otimes n})| \\
 &\leq  \lim_{n\to\infty}\frac{1}{n} {D}_{\al}^M(\rho^{\otimes n}\|\sigma^{\otimes n})+\frac{(k-1)}{n}\log(n+1)\\
 &\leq \lim_{n\to\infty}\frac{1}{n}D_\al^M(\rho^{\otimes n}\|\sigma^{\otimes n})\pl.
\end{align*}
{\bf Step 2}. \textbf{ $\sigma$ has bounded spectrum.}
For this proof, we use the spectrum truncation method of Hayashi and Tomamichel \cite{toma-haya}. Assume $m s(\sigma)\leq \sigma\leq M s(\sigma)$. Let $\theta=\theta(\sigma)=\log M-\log m$. For a positive integer $l$, we define the step function
\begin{align*}
f_l(y)=\begin{cases}
  m, & \mbox{if } y=m \\
  2^{\frac{\theta i}{l}}m , & \mbox{if }  y\in (2^{\frac{\theta i}{l}}m,2^{\frac{\theta (i+1)}{l}}m], 0\leq i\leq l.
\end{cases}
\end{align*}
Define
\[\sigma':=f_l(\sigma)=m E_\sigma(\{m\})+\sum_{i=0}^l 2^{\frac{\theta i}{l}}m E_\sigma\Big((2^{\frac{\theta i}{l}}m,2^{\frac{\theta (i+1)}{l}}m]\Big)\]
where $E_\sigma$ is the spectral projection of $\sigma$. Since $2^{-\theta/l}y \leq f_l(y)\leq y$, we have
\[2^{-\theta/l} \sigma\leq \sigma'\leq \sigma,\]
and $|\spec(\sigma')|=l+1$. Using \eqref{eq:oneshot},
\begin{align}\label{eq:boundedspectrum}
D_\al(\rho\|\sigma) &\leq D_\al(\rho\|\sigma')\nonumber\\
&\leq D_\al^{M} (\rho\|\sigma')+\frac{\al}{\al-1} \log (l+1) \nonumber\\
&\leq D_\al^{M} (\rho\|\sigma) +\frac{\al}{\al-1} \log (l+1)+ \frac{\theta}{l}\nonumber\\
&\leq D_\al^{M} (\rho\|\sigma) +\frac{\al}{\al-1} \log (2\lceil \theta +1\rceil)
\end{align}
where we choose $l=\lceil \theta\rceil$ (our $\log$ is base 2). Note that $\theta(\sigma^{\ten n})=\log M^n-\log m^n=n\theta$. Applying \eqref{eq:boundedspectrum} to $\rho^{\ten n}$ and $\sigma^{\ten n}$,
\begin{align*}
 {D}_{\al}(\rho\|\sigma)&=\lim_{n\to\infty} \frac{1}{n}{D}_{\al}(\rho^{\otimes n}\|\sigma^{\otimes n})\\
 &\leq \lim_{n\to\infty}\frac{1}{n} {D}_{\al}^M(\rho^{\otimes n}\|\sigma^{\otimes n}) + \frac{1}{n} \frac{\al}{\al-1}\log (2\lceil \theta(\sigma^{\ten n})+1\rceil) \\
 &\leq  \lim_{n\to\infty}\frac{1}{n} {D}_{\al}^M(\rho^{\otimes n}\|\sigma^{\otimes n})+\frac{1}{n}\log(2n\lceil\theta(\sigma)+1\rceil)\\
 &\leq \lim_{n\to\infty}\frac{1}{n}D_\al^M(\rho^{\otimes n}\|\sigma^{\otimes n})\pl.
\end{align*}
{\bf Step 3.} \textbf{General $\sigma\in \S(\M)$}.
Here we use approximation. For $k\in \mathbb{N}$,
let $e_k:=\mathbbm{1}_{[1/k,k]}(\sigma)$ be the spectral projection of $\sigma$ supported on the interval $[1/k,k]$ and $e_k'$ be any nonzero projection of finite trace orthogonal to $e_k$. Define the quantum channel
\[\Phi_k(x)=e_k x e_k +\frac{\tau(x(1-e_k))}{\tau(e_k')} e_k'\]
It holds that for $\rho_k:=\Phi_k(\rho)\to\rho$ and $\sigma_k:=\Phi_k(\sigma)\to\sigma$ in $L_1$. Indeed,
\begin{align*} \norm{\rho_k-\rho}{1}\le &\tau(\rho (1-e_k))+\norm{(1-e_k)\rho e_k}{1}+\norm{\rho (1-e_k)}{1}\\
\le &\tau(\rho (1-e_k))+\norm{(1-e_k)\sqrt{\rho}}{2}\norm{\sqrt{\rho} e_k}{1}+ \norm{(1-e_k)\sqrt{\rho}}{2}\norm{\sqrt{\rho}}{1}\\
\le &\tau(\rho (1-e_k))+\norm{(1-e_k)\sqrt{\rho}}{2}+ \norm{(1-e_k)\sqrt{\rho}}{2}\\
= &\tau(\rho (1-e_k))+\tau(\rho (1-e_k))^{1/2}+ \tau(\rho (1-e_k))^{1/2}
\end{align*}
which converges to $0$ because $e_k\to s(\sigma)$ weakly and $s(\rho)\le s(\sigma)$. The argument for $\sigma_k$ is similar.
Then by lower semi-continuity \cite[Proposition 3.10]{jencova-1} and data processing inequality of $D_\al$,
\[ {D}_{\al}(\rho\|\sigma)\le \liminf_k {D}_{\al}(\rho_k\|\sigma_k)\le \limsup_k {D}_{\al}(\rho_k\|\sigma_k)\le {D}_{\al}(\rho\|\sigma)\pl.\]
Note that each $\sigma_k$ has bounded spectrum.
\begin{align*}
{D}_{\al}(\rho\|\sigma) &=\lim_k {D}_{\al}(\rho_k\|\sigma_k)\\
&=\lim_k \lim_n \frac{1}{n} {D}_{\al}^M(\rho_k^{\otimes n}\|\sigma_k^{\otimes n})\\
&\leq  \lim_n \frac{1}{n}  {D}_{\al}^M(\rho^{\otimes n}\|\sigma^{\otimes n})
\end{align*}
where we use, once again, the data processing inequality of ${D}_{\al}^M$ for $\rho_k^{\otimes n}=\Phi_k^{\ten n}(\rho^{\otimes n} )$ and $\sigma_k^{\otimes n}=\Phi_k^{\ten n}(\sigma^{\otimes n} )$. That finishes the proof.
\end{proof}
We shall now discuss the proof of chain rule Theorem \ref{thm:chain rule}. We follow the recent simple proof in \cite{berta-toma} (albeit in finite dimensions). We note the following definition of Motsumoto's maximal $f$-divergence  ( \cite{matsumoto-1}) for $D_\al$:
\begin{equation}\label{max-motsumoto}
D_{\al}^{\max}(\rho\|\sigma):=\inf_{\Lambda, P, Q}D_\al(P\|Q),
\end{equation}
where $D_\al(P\|Q)$ is the classical R\'enyi $\al$-divergence, and the infimum is over all possible state preparation channels
\[\Lambda: L_1(\Omega,\mu)\to L_1(\M)\]
such that $\rho=\Lambda(P)$ and  $\sigma=\Lambda(Q)$ for two probability distribution on $(\Omega,\mu)$. See  Hiai's book \cite{hiai-book} for more details about how this divergence is related to the geometric relative entropy for operator convex functions.


 Note that the finiteness of $\Omega$ is used in the proof of chain rule in \cite{berta-toma}.
Nevertheless, in infinite dimensions, the optimal preparation channel can be from a continuous classical system.
Because of this, for proving Theorem \ref{thm:chain rule}, we start with the baby case that the pair $(\rho,\sigma)$ admits an optimal preparation channel with finite input, and then gradually extends to the general pair of states $(\rho,\sigma)$.
We outline the proof of Theorem \ref{thm:chain rule} below.
\begin{enumerate}
\item {\bf Step 1.} At the outset, we prove the key inequality
\begin{align}{D}_{\al}(\Phi(\rho)\|\Psi(\sigma))\leq {D}_{\al}^M(\rho\|\sigma)+{D}_{\al}(\Phi\|\Psi)+\frac{\al}{\al-1}\log |spec(\sigma)|.\label{eq:key}\end{align}
for $\sigma$ with finite spectrum and $\rho$ such that $\sigma^{-1/2}\rho\sigma^{-1/2}$ also has finite spectrum. The latter corresponds to the finiteness of $\Omega$ in the definition of $D_{\al}^{\max}(\rho\|\sigma)$ , so that we can run the proof from \cite{berta-toma} for such special $\rho$ and $\sigma$. Note that the chain rule follows from the regularization of the above inequality \eqref{eq:key}.
\item {\bf Step 2.} For the second step, we extend \eqref{eq:key} for $\sigma$ still with finite spectrum, but relaxing the assumption of $\rho$ such that $\sigma^{-1/2}\rho\sigma^{-1/2}$ is of bounded spectrum. This is obtained using the spectrum truncation technique from Hayashi and Tomamichel's work \cite[Theorem 14]{toma-haya} in a careful way.
\item {\bf Step 3.} We extend \eqref{eq:key} for $\sigma$ with finite spectrum and general state $\rho\in \S(\M)$. Here we use approximation with bounded and invertible operators from the previous two steps. At this stage, we obtain chain rule with restriction that $\sigma$ has finite spectrum.
\item {\bf Step 4.}  Now we extend to $\si$ with bounded spectrum. Here, we used again the spectrum truncation technique from \cite{toma-haya} and our Theorem \ref{thm:hiai-mosonyi} in semi-finite case.
\item {\bf Step 5.}  Finally we approximate general $\sigma$ by bounded and invertible states from Step 4. to finish the proof.
\end{enumerate}

\begin{proof}[Proof of Theorem \ref{thm:chain rule}]
{\bf Step 1. Both $\sigma$ and $\sigma^{-1/2}\rho\sigma^{-1/2}$ have finite spectrum.}
We recall Matsumoto's construction (see  Proposition 3 in  \cite{berta-toma} ). Assume $\sigma^{-\frac{1}{2}}\rho\sigma^{-\frac{1}{2}}=\sum_{i=1}^n \mu_i F_i$ for some finite family of orthogonal projections $\{F_i\}_{i=1}^n$. Define the map $\Lambda:l_1^n\rightarrow L_1(\M)$
\[\Lambda(y)=\sum_i y_i\frac{\sigma^{1/2}F_i\sigma^{1/2}}{\tau(\sigma F_i)},\]
where $y=(y_i)_{i=1}^n$ is a finite sequence. Note that $\tau(\sigma F_i)\le \mu_i^{-1}\tau(\sigma \sigma^{-1/2}\rho\sigma^{-1/2})=\mu_i^{-1}\tau(\rho)=\mu_i^{-1}$ is of finite trace. $\Lambda$ is then a completely positive and trace preserving map, hence a preparation channel.
From the definition, we have \[\rho=\sum_i p_i\Lambda(|i\rangle\langle i| ) \  \text{and} \ \sigma=\sum_i q_i \Lambda(|i\rangle\langle i| ),\] for the probability distribution $p=(p_i)=(\mu_i \tau(\sigma F_i)), q=(q_i)=(\tau(\sigma F_i))$.
Hence we have \[\Phi(\rho)=\sum_i p_i\Phi(\Lambda(|i\rangle\langle i| ))\pl , \pl \Psi(\sigma)=\sum_i q_i \Psi(\Lambda(|i\rangle\langle i|) ).\]
Let us denote the element $\Lambda_i=\Lambda(|i\rangle\langle i| )$. It is positive and trace 1, hence a density operator in $\S(\M)$. Now by the data-processing inequality we have
\begin{align}
{D}_{\al}(\Phi(\rho)\|\Psi(\sigma))&\leq {D}_{\al}\big(\sum_i p_i |i\rangle\langle i| \otimes \Phi(\Lambda_i)\|\sum_i q_i |i\rangle\langle i| \otimes \Psi(\Lambda_i)\big) \nonumber\\
&=\frac{1}{\al-1} \log (\sum_i p_i^\al q_i^{1-\al} 2^{(\al-1)D_\al(\Phi(\Lambda_i)\|\Psi(\Lambda_i)))}) \nonumber\\
&\le D_\al(p\|q)+\max_i D_\al(\Phi(\Lambda_i)\|\Psi(\Lambda_i)) \nonumber\\
&\le {D}_\al^{\max}(\rho\|\sigma)+ {D}_{\al}(\Phi\|\Psi)\pl.\label{eq:chainhat}
\end{align}
Here, we used the definition of the maximal divergence in equation (\ref{max-motsumoto}) with optimal $\Lambda$ and $p,q$ on the space $\Omega=\{1,\cdots, n\}$.
Now take the pinching operator $\E_\sigma$ and note that since $\E_\si(\rho)$ and $\sigma$ commute, the max $f$-divergence and the measured R\'enyi divergences coincide,
$${D}_\al^{\max}(\E_\sigma(\rho)\|\sigma)=D_\al^M(\E_\sigma(\rho)\|\sigma)=D_\al(\E_\sigma(\rho)\|\sigma)\pl. $$
Thus, from the calculations above \eqref{eq:chainhat} we have
\begin{align*}
{D}_{\al}(\Phi(\E_\sigma(\rho))\|\Psi(\sigma))&\leq {D}_\al^{\max}(\E_\sigma(\rho)\|\sigma)+{D}_{\al}(\Phi\|\Psi)\\&=D_\al^M(\E_\sigma(\rho)\|\sigma)+{D}_{\al}(\Phi\|\Psi)\\ &\leq D_\al^M(\rho\|\sigma)+{D}_{\al}(\Phi\|\Psi).
\end{align*}
The last inequality follows from the data processing inequality of $D_\al^M$. Now applying Lemma \ref{lemma:pinching-spectrum}, we get
\begin{align}\label{eq:key1}{D}_{\al}(\Phi(\rho)\|\Psi(\sigma))\leq {D}_{\al}^M(\rho\|\sigma)+{D}_{\al}(\Phi\|\Psi)+\frac{\al}{\al-1}\log |spec(\sigma)|.\end{align}

{\bf Step 2. $\sigma$ has finite spectrum and $\sigma^{-1/2}\rho\sigma^{-1/2}$ has bounded spectrum.}
Let $\sigma=\sum_k \lambda_k E_k$ be the spectral decomposition with $\{E_j\}_{j=1}^n$ being orthogonal decomposition and $\sum_{j=1}^nE_j=s(\sigma)$. Suppose $\rho$ is a density operator such that $s(\rho)\le s(\sigma)$ and  \begin{align}ms(\sigma)\le \sigma^{-1/2}\rho\sigma^{-1/2}\le Ms(\sigma)\label{eq:1}\end{align} for some $M>m>0$ ($\sigma^{-1}$ is the generalized inverse on its support $s(\sigma)$).

We apply the spectrum truncation from \cite{toma-haya} to $\sigma^{-1/2}\rho\sigma^{-1/2}$.  First we observe that \eqref{eq:1} imples $m\sigma\leq \rho\leq M \sigma$ and thus we can, instead, assume
$m s(\sigma)\leq \rho\leq M s(\sigma)$ for some positive constants $M>m>0$. Consequently $ms(\sigma)\leq \E_\si(\rho)\leq M s(\sigma)$, that is, $\E_\si(\rho)$ is bounded and invertible on its support $s(\sigma)$ with the spectral decomposition $\E_\si(\rho)= \int_{c}^C y dF_y$, where $F_y$ is the corresponding spectral measure. Let $\theta=\theta(\rho)=\log M-\log m$. For an positive integer $l$, we define the step function
\begin{align*}
f_l(y)=\begin{cases}
  m, & \mbox{if } y=m \\
  2^{\frac{\theta i}{l}}m , & \mbox{if }  y\in (2^{\frac{\theta i}{l}}m,2^{\frac{\theta (i+1)}{l}}m], 0\leq i \leq l
\end{cases}
\end{align*}
and the functional calculus
\[w_l:=f_l(\E_\si(\rho))=m F(\{m\})+\sum_{i=0}^l 2^{\frac{\theta i}{l}}m F((2^{\frac{\theta i}{l}}m,2^{\frac{\theta (i+1)}{l}}m]).\]
Since $2^{-\theta/l}y \leq f_l(y)\leq y$, we have
\[2^{-\theta/l}\E_\si(\rho) \leq w_l\leq \E_\si(\rho),\]
and by construction $w_l$ is a sub-state with simple spectrum. Note that from the relation
\[2^{-\theta/l}\E_\si(\rho) \leq w_l\leq \E_\si(\rho)\] we get for any quantum channel $\Phi$
\[2^{-\theta/l}\Phi(\E_\si(\rho)) \leq \Phi(w_l)\leq \Phi(\E_\si(\rho)),\] and hence
\[D_\al(\Phi(\E_\si(\rho))\|\Psi(\sigma))\leq   D_\al(\Phi(w_l)\|\Psi(\sigma)) +  \frac{\theta}{l}.\]
Now since both $w_l$ and $\sigma$ has finite spectrum and they commute, $\sigma^{-1/2}w_l\sigma^{-1/2}$ has finite spectrum. Using the same argument in Step 1. we get
\begin{align*}
D_\al(\Phi(\E_\si(\rho))\|\Psi(\sigma))&\leq  D_\al(\Phi(w_l))\|\Psi(\sigma)) +  \frac{\theta}{l}\\
& \leq [D_\al^{M} (w_l \| \sigma)+ D_\al (\Phi\|\Psi)]+  \frac{\theta}{l}\\
& \leq [D_\al^{M} (\E_\si(\rho)\|\sigma)+ D_\al (\Phi\|\Psi)]+  \frac{\theta}{l}.
\end{align*}
The last inequality follows because $ w_l\leq \E_\si(\rho)$.
Now letting $l\to\infty$ we have
$$D_\al(\Phi(\E_\si(\rho))\|\Psi(\sigma))\leq D_\al^{M} (\E_\si(\rho)\|\sigma)+ D_\al (\Phi\|\Psi).$$
The inequality \eqref{eq:key1} follows from the same argument of Step 1.

{\bf Step 3. $\sigma$ has finite spectrum and $\rho$ is a state with $s(\rho)\le s(\sigma).$}
Consider $\E_\si(\rho)=\sum_k E_k \rho E_k$. We truncate the spectrum of each $\rho_k:=E_k \rho E_k$.
Let us denote $e_k:=s(\rho_k)$ as the support of $\rho_k$ and $e_{n,k}={e}_{[1/n,n]} (\rho_k)$ as the spectral projection of $\rho_k$ on the interval $[1/n,n]$, for $n\in \mathbb{N}$. So $s(\E_\sigma(\rho))=:e=\sum_{k}e_k$ and also $e_{n, k}\to e_k$ weakly as $n\to\infty$.

For each $n\in \mathbb{N}$,  define the following quantum channel
\[T_n(x)=\sum_k e_{n,k}  x e_{n,k}+ \sum_k  \frac{\tau(x (E_k-e_{n,k}))}{\tau(E_k-e_{n,k})} (E_k-e_{n,k}). \]
It should be noted that $\tau(E_k)<\infty$, for any $k$.
Note that $T_n(E_k)=E_k$, hence $T_n(\sigma)=\sigma$. Moreover, $T_n(\E_\sigma(\rho))\to\E_\si(\rho)$ in $L_1$ and each $T_n(\E_\sigma(\rho))$ has bounded spectrum and commutes with $\sigma$ . Note that $T_n(\E_\sigma(\rho))=T_n(\rho)=\E_\sigma(T_n(\rho))$. Now using Lemma \ref{lemma:pinching-spectrum} and Step 2, we have
\begin{align*}
D_\al(\Phi(\rho)\|\Psi(\sigma))&\leq \frac{\al}{\al-1}\log |\spec(\sigma)| + D_\al(\Phi(\E_\si(\rho))\|\Psi(\sigma))\\
&\leq  \frac{\al}{\al-1}\log |\spec(\sigma)| + \lim\inf_n  D_\al(\Phi\circ \E_\si \circ T_n(\rho)\|\Psi\circ T_n(\sigma))\\
&\leq  \frac{\al}{\al-1}\log |\spec(\sigma)|+  \lim\inf_n D_\al^{M} ( \E_\si\circ T_n(\rho)\|\sigma)+ D_\al (\Phi\|\Psi)\\
&\leq   \frac{\al}{\al-1}\log |\spec(\sigma)|+  D_\al^{M} (\rho\|\sigma)+ D_\al (\Phi\|\Psi).
\end{align*}
{\bf Step 4. $\sigma$ has bounded spectrum and $s(\rho)\le s(\sigma)$.}
We truncate the spectrum of $\sigma$ in this step. Assume that $b^{-1}1\leq \sigma\leq b 1$ for some $b>0$. Define $\theta=\theta(\sigma)=\log b-\log 1/b$. We proceed similarly as Step 2 of Theorem \ref{thm:hiai-mosonyi} to define the sub-state $\sigma'=f_l(\sigma)$ with finite spectrum and
\[2^{\frac{-\theta}{l}}\sigma\le \sigma'\le \sigma.\]
Using Step 3 and $|\spec(\sigma')|=l+1$, we get
\begin{align*}
D_\al(\Phi(\rho)\|\Psi(\sigma)) &\leq D_\al(\Phi(\rho)\|\Psi(\sigma'))\\
&\leq D_\al^{M} (\rho\|\sigma')+ D_\al (\Phi\|\Psi) +\frac{\al}{\al-1} \log (l+1) \\
&\leq D_\al^{M} (\rho\|\sigma)+ D_\al (\Phi\|\Psi) +\frac{\al}{\al-1} \log (l+1)+ \frac{\theta(\sigma)}{l}\\
&\leq D_\al^{M} (\rho\|\sigma)+ D_\al (\Phi\|\Psi) +\frac{\al}{\al-1} \log (2\log \lceil \theta(\sigma)+1\rceil)
\end{align*}
where in the last line we choose $l=\lceil \theta(\sigma)\rceil$. The third inequality follows by comparing $D_\al^{M} (\rho\|\sigma')$ and $D_\al^{M} (\rho\|\sigma)$ from the relation $2^{\frac{-\theta}{l}}\sigma\le \sigma'\le \sigma$.
Now, in the i.i.d. setting, $\theta(\sigma^{\otimes n})=n \theta(\sigma)$, hence
\begin{align*}
&\lim_{n\to\infty} \frac{1}{n}D_\al(\Phi^{\otimes n} (\rho^{\otimes n})\| \Psi^{\otimes n} (\sigma^{\otimes n} ))\\ \leq &\lim_{n\to\infty}  \frac{1}{n} D_\al^{M} (\rho^{\otimes n}\|\sigma^{\otimes n}) + \lim_{n\to\infty}  \frac{1}{n} D_\al (\Phi^{\otimes n}\|\Psi^{\otimes n})+ \lim_{n\to\infty}   \frac{\al}{\al -1}\frac{\log (2 n\lceil \theta(\sigma)+1\rceil )}{n}
\end{align*}
Therefore, by the additivity of sandwiched entropy under tensor power and Theorem \ref{thm:hiai-mosonyi},  we have
\[{D}_{\al}(\Phi(\rho)\|\Psi(\sigma))\leq {D}_{\al}(\rho\|\sigma)+{D}_{\al}^{\textup{reg}}(\Phi\|\Psi).\]
{\bf Step 5. General states $\sigma$ and $\rho$.}
For each $k\in \mathbb{N}$, consider the projection $e_k=\mathbbm{1}_{[\frac{1}{k}, k]} (\sigma)$, that is, the spectral projection of $\sigma$ supported on the interval $[\frac{1}{k}, k]$. Choose a projection $e_k'$ of finite trace in the orthogonal complement of $e_k$, that is, $e_k'e_k=0$. Define the quantum channel
$$\E_k(x)= e_kxe_k+ \frac{\tau(x(1-e_k))}{\tau(e'_k)}e_k'$$
 Note that $\E_k(\rho)\to\rho$ and $\E_k(\sigma)\to\sigma$ in $L_1$ (as in the proof of Theorem \ref{thm:hiai-mosonyi}). Since each $\E_k(\sigma)$ has bounded spectrum, we have
 \begin{align*}
D_{\al}(\Phi(\rho)\|\Psi(\sigma))&\leq \liminf_k D_{\al}(\Phi(\E_k(\rho))\|\Psi(\E_k(\sigma)))\\
&\leq \liminf_k D_{\al}(\E_k(\rho)\|\E_k(\sigma))+ {D}_{\al}^{\textup{reg}}(\Phi\|\Psi)\\
&\le D_\al(\rho\|\sigma)+  {D}_{\al}^{\textup{reg}}(\Phi\|\Psi)
\end{align*}
 where we used lower semi-continuity and data processing inequality of $D_\al$ respectively. This completes the proof.
\end{proof}

\subsection{For general von Neumann algebras}
We shall now apply the Haagerup's reduction to quantum channels. Let $(\M,\phi)$ and $(\N,\psi)$ be two von Neumann algebras equipped with normal faithful states $\phi$ and $\psi$ respectively. We write $\hat{\M}$ (resp. $\hat{\N}$) as the crossed product algebra and denote $E_{\M}$ (resp. $E_{\N}$) and $E_{\M_n}$ (resp. $E_{\N_n}$) as the conditional expectation defined in section \ref{Haagerup reduction}. We also denote $\iota_\M:\M\to \hat{\M}$ and $\iota_{\M_n}:\M_n\to \hat{\M}$ as the inclusion maps, and similarly for $\iota_\N, \iota_{\N_n}$.
Given a normal UCP map $\Phi^*:\N\to \M$, we define the natural extension map
 \[\hat{\Phi}^*=\iota_\M\circ\Phi^*\circ E_{\N}:\hat{\N}\to \hat{\M} \]
 and for $m,n\in \bN^+$, a sequence of approximation of it as follows
 \[ \Phi^*_{m,n}= E_{\M_m}\circ\hat{\Phi}^*\circ\iota_{\N_n}: \N_n\to \M_m.\]
The relation of the above UCP maps can be summarized in the following commuting diagram
\begin{center}
\begin{tikzpicture}
  \matrix (m) [matrix of math nodes,row sep=3em,column sep=4em,minimum width=2em]
  {
    {\color{white}\hat{l}}\N & \hat{\N} & \N_n{\color{white}\hat{l}} \\
     {\color{white}\hat{l}}\M & \hat{\M} & \M_m{\color{white}\hat{l}} \\};
  \path[-stealth]
    (m-1-1) edge node [left] {$\Phi^*$} (m-2-1)
    (m-1-2) edge node [above] {$E_\N$} (m-1-1)
    edge node [right] {$\hat{\Phi}^*$} (m-2-2)
    (m-1-3) edge node [above] {$\iota_{\N_n}$} (m-1-2)
    (m-1-3) edge node [right] {$\Phi_{m,n}^*$} (m-2-3)
    (m-2-1) edge node [below] {$\iota_\M$} (m-2-2)
  (m-2-2) edge node [below] {$E_{\M_m}$} (m-2-3);
\end{tikzpicture}

\end{center}
At predual level, each $\Phi_{m,n}: \M_{m,*}\to \N_{n,*}$ is a quantum channel between the finite von Neumann algebras $\M_m$ and $\N_n$.
\begin{lemma}
\label{lemma:ultra1}
Let $\Phi,\Psi:\M_*\to \N_*$ be two quantum channels. Then for any $m,n\in\bN^+$,
\[ D_\al(\Phi \| \Psi)=D_\al(\hat{\Phi}\|\hat{\Psi})\ge D_\al(\Phi_{m,n}\|\Psi_{m,n})\pl.\]
\end{lemma}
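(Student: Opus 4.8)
The plan is to separate the statement into two parts: first the equality $D_\al(\Phi\|\Psi)=D_\al(\hat\Phi\|\hat\Psi)$, and then the inequality $D_\al(\hat\Phi\|\hat\Psi)\ge D_\al(\Phi_{m,n}\|\Psi_{m,n})$. For the equality, recall that $D_\al(\Phi\|\Psi)=\sup_{\rho\in\S(\M)}D_\al(\Phi(\rho)\|\Psi(\rho))$. The inequality ``$\ge$'' is immediate: given any $\rho\in\S(\M)$, the extension $\hat\rho=\rho\circ E_\M\in\S(\hat\M)$ satisfies $\hat\Phi(\hat\rho)=\widehat{\Phi(\rho)}$ and $\hat\Psi(\hat\rho)=\widehat{\Psi(\rho)}$ — this is a direct computation from $\hat\Phi^*=\iota_\M\circ\Phi^*\circ E_\N$ and the intertwining relation $E_\M\circ\hat\Phi^*=\Phi^*\circ E_\N$ (equivalently at predual level $\hat\Phi\circ E_{\M,*}=E_{\N,*}\circ\Phi$, noting $\hat\rho=E_{\M,*}(\hat\rho)$ in the relevant sense) — and then Proposition~\ref{lemma:ultra} gives $D_\al(\widehat{\Phi(\rho)}\|\widehat{\Psi(\rho)})=D_\al(\Phi(\rho)\|\Psi(\rho))$, so that $D_\al(\hat\Phi\|\hat\Psi)\ge\sup_\rho D_\al(\Phi(\rho)\|\Psi(\rho))=D_\al(\Phi\|\Psi)$. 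For the reverse inequality ``$\le$'', take an arbitrary state $\omega\in\S(\hat\M)$ achieving a value close to $D_\al(\hat\Phi\|\hat\Psi)$; apply the conditional expectation $E_\M:\hat\M\to\M$, which is a quantum channel, and use the data processing inequality together with the identity $\hat\Phi\circ(E_\M)_*= ?$. The cleaner route is: $E_\N\circ\hat\Phi^*\circ\iota_\M = E_\N\circ\iota_\M\circ\Phi^*\circ E_\N\circ\iota_\M$; since $E_\N\circ\iota_\N=\id_\N$ this simplifies, giving that applying $E_{\M}$ before $\hat\Phi$ and $E_\N$ after is the same as $\Phi$ sandwiched by the inclusions. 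Thus $D_\al(\hat\Phi(\omega)\|\hat\Psi(\omega))\ge D_\al\big(E_{\N,*}\hat\Phi(\omega)\|E_{\N,*}\hat\Psi(\omega)\big)=D_\al\big(\Phi(E_{\M,*}\omega)\|\Psi(E_{\M,*}\omega)\big)\le D_\al(\Phi\|\Psi)$ by data processing, and taking the supremum over $\omega$ completes the equality.

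For the inequality $D_\al(\hat\Phi\|\hat\Psi)\ge D_\al(\Phi_{m,n}\|\Psi_{m,n})$, the idea is the same data-processing argument applied to the conditional expectations $E_{\M_m}:\hat\M\to\M_m$ and $E_{\N_n}:\hat\N\to\N_n$ from Theorem~\ref{thm:hr}, together with the inclusions $\iota_{\N_n}:\N_n\hookrightarrow\hat\N$ and $\iota_{\M_m}$. Concretely, for any state $\rho\in\S(\N_n)$ (after identifying with its embedding $\rho\circ E_{\N_n}\in\S(\hat\N)$), one has $\Phi_{m,n}(\rho)=E_{\M_m,*}\big(\hat\Phi(\iota_{\N_n,*}\rho)\big)$ by definition of $\Phi_{m,n}^*=E_{\M_m}\circ\hat\Phi^*\circ\iota_{\N_n}$, and similarly for $\Psi_{m,n}$. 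Since $E_{\M_m}$ is a quantum channel and $\iota_{\N_n,*}\rho\in\S(\hat\N)$, data processing for $D_\al$ (valid in general von Neumann algebras, as recalled in Section~\ref{channels}) yields
\[
D_\al\big(\Phi_{m,n}(\rho)\|\Psi_{m,n}(\rho)\big)=D_\al\big(E_{\M_m,*}\hat\Phi(\iota_{\N_n,*}\rho)\|E_{\M_m,*}\hat\Psi(\iota_{\N_n,*}\rho)\big)\le D_\al\big(\hat\Phi(\iota_{\N_n,*}\rho)\|\hat\Psi(\iota_{\N_n,*}\rho)\big)\le D_\al(\hat\Phi\|\hat\Psi).
\]
Taking the supremum over $\rho\in\S(\N_n)$ gives the claim.

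The main obstacle I anticipate is purely bookkeeping: making the predual identifications precise and verifying the two intertwining identities $E_\M\circ\hat\Phi^*=\Phi^*\circ E_\N$ and $\Phi^*_{m,n}=E_{\M_m}\circ\hat\Phi^*\circ\iota_{\N_n}$ at the level of normal UCP maps, then transposing them correctly to the preduals (keeping track of which direction the conditional expectations act and that they are $\hat\phi$-, $\hat\psi$-preserving so that the embeddings $\iota_{\N_n,*}$ send states to states). There is no analytic difficulty — everything is an immediate consequence of the commuting diagram already displayed in the excerpt plus the data processing inequality and Proposition~\ref{lemma:ultra} — but one must be careful that the state $\omega$ (resp.\ $\rho$) and its image under a conditional expectation or inclusion genuinely remain normal states, which is guaranteed by normality and $\hat\phi$-preservation of the $E$'s in Theorem~\ref{thm:hr}.
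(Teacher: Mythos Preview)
Your overall strategy is exactly the paper's: both the equality and the inequality come from the factorizations $\hat\Phi^*=\iota_\M\circ\Phi^*\circ E_\N$, $\Phi^*=E_\M\circ\hat\Phi^*\circ\iota_\N$, and $\Phi_{m,n}^*=E_{\M_m}\circ\hat\Phi^*\circ\iota_{\N_n}$, together with data processing. The direction $D_\al(\hat\Phi\|\hat\Psi)\ge D_\al(\Phi\|\Psi)$ and the inequality $D_\al(\hat\Phi\|\hat\Psi)\ge D_\al(\Phi_{m,n}\|\Psi_{m,n})$ are handled correctly in spirit (modulo some swapped $\M/\N$ subscripts and the fact that the input state for $\Phi_{m,n}$ lives in $\S(\M_m)$, not $\S(\N_n)$).

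There is, however, a genuine gap in your argument for the direction $D_\al(\hat\Phi\|\hat\Psi)\le D_\al(\Phi\|\Psi)$. Your displayed chain reads
\[
D_\al(\hat\Phi(\omega)\|\hat\Psi(\omega))\;\ge\;\cdots\;=\;\cdots\;\le\;D_\al(\Phi\|\Psi),
\]
and a chain $\ge\cdots\le$ does not yield the upper bound you need. The problem is that you are applying a channel to the \emph{outputs} $\hat\Phi(\omega),\hat\Psi(\omega)$; data processing then gives only a lower bound on $D_\al(\hat\Phi(\omega)\|\hat\Psi(\omega))$. The fix---and this is precisely what the paper does---is to use the defining factorization $\hat\Phi^*=\iota_\M\circ\Phi^*\circ E_\N$ directly. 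At the predual level it reads $\hat\Phi=(E_\N)_*\circ\Phi\circ(\iota_\M)_*$, so $\hat\Phi(\omega)$ is \emph{already} the image under the channel $(E_\N)_*$ of $\Phi(\omega|_\M)$, and data processing gives the correct direction in one step:
\[
D_\al(\hat\Phi(\omega)\|\hat\Psi(\omega))=D_\al\big((E_\N)_*\Phi(\omega|_\M)\big\|(E_\N)_*\Psi(\omega|_\M)\big)\le D_\al\big(\Phi(\omega|_\M)\|\Psi(\omega|_\M)\big)\le D_\al(\Phi\|\Psi).
\]
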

\begin{proof} Note that $E_\M\circ \iota_{\M}=\id_\M$ and $E_\N\circ \iota_{\N}=\id_\N$.
Then we have
\[\hat{\Phi}^*=\iota_{\M}\circ \Phi^*\circ E_\N\pl, \pl  \Phi^*=E_\M\circ \iota_{\M}\circ \Phi^*\circ E_\N\circ \iota_{\N}=E_\M\circ\hat{\Phi}^*\circ \iota_{\N}\pl.\]
By data processing,
\[D_\al(\hat{\Phi}\|\hat{\Psi})= D_\al(\Phi \| \Psi).\]
The inequality also follows from the relation $\Phi^*_{m,n}= E_{\M_n}\circ\hat{\Phi}^*\circ\iota_{\N_n}$ and data processing inequality.
\end{proof}
Now we are ready to extend the chain rule to the general von Neumann algebras by Haagerup reudction.
 \begin{theorem}\label{thm:chain rule1-sec}
Let $\M,\N$ be two von Neumann algebras and $\Phi,\Psi:\M_*\to \N_*$ be two quantum channels. Then for any $\alpha\in (1,\infty]$ and $\rho,\sigma\in \S(\M)$,
\begin{align*}{D}_{\al}(\Phi(\rho)\|\Psi(\si))\leq {D}_{\al}(\rho\|\sigma)+{D}_{\al}^{\textup{reg}}(\Phi\|\Psi).\end{align*}
\end{theorem}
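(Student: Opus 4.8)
The plan is to lift the semi-finite chain rule (Theorem~\ref{thm:chain rule}) to arbitrary von Neumann algebras $\M,\N$ using Haagerup's reduction exactly as it was used for the additivity statement in the proof of Proposition above and for Proposition~\ref{lemma:ultra}. First I would fix normal faithful states $\phi$ on $\M$ and $\psi$ on $\N$, form the crossed products $\hat\M,\hat\N$, and use the approximating finite subalgebras $\M_m\subset\hat\M$, $\N_n\subset\hat\N$ together with the extended channels $\hat\Phi,\hat\Psi$ and their finite-dimensional restrictions $\Phi_{m,n},\Psi_{m,n}:\M_{m,*}\to\N_{n,*}$ introduced in Lemma~\ref{lemma:ultra1}. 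The key point is that all four objects appearing in the inequality---$D_\al(\Phi(\rho)\|\Psi(\sigma))$, $D_\al(\rho\|\sigma)$, the channel divergences, and the regularized channel divergence---can each be realized as a limit of the corresponding finite quantities, at which point Theorem~\ref{thm:chain rule} applies verbatim in each finite $(\M_m,\N_n)$.

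Concretely, the steps are as follows. \textbf{Step 1.} By Proposition~\ref{lemma:ultra}, $D_\al(\Phi(\rho)\|\Psi(\sigma))=D_\al(\hat\Phi(\hat\rho)\|\hat\Psi(\hat\sigma))$ and $D_\al(\rho\|\sigma)=D_\al(\hat\rho\|\hat\sigma)=\lim_n D_\al(\rho_n\|\sigma_n)$, where $\rho_n,\sigma_n$ are the restrictions to $\M_n$; similarly the output states $\hat\Phi(\hat\rho)$ etc.\ are norm-approximated by their restrictions to $\N_m$. \textbf{Step 2.} By Lemma~\ref{lemma:ultra1}, $D_\al^{\textup{reg}}(\Phi\|\Psi)=D_\al^{\textup{reg}}(\hat\Phi\|\hat\Psi)\ge D_\al^{\textup{reg}}(\Phi_{m,n}\|\Psi_{m,n})$ (applying that lemma to tensor powers and regularizing), so the finite regularized channel divergences are uniformly bounded above by the one we want. \textbf{Step 3.} For each pair $(m,n)$, apply the semi-finite chain rule Theorem~\ref{thm:chain rule} to the finite von Neumann algebras $\M_n,\N_m$, the channels $\Phi_{m,n},\Psi_{m,n}$ and the states $\rho_n,\sigma_n$:
\[
D_\al(\Phi_{m,n}(\rho_n)\|\Psi_{m,n}(\sigma_n))\leq D_\al(\rho_n\|\sigma_n)+D_\al^{\textup{reg}}(\Phi_{m,n}\|\Psi_{m,n})\leq D_\al(\rho_n\|\sigma_n)+D_\al^{\textup{reg}}(\Phi\|\Psi).
\]
\textbf{Step 4.} Take $\liminf$ over $m$ and then $n$. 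On the right-hand side $D_\al(\rho_n\|\sigma_n)\to D_\al(\rho\|\sigma)$ by Step~1. On the left, one needs that $\Phi_{m,n}(\rho_n)\to\hat\Phi(\hat\rho)$ (and likewise for $\Psi$) in norm as $m,n\to\infty$, which follows from the martingale convergence $E_{\N_m}(x)\to x$ and $\rho_n\to\hat\rho$ together with normality of the maps, combined with the lower semi-continuity of $D_\al$ (Proposition~3.10 of \cite{jencova-1}). This yields $D_\al(\hat\Phi(\hat\rho)\|\hat\Psi(\hat\sigma))\le D_\al(\rho\|\sigma)+D_\al^{\textup{reg}}(\Phi\|\Psi)$, and by Step~1 the left side equals $D_\al(\Phi(\rho)\|\Psi(\sigma))$, which is the claim. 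The case $\al=\infty$ follows by taking $\al\to\infty$ (monotonicity and continuity of $D_\al$ in $\al$, Prop.~3.7--3.8 of \cite{jencova-1}), or directly since $D_{\max}$ also behaves well under Haagerup reduction by Proposition~\ref{lemma:ultra}.

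The main obstacle I anticipate is \textbf{Step 4}: controlling the convergence of the composed objects $\Phi_{m,n}(\rho_n)$ to $\hat\Phi(\hat\rho)$ in a topology strong enough to combine with lower semi-continuity of $D_\al$, since two independent approximation parameters $m$ (the output algebra) and $n$ (the input algebra/state) are in play, and one must be careful that the channel divergence bound from Step~2 survives the restriction to $\M_m$ on the output side. This is exactly the type of bookkeeping already carried out in the proof of the additivity Proposition via Lemma~\ref{lemma:ultra1}, so the argument should go through, but it requires citing lower semi-continuity and the martingale/norm convergence from Theorem~\ref{thm:hr} and \cite[Theorem 3.1]{haagerup2010reduction} carefully rather than treating the limits cavalierly.
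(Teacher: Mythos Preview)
Your proposal is correct and follows essentially the same route as the paper: reduce to the finite von Neumann algebras $\M_m,\N_n$ via Haagerup's construction, apply the semi-finite chain rule Theorem~\ref{thm:chain rule} to $\Phi_{m,n},\Psi_{m,n}$ and the restricted states, bound the right-hand side using Proposition~\ref{lemma:ultra} and Lemma~\ref{lemma:ultra1}, and recover the left-hand side in the limit via martingale convergence and lower semi-continuity of $D_\al$. The paper handles the convergence in Step~4 exactly as you anticipate, by first identifying $\Phi_{m,n}(\rho_m)=\hat\Phi(\rho_m)|_{\N_n}$ and then invoking the martingale convergence of \cite[Theorem~3.1]{haagerup2010reduction} together with lower semi-continuity; so your concern there is well placed but resolvable with the tools you cite.
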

\begin{proof} We assume that $\sigma$ and $\Psi(\sigma)$ have full support. Otherwise we can restrict our discussion on their support. Consider $\M,\M_n\subset \hat{\M}$ and $\N,\N_n\subset \hat{\N}$ as the Haagerup construction discussed above. For any $\rho\in \S(\M)$,
we have
\begin{align*} &\Phi_{m, n}(\rho_m)=\rho_m \circ E_{\M_m}\circ\hat{\Phi}^*\circ\iota_{\N_n}=\hat{\Phi}(\rho_m)|_{\N_n}\pl,\\ &\hat{\Phi}(\hat{\rho})=\rho\circ E_\M \circ\iota_{\M}\circ \Phi^*\circ E_\N=\Phi(\rho)\circ E_\N=\widehat{\Phi(\rho)}\pl.
\end{align*}
By the martingale convergence \cite[Theorem 3.1]{hiai-mosonyi}
\begin{align*} &\liminf_{m}\lim_{n}D_\al(\Phi_{m,n}(\rho_m)\|\Psi_{m,n}(\sigma_m))\\ =& \liminf_{m}\lim_{n}D_\al(\hat{\Phi}(\rho_m)|_{\N_n}\|\hat{\Psi}(\sigma_m)|_{\N_n})
\\ =&  \liminf_{m}D_\al(\hat{\Phi}(\rho_m)\|\hat{\Psi}(\sigma_m))
\\ \ge& D_\al(\hat{\Phi}(\hat{\rho})\|\hat{\Psi}(\hat{\sigma}))=D_\al(\widehat{\Phi(\rho)}\|\widehat{\Psi(\sigma)})=D_\al(\Phi(\rho)\|\Psi(\sigma)).
\end{align*}
The above last equality is due to Proposition \ref{lemma:ultra}. Then we have
 \begin{align*}D_\al(\Phi(\rho)\|\Psi(\sigma)) \le &\liminf_{m}\lim_{n}D_\al(\Phi_{m,n}(\rho_m)\|\Psi_{m,n}(\sigma_m))\\
\le &\liminf_{m}\lim_{n} D_\al(\rho_m\|\sigma_m)+ D_\al^{\textup{reg}}(\Phi_{m,n}\|\Psi_{m,n})\\
 \le  &D_\al(\rho\|\sigma)+ D_\al^{\textup{reg}}(\Phi\|\Psi)
\end{align*}
where the second inequality is the chain rule Theorem \ref{thm:chain rule} for channels on finite von Neumann algebra $\Phi_{m,n},\Psi_{m,n}:\M_m\to\N_n$, and the last inequality uses Proposition \ref{lemma:ultra} and the fact that $D_\al^{\textup{reg}}(\Phi_{m,n}\|\Psi_{m,n})\leq D_\al^{\textup{reg}}(\Phi\|\Psi) $ can be seen similarly to Lemma \ref{lemma:ultra1}.
\end{proof}

\subsection{Sequential Application of Channels.}

As a corollary, we obtain the relative entropy accumulation theorem for two sequences of channels.
\begin{theorem}
Let $\eps \in (0,1)$ and $1 < \alpha \leq 2$. Let $\M_1,\cdots,\M_n,\M_{n+1}$ be a sequence of von Neumann algebras and $\Phi_i,\Psi_i:(\M_i)_*\to (\M_{i+1})_*$ for $1\le i\le n$ be two sequences of quantum channels. Then
\[D_{\max}^\eps (\Phi_n\circ\cdots\circ\Phi_1 \| \Psi_n\circ\cdots\circ\Psi_1)\leq \sum_{i=1}^n D_\al^{\textup{reg}}(\Phi_i\|\Psi_i) +  \frac{1}{\al-1} \log \frac{2}{\eps^2} -\log ( \sqrt{1-\eps^2}).\]
\end{theorem}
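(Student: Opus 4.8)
The plan is to obtain this accumulation bound as a direct consequence of two results already in hand: the single-shot estimate Lemma~\ref{lemma-1}(iii), which converts a smooth max-divergence into a sandwiched R\'enyi divergence, and the chain rule for quantum channels (Theorem~\ref{thm:chain rule1-sec}). The strategy is to pass from $D_{\max}^\eps$ to $D_\al$ only once, at the output end of the composition, and then to peel off the channels $\Phi_n,\Psi_n,\dots,\Phi_1,\Psi_1$ one at a time. Since the channel divergence is a supremum over input states, $D_{\max}^\eps(\Phi_n\circ\cdots\circ\Phi_1\,\|\,\Psi_n\circ\cdots\circ\Psi_1)=\sup_{\rho\in\S(\M_1)}D_{\max}^\eps\!\big(\Phi_n\circ\cdots\circ\Phi_1(\rho)\,\|\,\Psi_n\circ\cdots\circ\Psi_1(\rho)\big)$, so it suffices to bound the right-hand quantity for a fixed arbitrary $\rho\in\S(\M_1)$ and then take the supremum.

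First I would write $\omega:=\Phi_n\circ\cdots\circ\Phi_1(\rho)$ and $\theta:=\Psi_n\circ\cdots\circ\Psi_1(\rho)$, which are states on $\M_{n+1}$ because each $\Phi_i$ and $\Psi_i$ maps states to states. Applying Lemma~\ref{lemma-1}(iii) to $(\omega,\theta)$ with the given $\al\in(1,2]$ gives
\[D_{\max}^\eps(\omega\|\theta)\le D_\al(\omega\|\theta)+\frac{1}{\al-1}\log\big(1-\sqrt{1-\eps^2}\big)^{-1}.\]
Next I would bound $D_\al(\omega\|\theta)$ by telescoping along the two sequences of channels. Set $\rho_0=\sigma_0=\rho$ and, for $1\le k\le n$, $\rho_k:=\Phi_k(\rho_{k-1})\in\S(\M_{k+1})$ and $\sigma_k:=\Psi_k(\sigma_{k-1})\in\S(\M_{k+1})$, so that $\omega=\rho_n$ and $\theta=\sigma_n$. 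Applying the chain rule Theorem~\ref{thm:chain rule1-sec} to the channels $\Phi_k,\Psi_k:(\M_k)_*\to(\M_{k+1})_*$ and the states $\rho_{k-1},\sigma_{k-1}\in\S(\M_k)$ (which are in general distinct, which is why we need the chain rule phrased for two arbitrary states) yields $D_\al(\rho_k\|\sigma_k)\le D_\al(\rho_{k-1}\|\sigma_{k-1})+D_\al^{\textup{reg}}(\Phi_k\|\Psi_k)$ for each $k$; iterating and using $D_\al(\rho_0\|\sigma_0)=D_\al(\rho\|\rho)=0$ gives $D_\al(\omega\|\theta)\le\sum_{i=1}^n D_\al^{\textup{reg}}(\Phi_i\|\Psi_i)$. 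Combining this with the previous display and taking the supremum over $\rho$ yields the claimed inequality.

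I do not expect any serious obstacle here, since the argument only strings together two statements that have already been established in full generality; the only points worth a careful word are degenerate cases. If $D_\al^{\textup{reg}}(\Phi_i\|\Psi_i)=+\infty$ for some $i$, or if the support condition $s(\omega)\le s(\theta)$ fails (so that $D_\al(\omega\|\theta)=+\infty$ as well), then the right-hand side is $+\infty$ and the inequality is vacuous; otherwise the telescoped divergences $D_\al(\rho_k\|\sigma_k)$ are all finite by induction, so no $\infty-\infty$ arises in the sum. I would also note that running the identical argument with $\Phi_i,\Psi_i$ replaced by $\id_\cR\otimes\Phi_i,\id_\cR\otimes\Psi_i$ and taking a supremum over the reference algebra $\cR$ produces the ``with reference'' version of the bound, which together with the additivity Theorem~\ref{rmv-reg} recovers the upper bound in Theorem~\ref{thm:aep-channel-sec} along the lines sketched in the introduction.
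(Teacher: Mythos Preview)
Your proposal is correct and follows essentially the same route as the paper: apply Lemma~\ref{lemma-1}(iii) once at the output to pass from $D_{\max}^\eps$ to $D_\al$, then iterate the chain rule Theorem~\ref{thm:chain rule1-sec} to peel off the channels one by one, use $D_\al(\rho\|\rho)=0$, and take the supremum over the input state. The telescoping notation and the remarks on degenerate cases and the reference-system variant are helpful elaborations, but the core argument is identical to the paper's.
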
\begin{proof}
Using Lemma \ref{lemma-1},
For any $\eps \in (0, 1) $ and $1<\al\le 2$,

\begin{align} \label{eq:boundmax}
D_{\max}^\eps(\rho \|\si)\le D_\al(\rho\|\si)+ \frac{1}{\al-1} \log \frac{2}{\eps^2} -\log ( \sqrt{1-\eps^2}).
\end{align}
Applying \eqref{eq:boundmax} and the chain rule Theorem \ref{thm:chain rule1-sec} sequentially we obtain that for any state $\rho\in S(\M_1)$
\begin{align*}
&D_{\max}^\eps (\Phi_n\circ\cdots\circ\Phi_1(\rho)\| \Psi_n\circ\cdots\circ\Psi_1(\rho)) \\&\leq D_\al(\Phi_n\circ\cdots\circ\Phi_1(\rho)\| \Psi_n\circ\cdots\circ\Psi_1(\rho)) + \frac{1}{\al-1} \log \frac{2}{\eps^2} -\log ( \sqrt{1-\eps^2})\\
&\leq D_\al(\Phi_{n-1}\circ\cdots\circ\Phi_1(\rho)\| \Psi_{n-1}\circ\cdots\circ\Psi_1(\rho))
+ D_\al^{\text{reg}}(\Phi_n\|\Psi_n) + \frac{1}{\al-1} \log \frac{2}{\eps^2} -\log ( \sqrt{1-\eps^2})\\
&\leq D_\al(\rho\|\rho)+ \sum_{i=1}^n  D_\al^{\text{reg}}(\Phi_i\|\Psi_i) +  \frac{1}{\al-1} \log \frac{2}{\eps^2} -\log ( \sqrt{1-\eps^2}).
\end{align*}
The assertion follows from taking the supremum over all $\rho\in S(\M_1)$ and noting that $D_\al(\rho\|\rho)=0$.
\end{proof}

\section{Application: Chain rule for the quantum relative entropy}
In this section we discuss how our AEP results can be used in quantum information theoretic tasks in the context of von Neumann algebras. The state discrimination problem has been a widely studied topic (\cite{hiai-petz, ogawa-nag, szkola, Audenaert}) in finite dimensional quantum information theory. In infinite dimension, notably in the context of von Neumann algebras, the state discrimination problem was studied in detail by \cite{jaksic}. They investigated quantum hypothesis testing beyond the i.i.d case and extended this study in scenarios such as quantum lattice systems. Note that, their work only provides estimates of first order and hence the speed of convergence of the rate of decay to its limiting value can not be directly measured. The results of \cite{ke-Li} can be directly applied to obtain the speed of convergence for the i.i.d case to the setting of quantum hypothesis testing of quantum states modelled in abstract von Neumann algebras.

The results we obtain here can be applied to the case of quantum channel discrimination. Discrimination of two quantum channels is a natural extension of the state discrimination problem. Nevertheless, despite inherent mathematical links between the channel and state discrimination problems, discrimination of channels is more involved, partially due to the fact that in this case, entanglement with an ancillary system may matter. Consider two quantum channels $\Phi$ and $\Psi$. In the setting of asymptotic channel discrimination, there are two types of strategies: \textbf{Non-adaptive strategies (also known as parallel strategies)} where  we are given access to a “black-box” for $n$ uses of a channel $\G$, which is either $\Phi$ or $\Psi$, that can be used in parallel before performing a measurement. Based on the measurement outcome, we guess whether $\G=\Phi$ or $\G=\Psi$. In the other scenario known as \textbf{Adaptive strategies (also called sequential strategies)}, here we are also given “black-box” access to $n$ uses of a channel $\G$ but unlike the previous case,  after
each use of a channel we are allowed to perform an adaptive quantum channel on the appropriate system before we perform a measurement at the end. Finally we perform a measurement on the final state arising this way after $n$ rounds and guess if $\G = \Phi$ or $\G = \Psi$.

In finite dimensional systems, Cooney \emph{et al} (\cite{CMW}) proved the quantum Stein’s lemma between an arbitrary quantum channel and a replacer channel. This work led to the conclusion that at least in the asymptotic regime, a non-adaptive strategy is optimal in the setting of Stein’s lemma. Later Wang and Wilde \cite{wang-wilde} showed that  for non-adaptive strategies, where we fix the type-I error to be bounded by $\epsilon$, the asymptotic optimal rate of the type-II error exponent is given by $D^{\textup{ref, reg}}(\Phi \| \Psi)$, when $\epsilon$ goes to 0. Here $$D^{\textup{ref, reg}}(\Phi \| \Psi):=\lim_{n\to\infty} \frac{1}{n} D^{\textup{ref}}(\Phi^{\otimes n} \| \Psi ^{\otimes n})\pl .$$
For adaptive strategies, they showed the asymptotic optimal rate is given by $D^{A}(\Phi\|\Psi)$ (see section \ref{AEP for channels}). Interestingly, in \cite{fang} it was shown that
$D^{A}(\Phi\|\Psi)=D^{\textup{ref, reg}}(\Phi \| \Psi)$, implying that adaptive strategies do not offer an advantage in the setting of Stein’s lemma for quantum channels. This was done by establishing the following  chain rule for the quantum relative entropy
\begin{align*}{D}(\Phi(\rho) \| \Psi(\si))\leq {D}(\rho \| \sigma)+{D}^{\textup{ref, reg}}(\Phi\|\Psi).\end{align*}
Although the channel discrimination problem has not yet been investigated in the setting of infinite dimensional systems, with the techniques we developed in this paper, we are able to justify the above chain rule in general von Neumann algebras, which is likely to be useful in extending the main results of \cite{CMW} and \cite{wang-wilde} to infinite dimensions.


\begin{theorem}\label{thm:application}
Let $\M,\N$ be two von Neumann algebras and $\Phi,\Psi:\M_*\to \N_*$ be two quantum channels such that $D_{\max} ^{\textup{ref}}(\Phi\|\Psi)<\infty$. Then for $\rho,\sigma\in \S(\M)$, we have
\begin{align*}{D}(\Phi(\rho) \| \Psi(\si))\leq {D}(\rho \| \sigma)+{D}^{\textup{ref, reg}}(\Phi\|\Psi)\pl.\end{align*}
\end{theorem}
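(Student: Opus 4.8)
The plan is to deduce the statement from the sandwiched R\'enyi chain rule (Theorem~\ref{thm:chain rule1-sec}) by sending $\al\downarrow 1$; the whole difficulty sits in the behaviour of the regularized channel divergence in that limit. Write $M:=2^{D_{\max}^{\textup{ref}}(\Phi\|\Psi)}<\infty$, equivalently $\Phi\le_{CP}M\Psi$; then $\Psi(\sigma)\ge M^{-1}\Phi(\sigma)$, so by linearity and the inversion rule for relative modular operators together with operator monotonicity of $\log$ (as in the proof of Proposition~\ref{lemma:appl}) one gets $D(\Phi(\rho)\|\Psi(\sigma))\le D(\Phi(\rho)\|\Phi(\sigma))+\log M\le D(\rho\|\sigma)+\log M$, so if $D(\rho\|\sigma)=\infty$ there is nothing to prove. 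Truncating $\rho$ to substates $\rho_k\le k\sigma$ (Sakai--Radon--Nikodym), applying the claimed inequality to $(\rho_k,\sigma)$, and passing to the limit via lower semicontinuity of $D$ and $D(\rho_k\|\sigma)\to D(\rho\|\sigma)$, one reduces to the case where $D_\al(\rho\|\sigma)<\infty$ for all $\al\in(1,2]$. The hypothesis also yields, via Proposition~\ref{lemma:appl} applied to $(\Phi(\omega),\Psi(\omega))$, the \emph{uniform} bound $V(\Phi(\omega)\|\Psi(\omega))\le(\log 3+\log M)^2$ over all input states $\omega$, and $D_\al^{\textup{ref}}(\Phi\|\Psi)\le\log M$, $D^{\textup{ref,reg}}(\Phi\|\Psi)\le\log M$ for every $\al>1$.

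For $\al\in(1,2]$, Theorem~\ref{thm:chain rule1-sec} gives $D_\al(\Phi(\rho)\|\Psi(\sigma))\le D_\al(\rho\|\sigma)+D_\al^{\textup{reg}}(\Phi\|\Psi)$, and since $D_\al(\cdot\|\cdot)\downarrow D(\cdot\|\cdot)$ as $\al\downarrow1$ (monotonicity and continuity in $\al$, \cite[Prop.~3.7--3.8]{jencova-1}) with all terms finite by the reduction above, it remains to prove the \textbf{key limit} $\lim_{\al\downarrow1}D_\al^{\textup{reg}}(\Phi\|\Psi)=D^{\textup{reg}}(\Phi\|\Psi)$, which is $\le D^{\textup{ref,reg}}(\Phi\|\Psi)$; only $\limsup_{\al\downarrow1}D_\al^{\textup{reg}}(\Phi\|\Psi)\le D^{\textup{reg}}(\Phi\|\Psi)$ needs an argument, the reverse being immediate from $D_\al\ge D$. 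For this I would first reduce to finite von Neumann algebras by Haagerup reduction, exactly as in the proof of Theorem~\ref{thm:chain rule1-sec}: Proposition~\ref{lemma:ultra} and Lemma~\ref{lemma:ultra1} (the latter also in its reference form, obtained by tensoring with a finite ancilla) give $D_\al^{\textup{reg}}(\Phi_{m,n}\|\Psi_{m,n})\le D_\al^{\textup{reg}}(\Phi\|\Psi)$, $D^{\textup{ref,reg}}(\Phi_{m,n}\|\Psi_{m,n})\le D^{\textup{ref,reg}}(\Phi\|\Psi)$ and $D_{\max}^{\textup{ref}}(\Phi_{m,n}\|\Psi_{m,n})\le\log M$, while the martingale-convergence and lower-semicontinuity steps of that proof reduce the chain-rule inequality on $(\M,\N)$ to the same inequality on the finite von Neumann algebras $(\M_m,\N_n)$.

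On a finite (tracial) von Neumann algebra I would adapt the finite-dimensional argument of \cite{fang}: route the channel term through the geometric R\'enyi divergence $\hat D_\al$, which by Theorem~\ref{thm:reverse} still admits, in the von Neumann setting, the description as the minimal R\'enyi divergence over state preparations, and is therefore additive under tensor powers, $\hat D_\al(\Phi^{\ten n}\|\Psi^{\ten n})=n\,\hat D_\al(\Phi\|\Psi)$; coupling this with the catalytic/telescoping construction that single-letterizes the amortized channel divergence, one bounds $D_\al^{\textup{reg}}(\Phi\|\Psi)$, up to an $O(\tfrac{\log n}{n})$ error, by a single-letter quantity whose $\al\downarrow1$ limit may be evaluated using the uniform variance bound above (crucially, no tensor power survives there, so the estimate does not degrade), and that limit is $\le D^{\textup{ref,reg}}(\Phi\|\Psi)$.

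The hard part is precisely this interchange of $\al\downarrow1$ with the regularization $\lim_n\tfrac1n$: a naive second-order expansion is useless at the regularized level, since the only bound Proposition~\ref{lemma:appl} supplies for $V(\Phi^{\ten n}(\omega)\|\Psi^{\ten n}(\omega))$ is $(\log 3+n\log M)^2$, leaving an error $\sim(\al-1)\,n\,(\log M)^2\to\infty$. Detouring through the additive $\hat D_\al$ and the amortization argument of \cite{fang} is what removes this blow-up, and I expect the bulk of the work to be (i) additivity of the geometric R\'enyi channel divergence on semi-finite von Neumann algebras from Theorem~\ref{thm:reverse}, (ii) the uniform-in-$\al$ estimates near $\al=1$, i.e.\ a geometric-R\'enyi analogue of Proposition~\ref{lemma:appl}, and (iii) verifying that the single-letterization and all the limits are compatible with the Haagerup approximations $\Phi_{m,n}\to\Phi$. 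Once the key limit is in place, combining it with $D(\Phi(\rho)\|\Psi(\sigma))=\lim_{\al\downarrow1}D_\al(\Phi(\rho)\|\Psi(\sigma))$, the $D_\al$ chain rule, and $D^{\textup{reg}}\le D^{\textup{ref,reg}}$ finishes the proof.
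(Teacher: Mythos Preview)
Your overall strategy---take $\al\downarrow 1$ in Theorem~\ref{thm:chain rule1-sec}---is natural, and you correctly isolate the obstruction: one must show $\limsup_{\al\downarrow 1}D_\al^{\textup{reg}}(\Phi\|\Psi)\le D^{\textup{ref,reg}}(\Phi\|\Psi)$, and the naive variance bound from Proposition~\ref{lemma:appl} degrades like $n^2$ under tensor powers. But your proposed fix through the geometric R\'enyi divergence does not close the gap. Even granting the additivity $\hat D_\al^{\textup{ref}}(\Phi^{\ten n}\|\Psi^{\ten n})=n\,\hat D_\al^{\textup{ref}}(\Phi\|\Psi)$ on finite von Neumann algebras, the inequality $D_\al\le\hat D_\al$ for $\al>1$ only yields
\[
\limsup_{\al\downarrow 1}D_\al^{\textup{reg}}(\Phi\|\Psi)\ \le\ \lim_{\al\downarrow 1}\hat D_\al^{\textup{ref}}(\Phi\|\Psi)\ =\ \hat D^{\textup{ref}}(\Phi\|\Psi),
\]
and $\hat D$ is the Belavkin--Staszewski relative entropy, which strictly dominates the Umegaki one for noncommuting arguments. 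Taking $\Phi,\Psi$ to be replacer channels with noncommuting outputs $\rho,\sigma$ gives $\hat D^{\textup{ref}}(\Phi\|\Psi)=\hat D(\rho\|\sigma)>D(\rho\|\sigma)=D^{\textup{ref,reg}}(\Phi\|\Psi)$, so this route overshoots the target. The ``catalytic/telescoping'' step you allude to does not repair this: telescoping gives the easy direction $D_\al^{\textup{reg}}\le D_\al^{A}$, and single-letterizing $D_\al^{A}$ via the R\'enyi chain rule just returns $D_\al^{\textup{reg}}$, so no genuinely new single-letter bound whose $\al\downarrow 1$ limit is $D^{\textup{ref,reg}}$ emerges. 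In effect the limit interchange you need is, already in finite dimensions, tantamount to the $\al=1$ chain rule itself, and the known proofs do not obtain it from the $\al>1$ version.

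The paper avoids the $\al$-limit altogether. It follows \cite{fang}: a chain rule for $D_{\max}^\eps$ (Proposition~3.2 there, which only uses the order structure on $\M_*^+$ and properties of fidelity and hence carries over verbatim), combined with the AEP-type lemma immediately preceding the proof, applied to the pair $(\Phi^{\ten n}(\nu),\Psi^{\ten n}(\nu))$. The role of the hypothesis $D_{\max}^{\textup{ref}}(\Phi\|\Psi)<\infty$ is exactly to control the constant $\mu'$ appearing in that lemma: via $\tilde D_{3/2}\le D_2\le D_{\max}$ one gets $\log\mu'=O(n)$, which after dividing by the outer tensor exponent becomes negligible, and the AEP recovers the Umegaki $D$ on all three terms. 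No detour through a R\'enyi family and no interchange of limits is needed.
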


Before proving this theorem we need a lemma which is a variant of Lemma \ref{lemma-1}. The exact error terms appearing in the right hand side are slightly different and hence we are rewriting this lemma similarly as  Lemma 3.4 in \cite{fang} in von Neumann algebras:

\begin{lemma}\label{lemma:last}
For a normal state $\rho$ and a normal positive functional $\sigma$ on a von Neumann algebra $\M$, and any $\eps\in (0,1)$ with $n\geq 2 g(\eps)$ we have
\[\frac{1}{n}D_{\max}^{\eps}(\rho^{\otimes n}\|\sigma^{\otimes n})\leq D(\rho\|\sigma)+ \frac{4 \log (\mu)\sqrt{g(\eps)}}{\sqrt{n}},\]
where $g(\eps)=\log (\frac{2}{\eps^2(1-\eps^2)})$ and $\mu=\sqrt{2^{\tilde{D}_{3/2}(\rho||\si)}+2^{-\tilde{D}_{1/2}(\rho||\si)}+1}<\infty$, where $\tilde{D}(\cdot\|\cdot)$ is the Petz-R\'enyi relative entropy. Moreover, for any sequence $\eps_n\in(0, 1)$ such that $\lim_n \eps_n<1$, we have
\[\lim_{n\to\infty} \frac{1}{n} D_{\max}^{\eps_n} (\rho^{\otimes n}\|\sigma^{\otimes n})\geq D(\rho\|\sigma).\]
\end{lemma}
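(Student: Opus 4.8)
The plan is to establish the two inequalities separately, transporting the argument of Lemma~3.4 in \cite{fang} to the von Neumann algebra setting via the relative modular operator $\Delta(\rho,\sigma)$.

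\textbf{Upper bound.} I would first record a one-shot estimate. Running the argument of Lemma~\ref{lemma-1}(ii)--(iii) with $\psi=(\rho-\lambda\sigma)_+$ and estimating $\phi_\lambda(1)=\tau((\rho-\lambda\sigma)_+)$ on the abelian subalgebra generated by $(\rho-\lambda\sigma)_+$ exactly as in the proof of Lemma~\ref{lemma-1}(iii), but keeping the Petz--R\'enyi divergence $\tilde{D}_{1+s}$ — which also obeys data processing and, crucially, is additive under tensor powers, $\tilde{D}_{1+s}(\rho^{\otimes n}\|\sigma^{\otimes n})=n\tilde{D}_{1+s}(\rho\|\sigma)$ — and using $1-\sqrt{1-\eps^2}\ge\eps^2/2$ so that $-\log(1-\sqrt{1-\eps^2})\le\log(2/\eps^2)=g(\eps)$, yields for any positive functional $\sigma$ and $s\in(0,1]$
\[ D_{\max}^{\eps}(\rho\|\sigma)\le \tilde{D}_{1+s}(\rho\|\sigma)+\frac{g(\eps)}{s}. \]
Applying this to $\rho^{\otimes n},\sigma^{\otimes n}$ and dividing by $n$ gives $\tfrac1n D_{\max}^{\eps}(\rho^{\otimes n}\|\sigma^{\otimes n})\le \tilde{D}_{1+s}(\rho\|\sigma)+\tfrac{g(\eps)}{ns}$. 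The key ingredient is then the one-shot comparison
\[ \tilde{D}_{1+s}(\rho\|\sigma)\le D(\rho\|\sigma)+4s(\log\mu)^2, \]
valid at least on the range $0<s\le(2\log\mu)^{-1}$ that matters below; granting this, the choice $s^{\star}=\tfrac{1}{2\log\mu}\sqrt{g(\eps)/n}$ makes the total correction $4s^{\star}(\log\mu)^2+\tfrac{g(\eps)}{ns^{\star}}$ equal to $\tfrac{4\log\mu\sqrt{g(\eps)}}{\sqrt n}$, and since $\mu\ge3$ forces $\log\mu>1$ the hypothesis $n\ge2g(\eps)$ guarantees $s^{\star}$ lies in the admissible range. (That $\mu<\infty$ already makes $D(\rho\|\sigma),V(\rho\|\sigma),T(\rho\|\sigma)$ finite follows since $\mu<\infty$ controls $\langle\xi_\rho|\Delta(\rho,\sigma)^{\pm1/2}|\xi_\rho\rangle$ and hence, by the estimates around Proposition~\ref{lemma:appl}, the moments of $\log\Delta(\rho,\sigma)$ in the state $\xi_\rho$.)

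\textbf{The comparison, and the main obstacle.} For the comparison I would pass to the standard form, using $\tilde{D}_{1+s}(\rho\|\sigma)=\tfrac1s\log\langle\xi_\rho|\Delta(\rho,\sigma)^{s}|\xi_\rho\rangle$, $D(\rho\|\sigma)=\langle\xi_\rho|\log\Delta(\rho,\sigma)|\xi_\rho\rangle$, and the identity
\[ \mu=\langle\xi_\rho|(\Delta(\rho,\sigma)^{1/2}+\Delta(\rho,\sigma)^{-1/2}+1)|\xi_\rho\rangle=\sqrt{2^{\tilde{D}_{3/2}(\rho\|\sigma)}}+\sqrt{2^{-\tilde{D}_{1/2}(\rho\|\sigma)}}+1. \]
Taylor-expanding $s\mapsto\log\langle\xi_\rho|\Delta(\rho,\sigma)^{s}|\xi_\rho\rangle$ at $s=0$ reduces the comparison to showing that the associated information variance and the higher-order remainder are small: the variance is $V(\rho\|\sigma)\le(\log\mu)^2$ by the concavity argument of Proposition~\ref{lemma:appl} (using $|\log t|\le2\log(t^{1/2}+t^{-1/2}+1)$ together with operator Jensen for the concave map $t\mapsto\log^2 t$ on $[e^2,\infty)$ applied to a square of $\Delta(\rho,\sigma)^{1/2}+\Delta(\rho,\sigma)^{-1/2}+1$), and on the short interval $s\le(2\log\mu)^{-1}$ the remainder should be dominated by $s\,T$-type terms, again of order $(\log\mu)^3 s=O((\log\mu)^2)$. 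I expect the delicate point — and the main obstacle — to be making this last estimate genuinely uniform in $s$ over all of $(0,(2\log\mu)^{-1}]$ with explicit constants: the straightforward bounds on the tilted moments $\langle\xi_\rho|\Delta(\rho,\sigma)^{u}(\log\Delta(\rho,\sigma))^{2}|\xi_\rho\rangle$ are either only linear in $\mu$ or blow up as $u\to0$, so one must interlace the concavity trick with the $\pm1/2$-integrability encoded in $\mu$ quite carefully. One must also track signs, since $\sigma$ is only a positive functional and $D(\rho\|\sigma)$ may be negative; this is harmless because $\mu<\infty$ forces $|D(\rho\|\sigma)|\le2\log\mu$ (Jensen applied to $t\mapsto t^{\pm1/2}$).

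\textbf{Lower bound.} Set $\eps_{\infty}=\lim_n\eps_n<1$ and $\eps'=\tfrac{1+\eps_{\infty}}{2}<1$, so that $\eps_n\le\eps'$ for all large $n$; since $D_{\max}^{(\cdot)}$ is nonincreasing in the smoothing parameter, $D_{\max}^{\eps_n}(\rho^{\otimes n}\|\sigma^{\otimes n})\ge D_{\max}^{\eps'}(\rho^{\otimes n}\|\sigma^{\otimes n})$ eventually. Fixing $\delta\in(0,1-(\eps')^{2})$ and applying \eqref{eq:dhdmax2} of Lemma~\ref{max and hyp rel} with $\sqrt{\eps}=\eps'$ gives
\[ D_{\max}^{\eps'}(\rho^{\otimes n}\|\sigma^{\otimes n})\ge D_{H}^{\,1-(\eps')^{2}-\delta}(\rho^{\otimes n}\|\sigma^{\otimes n})-\log\frac{4(1-(\eps')^{2})}{\delta^{2}}, \]
the additive constant being independent of $n$. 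Dividing by $n$ and invoking the Pautrat--Wang AEP (Theorem~\ref{Pautrat-Wang}) at the fixed parameter $1-(\eps')^{2}-\delta\in(0,1)$, which gives $\tfrac1n D_{H}^{\,1-(\eps')^{2}-\delta}(\rho^{\otimes n}\|\sigma^{\otimes n})\to D(\rho\|\sigma)$, I would conclude $\liminf_n\tfrac1n D_{\max}^{\eps_n}(\rho^{\otimes n}\|\sigma^{\otimes n})\ge D(\rho\|\sigma)$, as required.
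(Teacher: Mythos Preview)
Your overall strategy matches the paper's: the one-shot bound $D_{\max}^\eps\le D_{1+s}+g(\eps)/s$ from Lemma~\ref{lemma-1}(iii) (the paper uses the sandwiched $D_\alpha$ there and then invokes $D_\alpha\le\tilde D_\alpha$, rather than re-running the proof with Petz--R\'enyi---an inessential variation), followed by the comparison $\tilde D_{1+s}(\rho\|\sigma)\le D(\rho\|\sigma)+4s(\log\mu)^2$ and the optimizing choice of $s$; your lower bound via Lemma~\ref{max and hyp rel} and the Pautrat--Wang AEP is also exactly what the paper indicates (the paper cites Lemma~2.1 of \cite{fawzi-renner} for the same conversion through $D_H^\eps$).

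The one substantive difference is in how the comparison $\tilde D_{1+s}\le D+4s(\log\mu)^2$ is established, and the paper's route sidesteps precisely the obstacle you flag. You propose Taylor-expanding the cumulant $s\mapsto\log\langle\xi_\rho|\Delta^{s}|\xi_\rho\rangle$ and then bounding the variance and the higher-order remainder separately; as you correctly note, making the remainder uniform on $(0,(2\log\mu)^{-1}]$ with the right constants is delicate. The paper instead (following \cite{toma-thesis}) first discards the logarithm via $\log x\le x-1$, reducing the task to bounding the \emph{single} remainder $r_s(\Delta):=\Delta^s-1-s\ln\Delta$ in expectation. It then uses the elementary majorization $r_s(t)\le s_s(t):=2(\cosh(s\ln t)-1)$ together with the identity $s_s(t)\le s_s\big((\sqrt t+1/\sqrt t)^2\big)=s_{2s}(\sqrt t+1/\sqrt t)$, and applies operator Jensen for the function $u\mapsto s_{2s}(u)$---concave on $[3,\infty)$ for small $s$---to the operator $\sqrt\Delta+\sqrt\Delta^{\,-1}+1$, whose expectation in $\xi_\rho$ is exactly $\mu$. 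This yields $\langle\xi_\rho|r_s(\Delta)|\xi_\rho\rangle\le s_{2s}(\mu)\le 4s^2(\ln\mu)^2\cosh(2s\ln\mu)$, and dividing by $s\ln 2$ gives the bound with explicit constant. The symmetrization $t\mapsto\sqrt t+1/\sqrt t$ packages all the moment information into the single number $\mu$ in one stroke, so no separate analysis of variance versus higher-order terms is needed; this is what makes the constant and the range of validity come out cleanly.
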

\begin{proof}This is similar to the upper bound of Theorem \ref{thm:aep-channel-sec} ,which follows the finite-dimensional cases first proved in \cite{toma-thesis}. Using Lemma \ref{lemma-1} iii) and \ref{lemma:conti} i) that for $\gamma<\frac{1}{4}$
\begin{align*} D_{\max}^\eps(\rho^{\ten n} \|\si^{\ten n})\le& nD_{1+\gamma}(\rho\|\si)+ \frac{1}{\gamma} \log \frac{2}{\eps^2 \sqrt{1-\eps^2}} \\
\le & n\tilde{D}_{1+\gamma}(\rho\|\si)+\frac{1}{\gamma} \log \frac{2}{\eps^2 \sqrt{1-\eps^2}}\\
\le & nD(\rho\|\si)+ n\gamma (c_{\frac{1}{2}}(\rho||\sigma)^2)+\frac{1}{\gamma} \log \frac{2}{\eps^2 \sqrt{1-\eps^2}}\\
\le & nD(\rho\|\si)+ 4n\gamma \log^2(\mu)+\frac{1}{\gamma} g(\eps)
\end{align*}where in the second inequality we used again that $D_\al(\rho||\si)\leq \tilde{D}_\al(\rho||\si)$. Now choosing $\gamma=\frac{1}{\lambda \sqrt{n}}$ and optimizing for the parameter $\lambda$, we have
\[\frac{1}{n}D_{\max}^{\eps}(\rho^{\otimes n}\|\sigma^{\otimes n})\leq D(\rho\|\sigma)+ \frac{4 \log (\mu)\sqrt{g(\eps)}}{\sqrt{n}}.\]
The last assertion of this lemma follows from a straight forward generalization of \cite[Lemma 2.1]{fawzi-renner} using the relation  between the hypothesis relative entropy and max-relative entropy. We skip the details for brevity.
\end{proof}

\begin{proof}[Proof of Theorem \ref{thm:application}] 
First of all we note that Proposition 3.2 in  \cite{fang} works in this general setting without any modification, which is stated as follow:
For $\epsilon ,\epsilon' \in (0,1], m\in \mathbb{N}, \rho\in \mathcal{S}(\M)$ and $\sigma$ any positive functional on $\M$, and $\Phi,\Psi:\M_*\rightarrow \N_*$ two quantum channels, we have
\[D_{\max}^{m\eps+\sqrt{m\eps}+\eps'}(\Phi(\rho)^{\otimes m}\| \Psi(\sigma)^{\otimes m})\leq m D_{\max}^\eps(\rho\|\sigma) +
 \sup_{\nu \in  \mathcal{S}(\M)} D_{\max}^{\eps'}({\Phi(\nu)}^{\otimes m}\| {\Psi(\nu)}^{\otimes m})- m\log (1-\eps).\]
Applying the above inequality for $D_{\max}$ for the maps $\Phi^{\otimes n}, \Psi^{\otimes n}$ and $\rho^{\otimes n}, \sigma^{\otimes n}$, one have
 \begin{align}\label{eq-1}
 & D_{\max}^{m\eps+\sqrt{m\eps}+\eps'} ((\Phi^{\otimes n}(\rho^{\otimes n}))^{\otimes m}\| (\Psi^{\otimes n}(\sigma^{\otimes n}))^{\otimes m} ) \nonumber \\
 &\leq m D_{\max}^\eps(\rho^{\otimes n}\|\sigma^{\otimes n})
+ \sup_{\nu \in  \mathcal{S}(\M)} D_{\max}^{\eps'}(({\Phi^{\otimes n}(\nu))}^{\otimes m}\| {(\Psi^{\otimes n}(\nu))}^{\otimes m})- m\log (1-\eps).
 \end{align}
 Then by the Lemma \ref{lemma:last} above,
 \begin{align*}
&\frac{1}{m} D_{\max}^{\eps'}({\Phi^{\otimes n}(\nu))}^{\otimes m}\| {(\Psi^{\otimes n}(\nu))}^{\otimes m})\leq D({\Phi^{\otimes n}(\nu))}\| {(\Psi^{\otimes n}(\nu))})+ \frac{4 \log (\mu')\sqrt{g(\eps')}}{\sqrt{m}}.
 \end{align*}

Here, we need to bound the term
$\mu'=\sqrt{2^{\tilde{D}_{3/2}(\Phi^{\otimes n}(\nu)\|\Psi^{\otimes n}(\nu))}}+\sqrt{2^{-\tilde{D}_{1/2}(\Phi^{\otimes n}(\nu))\|\Psi^{\otimes n}(\nu))}}+1$, for a state  $\nu$ acting on $\M^{\otimes n}$. This follows from the properties of Petz-R\'enyi relative entropy and from the assumption that $D_{\max} ^{\textup{ref}}(\Phi\|\Psi)<\infty$.

Indeed, for the first term we use the fact that $\tilde{D}_{2-1/\alpha} (\cdot\| \cdot)\leq D_\alpha  (\cdot\| \cdot)$ for $\alpha\in (1, \infty)$ (see \cite{hiai-book}). Then
$$\tilde{D}_{3/2}(\Phi^{\otimes n}(\nu)\|\Psi^{\otimes n}(\nu))\leq {D}_{2}(\Phi^{\otimes n}(\nu)\|\Psi^{\otimes n}(\nu))\leq nD_{\max} ^{\textup{ref}}(\Phi\|\Psi)<\infty.$$
We used the fact that $D_{\max} ^{\textup{ref}}(\Phi ^{\otimes n}\|\Psi^{\otimes n})=n D_{\max} ^{\textup{ref}}(\Phi\|\Psi)$.

For the second term,  we note that $-\tilde{D}_{1/2}(\Phi^{\otimes n}(\nu)\|\Psi^{\otimes n}(\nu))\leq 0$, as the divergences are always non negative for a state  $\nu$.


The remaining steps of the proof follows similarly as in \cite[Theorem 3.5]{fang}.
\end{proof}

\section*{Conflict of interests}
All authors declare that they have no conflicts of interest.
\section*{Acknowledgement}

We would like to thank Peter Brown, Fr\'ed\'eric Dupuis, Anthony Leverrier and Uta Meyer for multiple discussions on entropy accumulation beyond the finite-dimensional case as well as Mario Berta and Marco Tomamichel for discussions about asymptotic equipartition in von Neumann algebras. MR likes to thank Chris Schafhauser for insightful discussions. OF and MR acknowledge funding from the European Research Council (ERC Grant AlgoQIP, Agreement No. 851716), from a government grant managed by the Agence Nationale de la Recherche under the Plan France 2030 with the reference ANR-22-PETQ-0009 and from the QuantERA II Programme within the European Union’s Horizon 2020 research and innovation programme under Grant Agreement No 101017733. MR is supported by the Marie Sk\l{}odowska- Curie Fellowship from
the European Union’s Horizon Research and Innovation programme, grant Agreement No.
HORIZON-MSCA-2022-PF-01 (Project number: 101108117). LG was partially supported by NSF grant DMS-2154903.

\bibliography{AEP.bib}
\bibliographystyle{plain}
\end{document}